\def\b{\beta}
\def\d{\delta}
\renewcommand{\epsilon}{\ve}
\def\ve{\varepsilon}
\def\g{\gamma}
\def\G{\Gamma}
\def\z{\zeta}
\def\th{\theta}
\def\l{\lambda}
\def\m{\mu}
\def\p{\pi}
\def\s{\sigma}
\def\t{\tau}
\newcommand{\E}{\mbox{\bf E}}
\newcommand{\Var}{\mbox{\bf Var}}
\newcommand{\sign}{\mbox{\bf sign}}
\newcommand{\pr}[2][]{\mbox{Pr}\ifthenelse{\not\equal{}{#1}}{_{#1}}{}\!\left[#2\right]}
\newcommand{\reals}{\mathbb{R}}
\newcommand{\U}{\mathcal{U}_n}
\newcommand{\I}{\mathcal{I}_n}
\newcommand{\dtv}{d_{\mathrm {TV}}}
\newcommand{\dkl}{d_{\mathrm {KL}}}
\newcommand{\dskl}{d_{\mathrm {SKL}}}
\newcommand{\abss}[1]{\left\lvert {#1} \right\rvert}
\newtheorem{theorem}{Theorem}
\newtheorem{proposition}{Proposition}
\newtheorem{remark}{Remark}
\newtheorem{lemma}{Lemma}
\newtheorem{claim}{Claim}
\newtheorem{corollary}{Corollary}
\newtheorem{definition}{Definition}
\newtheorem{question}{Question}
\newcommand{\ignore}[1]{}
\providecommand{\poly}{\operatorname*{poly}}
\newenvironment{prevproof}[2]{\noindent {\em {Proof of {#1}~\ref{#2}:}}}{$\hfill\qed$\vskip \belowdisplayskip}
\newcommand{\bg}[1]{\medskip\noindent{\bf #1}}
\definecolor{Red}{rgb}{1,0,0}
\newcommand{\oldbound}[1]{{}}
\newcommand{\goodgamma}{\chi_{\tau_2}}
\newcommand{\dm}{d_{\max}}
\DeclareMathOperator{\sech}{sech}
\title{Testing Ising Models\footnote{A preliminary version of this paper appeared in the Proceedings of the 29th Annual ACM-SIAM Symposium on Discrete Algorithms (SODA 2018).}}
\author {
  Constantinos Daskalakis\thanks{EECS, MIT. {\tt costis@mit.edu}. This work was performed when supported by a Microsoft Research Faculty Fellowship, NSF CCF-1551875, CCF-1617730, CCF-1650733, and ONR N00014-12-1-0999.}
\and
Nishanth Dikkala\thanks{EECS, MIT. {\tt nishantd@csail.mit.edu}. This work was performed when supported by NSF CCF-1551875, CCF-1617730, CCF-1650733, and ONR N00014-12-1-0999.}
\and
Gautam Kamath\thanks{Cheriton School of Computer Science, University of Waterloo. {\tt g@csail.mit.edu}. This work was performed when author was a graduate student in EECS at MIT, and supported by NSF CCF-1551875, CCF-1617730, CCF-1650733, and ONR N00014-12-1-0999.} 
}
\begin{document}
\addtocounter{page}{-2}
\maketitle
\thispagestyle{empty}
\begin{abstract}

Given samples from an unknown multivariate distribution $p$, is it possible to distinguish whether $p$ is the product of its marginals versus $p$ being far from every product distribution? Similarly, is it possible to distinguish whether $p$ equals a given distribution $q$ versus $p$ and $q$ being far from each other? These problems of testing independence and goodness-of-fit have received enormous attention in statistics, information theory, and theoretical computer science, with sample-optimal algorithms known in several interesting regimes of parameters. Unfortunately, it has also been understood that these problems become intractable in large dimensions, necessitating exponential sample complexity.

Motivated by the exponential lower bounds for general distributions as well as the ubiquity of Markov Random Fields (MRFs) in the modeling of high-dimensional distributions, we initiate the study of distribution testing on \emph{structured} multivariate distributions, and in particular the prototypical example of MRFs: {\em the Ising Model}. We demonstrate that, in this structured setting, we can avoid the curse of dimensionality, obtaining sample and time efficient testers for independence and goodness-of-fit. One of the key technical challenges we face along the way is bounding the variance of functions of the Ising model.
\end{abstract}

\newpage
\thispagestyle{empty}
\tableofcontents
\thispagestyle{empty}
\newpage

\section{Introduction}
Two of the most classical problems in Statistics are testing independence and goodness-of-fit. {\em Independence testing} is the problem of distinguishing, given samples from a multivariate distribution $p$, whether or not it is the product of its marginals. The applications of this problem abound: for example, a central problem in genetics is to test, given genomes of several individuals, whether certain single-nucleotide-polymorphisms (SNPs) are independent from each other. In anthropological studies, a question that arises over and over again is testing whether the behaviors of individuals on a social network are independent; see e.g.~\cite{ChristakisF07}. The related problem of {\em goodness-of-fit testing} is that of distinguishing, given samples from $p$, whether or not it equals a specific ``model'' $q$. This problem arises whenever one has a  hypothesis (model) about the random source generating the samples and needs to verify whether the samples conform to the hypothesis.

Testing independence and goodness-of-fit have a long history in statistics, since the early days; for some old and some more recent references see, e.g.,~\cite{Pearson00,Fisher35,RaoS81,Agresti11}. Traditionally, the emphasis has been on the asymptotic analysis of tests, pinning down their error exponents as the number of samples tends to infinity~\cite{Agresti11,TanAW10}. In the two decades or so, distribution testing  has also piqued the interest of  theoretical computer scientists, where the emphasis has been different. In contrast to much of the statistics literature, the goal has been to obtain finite sample bounds for these problems. From this vantage point, our testing problems take the following form:

\vspace{5pt}\framebox{\begin{minipage}[h]{15.3cm} {\em Goodness-of-fit (or Identity) Testing:} {Given sample access to an unknown distribution $p$ over $\Sigma^n$ and a parameter $\epsilon>0$, the goal is to distinguish with probability at least $2/3$ between $p = q$ and $d(p,q) > \epsilon$, for some specific distribution $q$, from as few samples as possible.}

\medskip {\em Independence Testing:} {Given sample access to an unknown distribution $p$ over $\Sigma^n$ and a parameter $\epsilon>0$, the goal is to distinguish with probability at least $2/3$ between $p \in {\cal I}(\Sigma^n)$ and $d(p, {\cal I}(\Sigma^n)) > \epsilon$, where ${\cal I}(\Sigma^n)$ is the set of product distributions over $\Sigma^n$, from as few samples as possible.}
\end{minipage}
} 

\medskip \noindent In these problem definitions, $\Sigma$ is some discrete alphabet, and $d(\cdot,\cdot)$ some notion of distance or divergence between distributions, such as the total variation distance or the KL divergence. As usual, $2 \over 3$ is an arbitrary choice of a constant, except that it is bounded away from $1 \over 2$. It can always be boosted to some arbitrary $1-\delta$ at the expense of a multiplicative factor of $O(\log 1/\delta)$ in the sample complexity.

For both testing problems, recent work has identified tight upper and lower bounds on their sample complexity~\cite{Paninski08,ValiantV17a,AcharyaDK15,DiakonikolasK16}: when $d$ is taken to be the total variation distance, the optimal sample complexity for both problems turns out to be $\Theta\left({|\Sigma|^{n/2} \over \epsilon^2}\right)$, i.e. exponential in the dimension. As modern applications commonly involve high-dimensional data, this curse of dimensionality makes the above testing goals practically unattainable. Nevertheless, there {\em is} a sliver of hope, and it lies with the nature of all known sample-complexity lower bounds, which construct highly-correlated distributions that are hard to distinguish from the set of independent distributions~\cite{AcharyaDK15, DiakonikolasK16}, or from a particular distribution $q$~\cite{Paninski08}. Worst-case analysis of this sort seems overly pessimistic, as these instances are unlikely to arise in real-world data. As such, we propose testing high-dimensional distributions which are \emph{structured}, and thus could potentially rule out such adversarial distributions.

Motivated by the above considerations and the ubiquity of Markov Random Fields (MRFs) in the modeling of high-dimensional distributions (see~\cite{Jordan10} for the basics of MRFs and the references \cite{SanghaviTW10,KoNS07} for a sample of applications), we initiate the study of distribution testing for the prototypical example of MRFs: {\em the Ising Model,} which captures all binary MRFs with node and edge potentials.\footnote{This follows trivially by the definition of MRFs, and elementary Fourier analysis of Boolean functions.} Recall that the Ising model is a distribution over $\{-1,1\}^n$, defined in terms of a graph $G=(V,E)$ with $n$ nodes. It is parameterized by a scalar parameter $\theta_{u,v}$ for every edge $(u,v) \in E$, and a scalar parameter $\theta_{v}$ for every node $v \in V$, in terms of which it samples a vector $x \in \{\pm 1\}^V$ with probability:
\begin{align}
p(x) = {\rm exp} \left(\sum_{v\in V} \theta_v x_v + \sum_{(u,v) \in E} \theta_{u,v} x_u x_v - \Phi(\vec \th) \right), \label{eq:ising model}
\end{align}
where $\vec \th$ is the parameter vector and $\Phi(\vec \th)$ is the log-partition function, ensuring that the distribution is normalized. Intuitively, there is a random variable $X_v$ sitting on every node of $G$, which may be in one of two states, or spins: up (+1) or down (-1). The scalar parameter $\theta_v$ models a local (or ``external'') field at node $v$. The sign of $\theta_v$ represents whether this local field favors $X_v$ taking the value $+1$, i.e. the up spin, when $\theta_v>0$, or the value $-1$, i.e. the down spin, when $\theta_v<0$, and its magnitude represents the strength of the local field. We will say a model is ``without external field'' when $\th_v = 0$ for all $v \in V$. Similarly, $\theta_{u,v}$ represents the direct interaction between nodes $u$ and $v$. Its sign represents whether it favors equal spins, when $\theta_{u,v}>0$, or opposite spins, when $\theta_{u,v}<0$, and its magnitude corresponds to the strength of the direct interaction. Of course, depending on the structure of the Ising model and the edge parameters, there may be indirect interactions between nodes, which may overwhelm local fields or direct interactions.

The Ising model has a rich history, starting with its introduction by statistical physicists as a probabilistic model to study phase transitions in spin systems~\cite{Ising25}. Since then it has found a myriad of applications in diverse research disciplines, including probability theory, Markov chain Monte Carlo, computer vision, theoretical computer science, social network analysis, game theory, and computational biology~\cite{LevinPW09,Chatterjee05,Felsenstein04,DaskalakisMR11,GemanG86,Ellison93,MontanariS10}. 
The ubiquity of these applications motivate the problem of inferring Ising models from samples, or inferring statistical properties of Ising models from samples. 
This type of problem has enjoyed much study in statistics, machine learning, and information theory, see discussion in the related work. Much of prior work has focused on \emph{parameter learning}, where the goal is to determine the parameters of an Ising model to which sample access is given.
In contrast to this type of work, which focuses on discerning \emph{parametrically} distant Ising models, our goal is to discern \emph{statistically} distant Ising models, in the hopes of dramatic improvements in the sample complexity. 
%and widening the range of parameters that can be accommodated. 
To be precise, we study the following problems:

%\footnote{\textcolor{blue}{Might be a good place to talk a bit about parameter estimation literature as well. Or should it be in a separate related work subsection? Also, given our discussion about KL not being a metric, it is not clear if parameter estimation implies testing, and even if it did, it would require lot of samples from the lemma about learning in KL.}} ~. 

\vspace{5pt}\framebox{
\noindent \begin{minipage}[h]{15.3cm} {\em Ising Model Goodness-of-fit (or Identity) Testing:} {Given sample access to an unknown Ising model $p$ (with unknown parameters over an unknown graph) and a parameter $\epsilon>0$, the goal is to distinguish with probability at least $2/3$ between $p = q$ and $\dskl(p,q) > \epsilon$, for some specific Ising model $q$, from as few samples as possible.}	

\medskip {\em Ising Model Independence Testing:} {Given sample access to an unknown Ising model $p$ (with unknown parameters over an unknown graph) and a parameter $\epsilon>0$, the goal is to distinguish with probability at least $2/3$ between $p \in \I$ and $\dskl(p, \I) > \epsilon$, where $\I$ are all product distributions over $\{-1,1\}^n$, from as few samples as possible.}
\end{minipage}
}

\smallskip \noindent We note that there are several potential notions of statistical distance one could consider --- classically, total variation distance and the Kullback-Leibler (KL) divergence have seen the most study. As our focus here is on upper bounds, we consider the symmetrized KL divergence $\dskl$, which is a ``harder'' notion of distance than both: in particular, testers for $\dskl$ immediately imply testers for both total variation distance and the KL divergence.
 Moreover, by virtue of the fact that $\dskl$ upper-bounds KL in both directions, our tests offer useful information-theoretic interpretations of rejecting a model $q$, such as data differencing and large deviation bounds in both directions. 

\subsection{Results and Techniques}
Our main result is the following:

\begin{theorem}[Informal] \label{thm:meta-upper bound} Both Ising Model Goodness-of-fit Testing and Ising Model Independence Testing can be solved from ${\rm poly}\left(n, {1 \over \epsilon}\right)$ samples in polynomial time.
\end{theorem}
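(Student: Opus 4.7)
The plan is to reduce both testing problems to estimating and comparing low-order moments of $p$, starting from a closed-form expression for the divergence. For any two Ising models $p$ and $q$ with parameter vectors $(\theta^p_v,\theta^p_{uv})$ and $(\theta^q_v,\theta^q_{uv})$, a direct computation --- in which the two log-partition contributions cancel because $\dskl$ is symmetrized --- yields
\begin{align*}
\dskl(p,q) = \sum_{v\in V}(\theta^p_v-\theta^q_v)\bigl(\E_p[x_v]-\E_q[x_v]\bigr) + \sum_{(u,v)}(\theta^p_{uv}-\theta^q_{uv})\bigl(\E_p[x_u x_v]-\E_q[x_u x_v]\bigr).
\end{align*}
Hence $\dskl$ is a bilinear form in parameter differences and first- and second-order moment differences, which makes a moment-matching strategy natural.

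For identity testing the reference model $q$ is given, so $\E_q[x_v]$ and $\E_q[x_u x_v]$ are either known or can be approximated efficiently by sampling $q$ via Glauber dynamics. The tester I propose estimates each $\E_p[x_v]$ and $\E_p[x_u x_v]$ by its empirical average over $m$ samples from $p$ and rejects if any estimated discrepancy exceeds a threshold. Under the null $p=q$ all discrepancies concentrate at zero; under the alternative, Cauchy--Schwarz applied to the identity above, together with an a priori bound $\beta$ on the magnitude of the parameters, forces at least one moment discrepancy to be $\Omega(\epsilon/(\beta n^2))$. Since each $x_v$ and $x_u x_v$ lies in $\{\pm1\}$, its empirical mean under any distribution has variance at most $1/m$, so choosing $m=\poly(n,\beta,1/\epsilon)$ and taking a Chebyshev union bound over the $O(n^2)$ moments suffices.

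For independence testing, the closest product distribution to $p$ in $\dskl$ is the product $p^I$ of its own marginals. Realizing $p^I$ as an Ising model with zero edge parameters and using the fact that $\E_p[x_v]=\E_{p^I}[x_v]$, the identity collapses to
\begin{align*}
\dskl(p,p^I) = \sum_{(u,v)}\theta^p_{uv}\bigl(\E_p[x_u x_v]-\E_p[x_u]\E_p[x_v]\bigr),
\end{align*}
so the problem reduces to testing whether every pairwise covariance vanishes. The tester then estimates each pairwise covariance (plugging empirical marginals into the product term) and rejects if any is significantly nonzero; the quantitative guarantee comes from exactly the same bilinear argument as in identity testing.

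The main obstacle I expect is the variance analysis. The per-coordinate scheme above yields a polynomial rate but with a large exponent, because of the $n^2$-sized union bound and the $1/(\beta n^2)$ gap. To trim this, one would work instead with a single aggregated statistic --- e.g.\ a degree-two $U$-statistic summing squared moment discrepancies across all edges, with one half of the sample used to form weights and the other half to evaluate them --- whose expectation tracks an $\ell_2$-surrogate for $\dskl$. Its variance involves four-point correlations under $p$, and controlling these is precisely the \emph{variance of functions of the Ising model} problem flagged in the abstract; I would attack it through Dobrushin-type contraction and coupling arguments in the high-temperature regime, which bound how far the joint statistics of $p$ drift from those of the corresponding product distribution.
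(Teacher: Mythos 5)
Your proposal is essentially the paper's own baseline ``localization'' argument: the same bilinear closed-form for $\dskl(p,q)$, the same pigeonhole step forcing some node or edge marginal to differ by $\Omega(\ve/(\beta n^2))$, and the same estimate-every-pairwise-moment-and-union-bound tester (the paper uses Chernoff rather than Chebyshev, costing only a $\log n$ rather than $n^2$ overhead, but either yields the claimed $\poly(n,1/\ve)$ bound); your closing paragraph on a global statistic with Dobrushin-based variance control is the paper's separate refinement and is not needed for this theorem. The one caveat is your suggestion to obtain $q$'s marginals by Glauber sampling, which is not efficient outside rapid-mixing regimes --- the paper instead assumes the marginals of $q$ are provided as part of its description.
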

\noindent There are several variants of our testing problems, resulting from different knowledge that the analyst may have about the structure of the graph (connectivity, density), the nature of the interactions (attracting, repulsing, or mixed), as well as the temperature (low versus high). 
We proceed to discuss our results and techniques for a number of variants, allowing us to instantiate the resulting polynomial sample complexity in the above theorem.

%\begin{enumerate}
%\smallskip \noindent {\bf \em A. Our Baseline Result.} 
\paragraph{A Baseline: Testing via Localization.}
In the least favorable setting when the analyst is oblivious to the structure of the Ising model $p$, the signs of the interactions, and their strength, the polynomial in Theorem~\ref{thm:meta-upper bound} becomes $O\left({n^4 \beta^2 + n^2 h^2 \over \epsilon^2}\right)$.
In this expression, $\beta=\max\{|\theta^p_{u,v}|\}$ for independence testing, and $\beta=\max\{|\theta^p_{u,v}|,|\theta^q_{u,v}|\}$ for goodness-of-fit testing, while $h = 0$ for independence testing, and $h = \max\{|\theta^p_u|,|\theta^q_u|\}$ for goodness-of-fit testing.
If the analyst has an upper bound on the maximum degree $\dm$, the dependence improves to $O\left({n^2 \dm^2 \beta^2 + n \dm h^2 \over \epsilon^2}\right)$, while if the analyst has an upper bound on the total number of edges $m$, then $\max\{m,n\}$ takes the role of $n \dm$ in the previous bound.
These results are summarized in Theorem~\ref{thm:localization}. 

These bounds are obtained via a simple localization argument showing that, whenever two Ising models $p$ and $q$ satisfy $\dskl(p,q)>\epsilon$, then ``we can blame it on a node or an edge;'' i.e.  there exists a node with significantly different bias under $p$ and $q$ or a pair of nodes $u, v$ whose {covariance} is significantly different under the two models. 
%It is quite intuitive that this should hold, and 
Pairwise correlation tests are a simple screening that is often employed in practice. For our setting, there is a straighforward and elegant way to show that pair-wise (and not higher-order) correlation tests suffice; see Lemma~\ref{lem:localization-skl}.
%\footnote{\cnote{This lemma is the wrong lemma. I want a lemma that explicitly says: ``if DSKL is big then there exists an edge such that... or a node such that...'' That lemma can lie in the proof of Theorem~\ref{thm:localization-product} or maybe it should be a stand-alone lemma summarizing the high level description of the algorithm in the localization secion. Maybe you can even leave the thetas in, so that then you can refer to it to motivate the improved localization algorithm.}} 
For more details about our baseline localization tester see Section~\ref{sec:localization}.

%\item 

\paragraph{Exploiting Structure: Strong Localization for Trees and Ferromagnets.}
When $p$ is a tree-structured (or forest-structured) Ising model, then independence testing can be performed computationally efficiently without any dependence on $\beta$, with an additional quadratic improvement with respect to the other parameters. 
In particular, without external fields (i.e. $\max\{|\theta_u^p|\}=0$), independence can be testing with $O({n \over \epsilon})$ samples, and this result is tight when $m = O(n)$; see Theorem~\ref{thm:forests-independence} for an upper bound and Theorem~\ref{thm:linear-lb} for a lower bound.
Interestingly, we show the dependence on $\beta$ cannot be avoided in the presence of external fields, or if we switch to the problem of identity testing; see Theorem~\ref{thm:bh-lb}. 
In the latter case of identity testing, we can at least maintain the linear dependence on $n$; see Theorem~\ref{thm:forests-identity}.
Similar results hold when $p$ is a ferromagnet, i.e. $\theta^p_{u,v} \ge 0$, with no external fields, even if it is not a tree.
In particular, the sample complexity becomes $O({\max\{m,n\} \over \epsilon})$ (which is again tight when $m = O(n)$), see Theorem~\ref{thm:ferro-independence}.

%{\bf \em Technical Discussion 2.0: ``Testing via Strong Localization.''} 
These improvements are all obtained via the same localization approach discussed earlier, which resulted into our baseline tester. 
That is, we are still going to ``blame it on a node or an edge.'' 
The removal of the $\beta$ dependence and the improved running times are due to the proof of a structural lemma, which relates the parameter $\theta_{u,v}$ on some edge $(u,v)$ of the Ising model to the $\E[X_u X_v]$. 
We show that for forest-structured Ising models with no external fields, $\E[X_u X_v] = \tanh(\theta_{u,v})$, see Lemma~\ref{lem:trees-structural}.
A similar statement holds for ferromagnets with no external field, i.e., $\E[X_u X_v] \ge \tanh(\theta_{u,v})$, see Lemma~\ref{lem:ferro-structural}.
The proof of the structural lemma for trees/forests is straightforward. 
Intuitively, the only source of correlation between the endpoints $u$ and $v$ of some edge $(u,v)$ of the Ising model is the edge itself, as besides this edge there are no other paths between $u$ and $v$ that would provide alternative avenues for correlation. 
The proof of the inequality for ferromagnets on arbitrary graphs requires some work and proceeds by using Griffiths inequality for ferromagnets.
%Now, there may be several paths between $u$ and $v$ besides the edge connecting them. 
%Of course, because the model is a ferromagnet, these paths should intuitively only contribute to increase $\E[X_u X_v]$ beyond $\tanh(\theta_{u,v})$. 
%But making this formal is not easy, as calculations involving the Ising model quickly become unwieldy beyond trees.\footnote{We note that the partition function is \#P-hard to compute~\cite{JerrumS93}.}
 %Our argument uses a coupling between (an appropriate generalization of) the Fortuin-Kasteleyn random cluster model and the Ising model. 
%The coupling provides an alternative way to sample the Ising model by first sampling a random clustering of the nodes, and then assigning uniformly random spins to the sampled clusters. 
%Moreover, it turns out that the probability that two nodes $u$ and $v$ land in the same cluster increases as the vector of parameters $\vec{\theta}$ of the Ising model increases. 
%Hence, we can work inductively. 
%If only edge $(u,v)$ were present, then $\E[X_u X_v] = \tanh(\theta_{u,v})$. 
%As we start adding edges, the probability that $u,v$ land in the same cluster increases, hence the probability that they receive the same spin increases, and therefore $\E[X_u X_v]$ increases.
%A slightly more detailed discussion of the structural result for ferromagnets is in Section~\ref{sec:ferrotech}, and 
Further details about our algorithms for trees and ferromagnets can be found in Sections~\ref{sec:forests} and~\ref{sec:ferro}, respectively.

%\item 
%\smallskip \noindent {\bf \em D. Dobrushin's Uniqueness Condition and the High-Temperature Regime.} 
\paragraph{High Temperature and Dobrushin's Condition: Low-Variance Global Statistics.}
Motivated by phenomena in the physical world, the study of Ising models has identified phase transitions in the behavior of the model as its parameters vary. A common transition occurs as the temperature of the model changes from high to low. As the parameters $\vec \theta$ correspond to inverse (individualistic) temperatures, this corresponds to a transition of these parameters from low values (high temperature) to high values (low temperature). Often the transition to high temperature is identified with the satisfaction of Dobrushin-type conditions~\cite{Georgii11}.
%\footnote{Dobrushin's condition is that $\sum_{v \in N(u)}\th_{uv} \leq 1$ for all $u$, where $N(u)$ are the neighbors of $u$.} 
Dobrushin's original theorem \cite{Dobruschin68} proposed a condition under which he proved uniqueness of an equilibrium Gibbs measure for infinite spin systems. Under his condition, finite spin Ising models also enjoy a number of good properties, including rapid mixing of the Glauber dynamics which in turn implies polynomial time sampleability, and spatial mixing properties (correlation decay). 
The Ising model has been studied extensively in such high-temperature regimes~\cite{Dobrushin56,Dobruschin68,Chatterjee05, Hayes06, DyerGJ08}, and it is a regime that is often used in practice.

In the high-temperature regime, we show that we can improve our baseline result without making ferromagnetic or tree-structure assumptions, using a non-localization based argument, explained next. 
In particular, we show in Theorem~\ref{thm:learn-and-test} that under high temperature and with no external fields independence testing can be done computationally efficiently from $\tilde{O}\left({n^{10/3}\beta^2 \over \epsilon^2 }\right)$ samples.
Since in the high-temperature regime, $\beta$ decreases as the degree increases, this is an improvement upon our baseline result in high-degree graphs.
%, which improves upon our baseline result if $\dm$ is large enough. 
%For instance, when $\dm=\Omega(n)$, the sample complexity becomes $\tilde{O}\left(n^{4/3} \over \epsilon^2\right)$. 
%Other tradeoffs between $\beta$, $\dm$ and the sample complexity are explored in Theorem~\ref{thm:learn-then-test-beta-and-d}.
Similar improvements hold when external fields are present (Theorem~\ref{thm:learn-then-test-ind-extfield-balanced}), as well as for identity testing, without and with external fields (Theorems~\ref{thm:learn-then-test-id-noextfield-balanced} and \ref{thm:learn-then-test-id-extfield-balanced}).

In the special case when the Ising model is both high-temperature \emph{and} ferromagnetic, we can use a similar algorithm to achieve a sample complexity of $\tilde O(n/\ve)$ (Theorem~\ref{thm:htferro-independence})\footnote{Prior work of~\cite{GheissariLP18} proves a qualitatively similar upper bound to ours, using a $\chi^2$-style statistic.
We show that our existing techniques suffice to give a near-optimal sample complexity.}.
This is nearly-tight for this case, as the lower bound instance of Theorem~\ref{thm:linear-lb} (which requires $\Omega(n/\ve)$ samples) is both high-temperature and ferromagnetic.
All of our results up to this point have been obtained via localization: blaming the distance of $p$ from the class of interest on a single node or edge.
Our improved bounds employ global statistics that look at all the nodes of the Ising model simultaneously. 
 %improve on the sample complexity of the bounds obtained via localization
%
%The main technical challenges we face involve dealing with statistics over the edges of an unknown Ising model, and avoiding the worst-case costs one might pay due to correlations in the random variables.
Specifically, we employ bilinear statistics of the form $Z = \sum_{e = (u,v) \in E} c_e X_u X_v$ for some appropriately chosen signs $c_e$.

The first challenge we encounter here involves selecting the signs $c_e$.
Ideally, we would like each $c_{uv}$ to match the sign of $\E[X_u X_v]$, which would result in a large gap in the expectation of $Z$ in the two cases.
% in accordance with the sign of each edge marginal's expectation, $\E[X_u X_v]$. 
%This is crucial to establish that the resulting statistic will be able to discern between the two cases. 
While we could estimate these signs independently for each edge, this would incur an unnecessary overhead of $O(n^2)$ in the number of samples. 
Instead, we try to learn signs that have a non-trivial correlation with the correct signs from fewer samples. 
Despite the $X_uX_v$ terms potentially having nasty correlations with each other, a careful analysis using anti-concentration allows us to sidestep this $O(n^2)$ cost and generate satisfactory estimates from fewer samples.

The second and more significant challenge involves bounding the variance of a statistic $Z$ of the above form.
Since $Z$'s magnitude is at most $O(n^2)$, its variance can trivially be bounded by $O(n^4)$.
However, applying this bound in our algorithm gives a vacuous sample complexity.
As the $X_u$'s will experience a complex correlation structure, it is not clear how one might arrive at non-trivial bounds for the variance of such statistics, leading to the following natural question:
\begin{question}
How can one bound the variance of statistics over high-dimensional distributions? 
\end{question}
This meta-question is at the heart of many high-dimensional statistical tasks, and we believe it is important to develop general-purpose frameworks for such settings.
In the context of the Ising model, in fairly general regimes, we can show the variance to be $O(n^2)$.
Some may consider this surprising -- stated another way, despite the complex correlations which may be present in the Ising model, the summands in $Z$ behave roughly as if they were pairwise independent.
Our approach uses tools from~\cite{LevinPW09}.
It requires a bound on the spectral gap of the Markov chain, and an expected Lipschitz property of the statistic when a step is taken at stationarity.
The technique is described in Section~\ref{sec:variance}, and the variance bounds are given in Theorems~\ref{thm:variance bound high temperature no external field yuval} and~\ref{thm:variance bound high temperature with external field yuval}. 
\paragraph{Lower Bounds.}
We present information-theoretic lower bounds on the sample complexity of testing Ising models via Le Cam's method~\cite{LeCam73}. 
%Our construction is inspired by Paninski's lower bound for uniformity testing \cite{Paninski08}, which involves pairing up domain elements and jointly perturbing their probabilities.
%This style of construction is ubiquitous in univariate testing lower bounds.
Inspired by Paninski's lower bound for univariate uniformity testing~\cite{Paninski08}, a first approach chooses a fixed matching of the nodes and randomly perturbs the weight of the edges, giving an $\Omega(\sqrt{n})$ lower bound on the sample complexity.
%, which leads to an $\Omega(\sqrt{n})$ lower bound.
%We analyze a construction of a similar nature as a warm-up for our main lower bound, while also proving a lower bound for uniformity testing on product distributions over a binary alphabet (which are a special case of the Ising model where no edges are present).%, see Theorem~\ref{thm:product-lb}.
%This gives a $\Omega(\sqrt{n})$ lower bound on the sample complexity.
Though this will be superceded by our linear lower bound, it serves as a technical warm up,
while also proving a lower bound for uniformity testing on product distributions over a binary alphabet (which are a special case of the Ising model where no edges are present).

To achieve a linear lower bound, we instead consider a \emph{random} matching of the nodes.
The analysis of this case turns out to be much more involved due to the complex structure of the probability function which corresponds to drawing $k$ samples from an Ising model on a randomly chosen matching.
Indeed, our proof turns out to have a significantly combinatorial flavor, and we believe that our techniques might be helpful for proving stronger lower bounds in combinatorial settings for multivariate distributions.
Our analysis of this construction is tight, as uniformity testing on forests can be achieved with $O(n)$ samples.
%We believe that a super-linear lower bound would be very interesting, but also quite difficult to obtain.
%Proving our linear lower bound already required a very careful analysis for a relatively simple construction, and an improved lower bound would require analyzing a distribution over dense constructions, for which an improved structural understanding is needed.

Our lower bounds are presented in Section~\ref{sec:lb}.
See Theorem~\ref{thm:product-lb} for our $\Omega(\sqrt{n})$ lower bound, Theorem~\ref{thm:linear-lb} for our $\Omega(n)$ lower bound, and Theorem~\ref{thm:bh-lb} for our lower bounds which depend on $\b$ and $h$.
%and Theorem~\ref{thm:linear-lb} for a formal statement and full analysis of our main lower bound.

%As mentioned before, we also show that the sample complexity must depend on $\b$ and $h$ in certain cases, see Theorem~\ref{thm:bh-lb} for a formal statement.
\bigskip
\noindent Table \ref{summary} summarizes our algorithmic results. 

\bgroup
\def\arraystretch{1.3}%
\begin{table*}[t] 
\centering
  \begin{tabular}{| c | c | c |}
  \hline
  {\bf Testing Problem} & {\bf No External Field} & {\bf Arbitrary External Field} \\ 
  \hline
    \begin{tabular}{@{}c@{}}\textsc{Independence} \\ using Localization\end{tabular} & $\tilde{O}\left(\frac{n^2\dm^2\b^2}{\ve^2} \right)$& $\tilde{O}\left(\frac{n^2\dm^2\b^2}{\ve^2} \right)$\\
  \hline
  \begin{tabular}{@{}c@{}}\textsc{Identity} \\ using Localization \end{tabular}
  & $\tilde{O}\left(\frac{n^2\dm^2\b^2}{\ve^2}\right)$ & $\tilde{O}\left(\frac{n^2\dm^2\b^2}{\ve^2} + \frac{n^2h^2}{\ve^2}\right)$ \\
  \hline
   \begin{tabular}{@{}c@{}}\textsc{Independence} \\ under Dobrushin/high-temperature \\ using Learn-Then-Test \end{tabular}
  & $\tilde{O}\left(\frac{n^{10/3}\b^2}{\ve^2} \right)$ & $\tilde{O}\left(\frac{n^{10/3}\b^2}{\ve^2} \right)$ \\
  \hline
   \begin{tabular}{@{}c@{}} \textsc{Identity} \\ under Dobrushin/high-temperature \\ using Learn-Then-Test \end{tabular}
  & $\tilde{O}\left(\frac{n^{10/3}\b^2}{\ve^2} \right)$ & $\tilde{O}\left(\frac{n^{11/3}\b^2}{\ve^2} + \frac{n^{5/3}h^2}{\ve^2}\right)$ \\
  \hline
   \begin{tabular}{@{}c@{}}\textsc{Independence on Forests} \\ using Improved Localization \end{tabular}
  & $\tilde{O}\left(\frac{n}{\ve} \right)$ & $\tilde{O}\left(\frac{n^2\b^2}{\ve^2} \right)$ \\
  \hline
   \begin{tabular}{@{}c@{}} \textsc{Identity on Forests} \\ using Improved Localization \end{tabular}
  & $\tilde{O}\left(\frac{n \cdot c(\b)}{\ve} \right)$ & $\tilde{O}\left(\frac{n^2\b^2}{\ve^2} +\frac{n^2h^2}{\ve^2}\right)$ \\
  \hline
   \begin{tabular}{@{}c@{}} \textsc{Independence on Ferromagnets} \\ using Improved Localization \end{tabular}
  & $\tilde{O}\left(\frac{n\dm}{\ve} \right)$ & $\tilde{O}\left(\frac{n^2\dm^2\b^2}{\ve^2} \right)$ \\
  \hline
     \begin{tabular}{@{}c@{}} \textsc{Independence on Ferromagnets} \\ under Dobrushin/high-temperature \\ using Globalization \end{tabular}
  & $\tilde{O}\left(\frac{n}{\ve} \right)$ & $\tilde{O}\left(\frac{n^{11/3}\b^2}{\ve^2} + \frac{n^{5/3}h^2}{\ve^2}\right)$ \\
  \hline
\end{tabular}
\caption{\label{summary} Summary of our results in terms of the sample complexity upper bounds for the various problems studied. $n=$ number of nodes in the graph, $\dm=$ maximum degree, $\b=$ maximum absolute value of edge parameters, $h=$ maximum absolute value of node parameters (when applicable), and $c$ is a function discussed in Theorem~\ref{thm:forests-identity}.}
\end{table*} 
\egroup

\subsection{Related Work}
The seminal works of Goldreich, Goldwasser, and Ron~\cite{GoldreichGR96}, Goldreich and Ron~\cite{GoldreichR00}, and Batu, Fortnow, Rubinfeld, Smith, and White~\cite{BatuFRSW00} initiated this study of distribution testing, viewing distributions as a natural domain for property testing (see~\cite{Goldreich17} for coverage of this much broader field).

Since these works, distribution testing has enjoyed a wealth of study, resulting in a thorough understanding of the complexity of testing many distributional properties (see e.g.~\cite{BatuFFKRW01, BatuKR04, Paninski08, AcharyaDJOP11, BhattacharyyaFRV11, Valiant11, Rubinfeld12, IndykLR12, DaskalakisDSVV13, LeviRR13, ChanDVV14, ValiantV17a, Waggoner15, AcharyaD15, AcharyaDK15, BhattacharyaV15, DiakonikolasKN15a, DiakonikolasK16, Canonne16, CanonneDGR16, BlaisCG17, BatuC17, DaskalakisKW18, DiakonikolasGPP18}, and~\cite{Rubinfeld12,Canonne15a, BalakrishnanW18, Kamath18} for recent surveys).
For many problems, these works have culminated in sample-optimal algorithms.

Motivated by lower bounds for testing properties of general discrete distributions, there has been a line of work studying distribution testing under structural restrictions.
In the univariate setting for example, one could assume the distribution is log-concave or $k$-modal.
This often allows exponential savings in the sample complexity~\cite{BatuKR04, DaskalakisDSVV13, DiakonikolasKN15a, DiakonikolasKN15b, DiakonikolasKN17, DiakonikolasKP19,CanonneKMUZ19}.

In the multivariate setting, such structural assumptions are even more important, as they allow one to bypass the curse of dimensionality incurred due to domain sizes growing exponentially in the dimension.
The present work fits into this paradigm by assuming the underlying distribution is an Ising model.
Concurrent works study distribution testing on Bayesian networks\footnote{Bayes nets are another type of graphical model, and are in general incomparable to Ising models.} \cite{CanonneDKS17, DaskalakisP17}, with more recent work on learning and testing \emph{causal} Bayesian networks with interventions~\cite{AcharyaBDK18}.
One may also consider testing problems in settings involving Markov Chains, of which there has been interest in testing standard properties as well as domain specific ones (i.e., the mixing time)~\cite{BatuFFKRW01, BhattacharyaV15, DaskalakisDG18, HsuKS15, LevinP16, HsuKLPS17,BerthetK19}.
Following the initial posting of our paper, there have also been other recent works on learning and testing Ising models, in both the statistical and structural sense~\cite{GangradeNS17, NeykovL17, DevroyeMR18a, BreslerN18, MukherjeeR19, BezakovaBCSV19}.
We highlight the last of these~\cite{BezakovaBCSV19}, which further explores the problem of identity testing.
The authors give an improved algorithm for testing ferromagnetic Ising models, and strong lower bounds for the case when one is giving weaker access to the reference distribution than we consider in this paper.
It remains to be seen which other multivariate distribution classes of interest allow us to bypass the curse of dimensionality.

Ising models, especially the problem of learning their structure (i.e., determining which edges are present in the graph), have enjoyed much study, especially in information theory~\cite{ChowL68, AbbeelKN06, CsiszarT06, RavikumarWL10, JalaliJR11, JalaliRVS11, SanthanamW12, BreslerGS14b, Bresler15, VuffrayMLC16, BreslerK16, BhattacharyaM16, MartindelCampoCU16, HamiltonKM17, KlivansM17, MukherjeeMY18, WuSD18, Bhattacharya19}.
At a first glance, one may hope that results on structure learning of Ising models have implications for testing.
However, thematic similarities aside, the two problems are qualitatively very different -- our problem focuses on statistical estimation, while theirs looks at structural estimation.
While we do not include a full comparison between the problems in this paper, we point out some qualitative differences: the complexity of structure learning is exponential in the maximum degree and $\b$, while only logarithmic in $n$.
On the other hand, for testing Ising models, the complexity has a polynomial dependence in all three parameters, which is both necessary and sufficient.

%Given the ubiquity of multivariate data in the real world, it is critical understand the rich structure of high-dimensional probability distributions.
%However, multivariate distributions pose several new technical challenges, and it is important to develop approaches which may be applicable when there may be complex correlations between random variables.

Understanding the behavior of statistics in multivariate settings (which may have complex correlations between random variables) is crucial for data analysis in high-dimensional settings.
Our approach requires us to bound the variance of bilinear statistics of the Ising model in the high-temperature regime.
As we demonstrate, despite the existence of potentially rich correlation structures, this quantity is (up to log factors) the same as if the random variables were all independent.
Following the original posting of this paper, concurrent works by the authors~\cite{DaskalakisDK17} and Gheissari, Lubetzky, and Peres~\cite{GheissariLP18} investigated additional properties of multilinear functions of the Ising model.
In particular, these works prove \emph{concentration} bounds (which are qualitatively stronger than variance bounds) for multilinear functions of arbitrary degree $d$ (rather than just bilinear functions, which are of degree $d=2$).
Further works sharpen these results both quantitatively (by narrowing the radius of concentration and weight of the tails) and qualitatively (with multilevel concentration results)~\cite{GotzeSS18,AdamczakKPS19}.
Even more recently, a line of study has investigated supervised learning problems under limited dependency between the samples (rather than the usual i.i.d.\ setting), using related techniques~\cite{DaskalakisDP19, DaganDDJ19}.

The subsequent work by Gheissari, Lubetzky, and Peres~\cite{GheissariLP18} applies their higher-order variance and concentration bounds to give an improved statistic for testing independence of high-temperature Ising models, including results for when the distribution is further known to be ferromagnetic.
In particular, their method uses a $\chi^2$-style statistic, giving rise to fourth-order multilinear terms.
Their results allow them to control the variance of this higher-order statistic.
Following their result, we show that the same result for high-temperature ferromagnets can be achieved by our bilinear statistic in Section~\ref{sec:htferro}.

\section{Preliminaries}
\label{sec:preliminaries}
%We consider the Ising model on a graph $G = (V,E)$ with $n$ nodes.
%This is a distribution over $\{\pm 1\}^n$, with a parameter vector $\vec \th \in \mathbb{R}^{|V| + |E|}$.
%$\vec \th$ has a parameter corresponding to each edge $e \in E$ and each node $v \in V$.
%The probability mass function assigned to a string $x$ is
%$$P(x) = \exp\left(\sum_{v \in V} \th_v x_v +  \sum_{e = (u,v) \in E} \th_{e} x_u x_v - \Phi(\vec \th) \right), $$
%where $\Phi(\vec \th)$ is the log-partition function for the distribution. 
Recall the definition of the Ising model from Eq.~(\ref{eq:ising model}). We will abuse notation, referring to both the probability distribution $p$ and the random vector $X$ that it samples in $\{\pm 1\}^V$ as the Ising model. That is, $X \sim p$. We will use $X_u$ to denote the variable corresponding to node $u$ in the Ising model $X$. When considering multiple samples from an Ising model $X$, we will use $X^{(l)}$ to denote the $l^{th}$ sample. We will use $h$ to denote the largest node parameter in absolute value and $\b$ to denote the largest edge parameter in absolute value. That is, $|\th_v| \leq h$ for all $v \in V$ and $|\th_e| \leq \b$ for all $e \in E$. Depending on the setting, our results will depend on $h$ and $\b$.
Furthermore, in this paper we will use the convention that $E = \{(u,v)\ |\ u,v \in V, u \neq v\}$ and $\theta_e$ may be equal to $0$, indicating that edge $e$ is not present in the graph.
We use $m$ to denote the number of edges with non-zero parameters in the graph, and $\dm$ to denote the maximum degree of a node.
Throughout this paper, we will use the notation $\m_v \triangleq \E[X_v]$ for the marginal expectation of a node $v \in V$ (also called the node marginal), and similarly $\m_{uv} \triangleq \E[X_u X_v]$ for the marginal expectation of an edge $e = (u,v) \in E$ (also called the edge marginal). 
In case a context includes multiple Ising models, we will use $\m_e^p$ to refer to the marginal expectation of an edge $e$ under the model $p$.
We will use $\U$ to denote the uniform distribution over $\{\pm 1\}^n$, which also corresponds to the Ising model with $\vec \th = \vec 0$.
Similarly, we use $\I$ for the set of all product distributions over $\{\pm 1\}^n$. 
In this paper, we will consider \emph{Rademacher} random variables, where $Rademacher(p)$ takes value $1$ with probability $p$, and $-1$ otherwise.
When $\vec{p}$ and $\vec{q}$ are vectors, we will write $\vec{p} \leq \vec{q}$ to mean that $p_i \leq q_i$ for all $i$.

\begin{definition}
In the setting with \emph{no external field}, $\th_v = 0$ for all $v \in V$.
\end{definition}

\begin{definition}
In the \emph{ferromagnetic} setting, $\th_e \geq 0$ for all $e \in E$.
\end{definition}

\begin{definition}[Dobrushin's Uniqueness Condition]
\label{def:dobrushin-new}
Consider an Ising model $p$ defined on a graph $G=(V,E)$ with $|V|=n$ and parameter vector $\vec{\th}$.
Suppose $\max_{v \in V} \sum_{u \ne v} \tanh\left(\abss{\th_{uv}}\right) \le  1-\eta$ for some constant $\eta > 0$. Then $p$ is said to satisfy Dobrushin's uniqueness condition, or be in the high temperature regime. 
Note that since $\tanh(|x|) \le |x|$ for all $x$, the above condition follows from more simplified conditions which avoid having to deal with hyperbolic functions. 
For instance, either of the conditions 
$\max_{v \in V} \sum_{u \ne v} \abss{\th_{uv}} \le  1-\eta$  or $
\b \dm \le  1-\eta$
are sufficient to imply Dobrushin's condition (where $\b = \max_{u,v} \abss{\th_{uv}}$ and $\dm$ is the maximum degree of $G$).
%Let $A = (a_{uv})$ be a $n \times n$ matrix with non-negative entries such that $a_{uu} = 0$ for all $u \in V$, and for any $u \in V$ and $x,y \in \{ \pm 1\}^V, \dtv(\mu_u(.|\bar{x}^u), \mu_u(.|\bar{y}^u)) \le \sum_{v \in V} a_{uv} \mathbbm{1}_{x_v \neq y_v}$ where $\mu_u(.|\bar{x}^u)$ represents the distribution of $X_u$ conditioned on the values of the remaining nodes being $\bar{x}^u$.
%In particular, one feasible choice of $A$ has $a_{uv} = 4|\theta_{uv}|$.
%If such a matrix $A$ exists and satisfies $\max_{u \in V} \sum_{v \in V} a_{uv} \le 1-\eta$, where $\eta>0$ is a constant,
%then the Ising model is said to satisfy \emph{Dobrushin's condition}.
\end{definition}
In general, when one refers to the temperature of an Ising model, a high temperature corresponds to small $\theta_e$ values, and a low temperature corresponds to large $\theta_e$ values.
In this paper, we will only use the precise definition as given in Definition~\ref{def:dobrushin-new}.

In Section~\ref{sec:variance} we use the Glauber dynamics on the Ising model. Glauber dynamics is the canonical Markov chain for sampling from an Ising model. The dynamics define a reversible, ergodic chain whose stationary distribution is the corresponding Ising model. We consider the basic variant known as single-site Glauber dynamics. The dynamics are defined on the set $\Sigma^n$ where $\Sigma = \{\pm 1\}$.  We also refer to this set as $\Omega$ in our work. They proceed as follows:
\begin{enumerate}
	\item Start at any state $X^{(0)} \in \Sigma^n$. Let $X^{(t)}$ denote the state of the dynamics at time $t$.
	\item Let $N(u)$ denote the set of neighbors of node $u$. Pick a node $u$ uniformly at random and update $X_u$ as follows:
	\begin{align*}
		&~~ X_u^{(t+1)} = x \quad \text{w.p. } \quad \frac{\exp\left(\th_u x + \sum_{v \in N(u)} \th_{uv}X_v^{(t)}x \right)}{\exp\left(\th_u + \sum_{v \in N(u)} \th_{uv}X_v^{(t)} \right) + \exp\left(-\th_u-\sum_{v \in N(u)} \th_{uv}X_v^{(t)} \right)} 
	\end{align*}
\end{enumerate}

\begin{remark}
\label{rm:contraction}
We note that high-temperature is not strictly needed for our results to hold -- we only need Hamming contraction of the ``greedy coupling.''
This condition implies rapid mixing of the Glauber dynamics (in $O(n \log n)$ steps) via path coupling (Theorem 15.1 of~\cite{LevinPW09}).
See~\cite{DaskalakisDK17,GheissariLP18} for further discussion of this weaker condition.
\end{remark}

Functions of the Ising model satisfying a generalized Lipschitz condition (bounded differences) have the following variance bound, which is in Chatterjee's thesis~\cite{Chatterjee05}:

\begin{lemma}[Lipschitz Concentration Lemma]
	\label{lem:lipschitz-lemma}
	Suppose that $f(X_1,\ldots,X_n)$ is a function of an Ising model in the high-temperature regime. 
	Suppose the bounded differences constants of $f$ are $l_1, l_2, \ldots, l_n$ respectively. That is,
	$\abss{f(X_1,\ldots,X_i,\ldots,X_n) - f(X_1,\ldots,X_i',\ldots,X_n)} \le l_i$ 
	for all values of $X_1,\allowbreak \ldots,\allowbreak X_{i-1},\allowbreak X_{i+1},\allowbreak \ldots,\allowbreak X_n$ and for any $X_i$ and $X_i'$. Then for some absolute constant $c_1$,
	$$\Pr\left[ \abss{f(X)-\E[f(X)]} > t\right] \le 2\exp\left(-\frac{c_1 t^2}{2\sum_{i=1}^n l_i^2} \right).$$
	In particular, for some absolute constant $c_2$,
	$\Var(f(X)) \leq c_2 \sum_{i} l_i^2$.
\end{lemma}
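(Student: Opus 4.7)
The plan is to establish both conclusions via Chatterjee's method of exchangeable pairs, specialized to the Glauber dynamics of the Ising model under Dobrushin's condition. Concretely, sample $X$ from $p$, pick a uniformly random site $I \in [n]$, and resample $X_I$ from its conditional law given $X_{-I}$ to obtain $X'$. Since Glauber dynamics is reversible with respect to $p$, the pair $(X, X')$ is exchangeable. The first step is to construct an antisymmetric function $F$ with $\E[F(X, X') \mid X] = f(X) - \E f(X)$. I would do this in the standard way: run two Glauber chains $(X_t)_{t \ge 0}$ and $(X_t')_{t \ge 0}$ started from $x$ and $x'$ under the \emph{greedy coupling} (same site selected at each step, same Metropolis randomness), and set
\[
F(x, x') = \sum_{t=0}^{\infty} \E\bigl[f(X_t) - f(X_t') \,\big|\, X_0 = x,\; X_0' = x'\bigr].
\]
Telescoping $f(X_t) - f(X_t')$ together with the fact that $X_t$ is stationary while $X_t'$ converges in law to $p$ yields the desired identity for $\E[F(X,X') \mid X]$.

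The main technical step is controlling the tail sum that defines $F$. Under Dobrushin's condition $\max_v \sum_u \tanh(|\th_{uv}|) \le 1 - \eta$, the standard path-coupling argument (Theorem~15.1 of \cite{LevinPW09}, invoked via Remark~\ref{rm:contraction}) shows that if $X_0$ and $X_0'$ differ only at site $i$, then for every $j$ and every $t \ge 0$,
\[
\Pr\bigl[X_{t,j} \ne X_{t,j}' \,\big|\, X_0, X_0'\bigr] \;\le\; (A^t)_{ij},
\]
where $A$ is the one-step disagreement/influence matrix satisfying $\|A\|_{\mathrm{op}} \le 1 - \eta/n$. Applying the Lipschitz hypothesis coordinate by coordinate gives $|\E[f(X_t) - f(X_t') \mid X_0, X_0']| \le \sum_j l_j (A^t)_{ij}$, and summing over $t$ yields $|F(x, x')| \le \sum_j l_j B_{ij}$ with $B = \sum_{t \ge 0} A^t$ and $\|B\|_{\mathrm{op}} \le n/\eta$.

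With $F$ in hand, I would invoke Chatterjee's exchangeable-pairs theorem: letting
\[
v(x) = \tfrac{1}{2}\, \E\bigl[\, (f(X) - f(X'))\, F(X, X') \,\big|\, X = x\bigr],
\]
one has $\Var(f(X)) \le \E[v(X)]$ and the subgaussian tail $\Pr[|f(X) - \E f(X)| > t] \le 2\exp\bigl(-t^2 / (2 \|v\|_\infty)\bigr)$. Because $X$ and $X'$ differ only at the uniformly chosen site $I$, we have $|f(X) - f(X')| \le l_I$, so
\[
v(x) \;\le\; \frac{1}{2n}\sum_{i} l_i \sum_{j} l_j B_{ij} \;=\; \frac{1}{2n}\langle l, B\, l\rangle \;\le\; \frac{\|B\|_{\mathrm{op}}}{2n}\sum_i l_i^2 \;\le\; \frac{1}{2\eta}\sum_i l_i^2.
\]
Plugging this bound into Chatterjee's concentration and variance inequalities gives both conclusions with $c_1, c_2$ depending only on $\eta$.

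The main obstacle I expect is the contraction/operator-norm step: a naive bound that replaces the per-coordinate disagreement probability by the aggregate expected Hamming distance produces only a $\max_i l_i^2$-type bound, which is too weak. The key is to track the disagreement site-by-site through the influence matrix $A$ so that Cauchy-Schwarz / operator-norm control of $B$ converts $\sum_i l_i (\sum_j l_j B_{ij})$ into $\sum_i l_i^2$ rather than into $(\max_j l_j)(\sum_i l_i)$. Once this is handled cleanly, the rest of the proof is the standard exchangeable-pairs machinery.
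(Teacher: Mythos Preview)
The paper does not supply its own proof of this lemma; it simply attributes the result to Chatterjee's thesis~\cite{Chatterjee05}. Your proposal is a faithful outline of exactly that exchangeable-pairs argument: form $(X,X')$ via one Glauber step, build $F$ by summing $f(X_t)-f(X_t')$ along greedily coupled chains, use Dobrushin contraction to control the series, and plug into Chatterjee's concentration theorem. So your approach coincides with the source the paper cites.

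One small point worth flagging in your write-up: Dobrushin's condition as stated in Definition~\ref{def:dobrushin-new} bounds the maximum \emph{row sum} of the influence matrix (the $\ell_\infty\to\ell_\infty$ norm), whereas your Cauchy--Schwarz step $\langle l, B l\rangle \le \|B\|_{\mathrm{op}}\|l\|_2^2$ needs the $\ell_2\to\ell_2$ operator norm. This is harmless for the Ising model because the influence matrix $\bigl(\tanh|\th_{uv}|\bigr)_{u,v}$ is symmetric and nonnegative, so its spectral radius (and hence $\ell_2$ operator norm) is bounded by its maximum row sum; but you should say so explicitly, since for an asymmetric Dobrushin matrix this passage would need more care.
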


We will use the symmetric KL divergence, defined as follows:
$$\dskl(p,q) = \dkl(p,q) + \dkl(q,p) = \E_p\left[\log \left(\frac{p}{q}\right) \right] + \E_q\left[ \log \left(\frac{q}{p}\right)\right].$$
Recall that this upper bounds the vanilla KL divergence, and by Pinsker's inequality, also upper bounds the square of the total variation distance.

The symmetric KL divergence between two Ising models $p$ and $q$ admits a very convenient expression \cite{SanthanamW12}:
\begin{eqnarray}
\label{eq:dskl}
\dskl(p, q) = \sum_{v \in V} \left(\theta^{p}_v - \theta^{q}_v\right)\left(\m_v^{p} - \m^{q}_v\right)
 + \sum_{e = (u,v) \in E} \left(\theta^{p}_e - \theta^{q}_e\right)\left(\m^{p}_e - \m^{q}_e\right).
\end{eqnarray}

\paragraph{Input to Goodness-of-Fit Testing Algorithms.}
To solve the goodness-of-fit testing or identity testing problem with respect to a discrete distribution $q$, a description of $q$ is given as part of the input along with sample access to the distribution $p$ which we are testing. 
In case $q$ is an Ising model, its support has exponential size and specifying the vector of probability values at each point in its support is inefficient. 
Since $q$ is characterized by the edge parameters between every pair of nodes and the node parameters associated with the nodes, a succinct description would be to specify the parameters vectors $\{\th_{uv}\},\{\th_u\}$. 
In many cases, we are also interested in knowing the edge and node marginals of the model. 
Although these quantities can be computed from the parameter vectors, there is no efficient method known to compute the marginals exactly for general regimes~\cite{BreslerGS14a,Montanari15}.
A common approach is to use MCMC sampling to generate samples from the Ising model. 
However, for this technique to be efficient we require that the mixing time of the Markov chain be small which is not true in general. 
Estimating and exact computation of the marginals of an Ising model is a well-studied problem but is not the focus of this paper. 
Hence, to avoid such computational complications we will assume that for the identity testing problem the description of the Ising model $q$ includes both the parameter vectors $\{\th_{uv}\},\{\th_u\}$ as well as the edge and node marginal vectors $\{\m_{uv}=\E[X_uX_v] \}, \{\m_u = \E[X_u] \}$.

\section{A General Purpose Localization Algorithm}
\label{sec:localization}
Our first algorithm is a general purpose ``localization'' algorithm.
While simple, this serves as a proof-of-concept that testing on Ising models can avoid the curse of dimensionality, while simultaneously giving a very efficient algorithm for certain parameter regimes. 
The main observation which enables us to do a localization based approach is stated in the following lemma, which allows us to ``blame'' a difference between models $p$ and $q$ on a discrepant node or edge.
\begin{lemma}
\label{lem:localization-skl}
Given two Ising models $p$ and $q$, if $\dskl(p,q) \geq \ve$, then either
\begin{itemize}
\item There exists an edge $e=(u,v)$ such that $\left| \m_{uv}^p-\m_{uv}^q\right| \geq \frac{\ve}{4m\beta}$; or
\item There exists a node $u$ such that $\left| \m_{u}^p-\m_{u}^q\right| \geq  \frac{\ve}{4nh}$.
\end{itemize}
\end{lemma}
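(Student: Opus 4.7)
The plan is to invoke the symmetric-KL decomposition for Ising models stated in Eq.~(\ref{eq:dskl}) and then run a one-line pigeonhole argument. Specifically, the identity
\begin{equation*}
\dskl(p,q) \;=\; \sum_{v \in V}\bigl(\theta^p_v-\theta^q_v\bigr)\bigl(\mu^p_v-\mu^q_v\bigr) \;+\; \sum_{e=(u,v) \in E}\bigl(\theta^p_e-\theta^q_e\bigr)\bigl(\mu^p_e-\mu^q_e\bigr)
\end{equation*}
expresses the assumed lower bound $\dskl(p,q)\ge \ve$ as a sum of $n+m$ real summands. The two possible conclusions of the lemma correspond exactly to a single node summand being at least $\ve/(2n)$ or a single edge summand being at least $\ve/(2m)$, so the entire proof reduces to an averaging argument over these $n+m$ terms, with the budget $\ve$ split equally between the node sum and the edge sum.

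Concretely, I will argue by contrapositive. Suppose every node summand satisfies $(\theta^p_v-\theta^q_v)(\mu^p_v-\mu^q_v) < \ve/(2n)$ and every edge summand satisfies $(\theta^p_e-\theta^q_e)(\mu^p_e-\mu^q_e) < \ve/(2m)$. Summing the $n$ node inequalities gives $\sum_v (\theta^p_v-\theta^q_v)(\mu^p_v-\mu^q_v) < n \cdot \ve/(2n) = \ve/2$, and summing the $m$ edge inequalities gives $\sum_e (\theta^p_e-\theta^q_e)(\mu^p_e-\mu^q_e) < m \cdot \ve/(2m) = \ve/2$. Adding these two strict inequalities and applying Eq.~(\ref{eq:dskl}) yields $\dskl(p,q) < \ve$, contradicting the hypothesis. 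Therefore at least one of the advertised conclusions must hold.

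One subtle point worth flagging is that although $\dskl(p,q)\ge 0$ always, the individual summands in the decomposition are not a priori non-negative---only the global sum is guaranteed to be non-negative by convexity of the log-partition function. This is fine for our argument, because the pigeonhole only uses the one-sided bound ``if every summand is strictly below a threshold, the signed sum is strictly below $(\text{count})\cdot(\text{threshold})$,'' which is valid regardless of signs. So no additional correlation inequality or monotonicity is needed, and there is no real obstacle in the proof beyond applying Eq.~(\ref{eq:dskl}) and this signed pigeonhole. The main content of the lemma is really the identity (\ref{eq:dskl}) itself, which is a standard exponential-family computation and is taken as given here.
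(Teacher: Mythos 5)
Your proof is correct and is essentially the paper's argument: both rest on the decomposition of Eq.~(\ref{eq:dskl}) followed by splitting the $\ve$ budget equally between the node sum and the edge sum and averaging over the $n$ (resp.\ $m$) summands; you merely phrase the pigeonhole step in contrapositive form. The remark about signed summands is a reasonable clarification but does not change the substance.
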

\begin{proof}
(\ref{eq:dskl}) implies $ \sum_{e = (u,v) \in E} \left(\theta^{p}_e - \theta^{q}_e\right)\left(\m^{p}_e - \m^{q}_e\right) \geq \ve/2$ or $\sum_{v \in V} \left(\theta^{p}_v - \theta^{q}_v\right)\left(\m_v^{p} - \m^{q}_v\right) \geq \ve/2$.
In the first case, $\exists\ e=(u,v)$ such that $\left(\th_{uv}^p - \th_{uv}^q \right) \left( \m_{uv}^p-\m_{uv}^q\right) \geq \frac{\ve}{2m}$, and the result follows since $|\theta^p_{uv} - \theta^q_{uv}| \leq 2\beta$. 
The second case is similar.
\end{proof}

Before giving a description of the localization algorithm, we state its guarantees.

\begin{theorem}
\label{thm:localization}
Given $\tilde O\left(\frac{m^2 \b^2}{\ve^2}\right)$ samples from an Ising model $p$, there exists a polynomial-time algorithm which distinguishes between the cases $p \in \I$ and $\dskl(p, \I) \geq \ve$ with probability at least $2/3$. 
Furthermore, given $\tilde O\left(\frac{m^2 \b^2}{\ve^2} + \frac{n^2 h^2}{\ve^2}\right)$ samples from an Ising model $p$ and a description of an Ising model $q$, there exists a polynomial-time algorithm which distinguishes between the cases $p = q$ and $\dskl(p,q) \geq \ve$ with probability at least $2/3$. % where $\b = \max\{ |\th_{uv}| \}$ and $h = \max\{\abss{ \th_u} \}$.
%$m$ is an upper bound on 
%The above algorithms assume that $m$, an upper bound on the number of edges, is known.
%If no upper bound is known, we may use the trivial upper bound of $\binom{n}{2}$.
%If we are given as input the maximum degree of nodes in the graph $\dm$, $m$ in the above bounds is substituted by $n\dm$.
\end{theorem}
In this theorem, $m$ is an upper bound on the number of edges: if unknown, it can be taken as the trivial bound $\binom{n}{2}$, or $n \dm$ if a bound is given on the maximum degree $\dm$.
%Note that the sample complexity achieved by the localization algorithm gets worse as the graph becomes denser. This is because as the number of possible edges in the graph grows, the contribution to the distance by any single edge grows smaller thereby making it harder to detect.\\
We describe the algorithm for independence testing in Section~\ref{subsec:localization-ind}. 
The algorithm for testing identity is similar, its description and correctness proofs are given in Section~\ref{subsec:localization-id}. 

\subsection{Independence Test using Localization}
\label{subsec:localization-ind}
%We start with a high-level description of the algorithm. 
%Given sample access to Ising model $X \sim p$, the algorithm will obtain empirical estimates of the node marginals $\m_u$ for each node $u \in V$ and edge marginals $\m_{uv}$ for each pair of nodes $(u,v)$. 
%Denote these empirical estimates by $\hat{\m}_u$ and $\hat{\m}_{uv}$ respectively. 
%Using these empirical estimates, 
Algorithm~\ref{alg:localization-product-test} computes the empirical estimate for the covariance of each pair of variables in the Ising model: $\lambda_{uv} = \m_{uv} - \m_u \m_v$  for all pairs $(u,v)$. 
If they are all close to zero, then we can conclude that $p \in \I$. 
If there exists an edge for which $\lambda_{uv}$ is far from 0, this indicates that $p$ is far from $\I$. 
%The reason for this follows from the expression Lemma \ref{lem:localization-skl} and is described in further detail in the proof of Lemma~\ref{lem:localization-product}. 
%A precise description of the test is given in in Algorithm \ref{alg:localization-product-test} and its correctness is proven via Lemmas \ref{lem:localization} and \ref{lem:localization-product}.
%This algorithm is phrased with an upper bound on the number of edges $m$, which can be replaced with $n\dm$ if a bound on the maximum degree $\dm$ is known.

\begin{algorithm}[htb]
\begin{algorithmic}[1]
\Function{LocalizationTest}{sample access to Ising model $p$, accuracy parameter $\ve, \b, m$}
\State \parbox[t]{\dimexpr\linewidth-\algorithmicindent}{Draw $k = O\left(\frac{m^2\b^2\log n}{\ve^2}\right)$ samples from $p$. Denote the samples by $X^{(1)},\ldots,X^{(k)}$.\strut}
\State \parbox[t]{\dimexpr\linewidth-\algorithmicindent}{Compute empirical estimates $\hat{\m}_u = \frac{1}{k}\sum_{i} X_u^{(i)}$ for each node $u \in V$ and $\hat{\m}_{uv} = \frac{1}{k}\sum_{i} X_u^{(i)}X_v^{(i)}$ for each pair of nodes $(u,v)$.\strut}
\State \parbox[t]{\dimexpr\linewidth-\algorithmicindent}{Using the above estimates compute the covariance estimates $\hat{\lambda}_{uv} = \hat{\m}_{uv}-\hat{\m}_u\hat{\m}_v$ for each pair of nodes $(u,v)$.\strut}
\State \parbox[t]{\dimexpr\linewidth-\algorithmicindent}{If for any pair of nodes $(u,v)$, $\abss{\hat{\lambda}_{uv}} \geq \frac{\ve}{4m\b}$ return that $\dskl(p, \I) \geq \ve$.\strut}
\State Otherwise, return that $p \in \I$.
\EndFunction
\end{algorithmic}
\caption{Test if an Ising model $p$ is product}
\label{alg:localization-product-test}
\end{algorithm}

To prove correctness of Algorithm \ref{alg:localization-product-test}, we will require the following lemma, which allows us to detect pairs $u,v$ for which $\lambda_{uv}$ is far from 0.
\begin{lemma}
\label{lem:localization}
%Let $\lambda_{uv} = \m_{uv} - \m_u \m_v$.
%Consider the algorithm which takes $O\left(\frac{\log n }{\ve^2} \right)$ samples from an Ising model $X \sim p$, computes the empirical estimate of $|\lambda_{uv}|$ for all pairs $(u,v) \in V^2$, and identifies all pairs where this value is sufficiently large. 
%Then with probability at least $9/10$, this will identify all pairs of nodes where $ $
Given $O\left(\frac{\log n }{\ve^2} \right)$ samples from an Ising model $X \sim p$, there exists an algorithm which, with probability at least $9/10$, can identify all pairs of nodes $(u,v) \in V^2$ such that $|\lambda_{uv}| \geq \ve$, where $\lambda_{uv} = \E[X_uX_v]-\E[X_u]\E[X_v]$.
Namely, the algorithm computes the empirical value of $|\lambda_{uv}|$ for each pair of nodes and identifies pairs such that this value is sufficiently far from $0$.
%There exists a similar algorithm, with sample complexity $O\left(\frac{n^2h^2 \log n }{\ve^2} \right)$ which instead identifies all $v \in V$ such that $|\m_v| \geq \frac{\ve}{h n}$, where $\m_v = \E[X_v]$.
\end{lemma}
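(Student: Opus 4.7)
The plan is to use empirical averages to estimate each relevant expectation and then apply Hoeffding's inequality together with a union bound. Concretely, draw $k = O(\log n / \ve^2)$ samples $X^{(1)}, \dots, X^{(k)}$ from $p$, and for every $u,v \in V$ form the empirical estimates
\[
\hat{\m}_u \;=\; \tfrac{1}{k}\sum_{i=1}^k X_u^{(i)}, \qquad \hat{\m}_{uv} \;=\; \tfrac{1}{k}\sum_{i=1}^k X_u^{(i)}X_v^{(i)}, \qquad \hat{\lambda}_{uv} \;=\; \hat{\m}_{uv} - \hat{\m}_u\hat{\m}_v.
\]
The algorithm then outputs the set of all pairs $(u,v)$ with $|\hat{\lambda}_{uv}| \ge \ve/2$ (say), which can clearly be done in polynomial time.

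The first main step is concentration. Since each $X_u^{(i)} \in \{\pm 1\}$ and each $X_u^{(i)}X_v^{(i)} \in \{\pm 1\}$, each empirical average is a mean of $k$ i.i.d.\ bounded random variables with range $2$. By Hoeffding's inequality, for any fixed $u$ and any fixed pair $(u,v)$,
\[
\Pr\bigl[\,|\hat{\m}_u - \m_u| > \ve/6\,\bigr] \;\le\; 2\exp(-c k \ve^2), \qquad \Pr\bigl[\,|\hat{\m}_{uv} - \m_{uv}| > \ve/6\,\bigr] \;\le\; 2\exp(-c k \ve^2),
\]
for an absolute constant $c>0$. Choosing the hidden constant in $k = O(\log n / \ve^2)$ sufficiently large and taking a union bound over the $n$ nodes and $\binom{n}{2}$ pairs, all of these inequalities hold simultaneously with probability at least $9/10$.

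The second step converts estimation accuracy for the individual moments into estimation accuracy for $\lambda_{uv}$. Using $|\m_u|, |\hat{\m}_v| \le 1$, the triangle inequality gives
\[
|\hat{\m}_u\hat{\m}_v - \m_u\m_v| \;\le\; |\hat{\m}_u|\,|\hat{\m}_v - \m_v| + |\m_v|\,|\hat{\m}_u - \m_u| \;\le\; |\hat{\m}_v - \m_v| + |\hat{\m}_u - \m_u|,
\]
so on the good event,
\[
|\hat{\lambda}_{uv} - \lambda_{uv}| \;\le\; |\hat{\m}_{uv} - \m_{uv}| + |\hat{\m}_u\hat{\m}_v - \m_u\m_v| \;\le\; \ve/2.
\]
Consequently any pair with $|\lambda_{uv}| \ge \ve$ satisfies $|\hat{\lambda}_{uv}| \ge \ve/2$ and is reported by the algorithm, which is exactly the claimed guarantee.

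There is no real obstacle here; the only mild subtlety is that the covariance is a nonlinear (product) function of empirical means, but boundedness of the spins $|X_u| \le 1$ makes the perturbation analysis trivial via the identity above. The same argument in fact yields a two-sided guarantee with a separation gap, which is what the invocation in Algorithm~\ref{alg:localization-product-test} will use (with $\ve$ replaced by $\ve/(4m\b)$, hence the sample bound $k = O(m^2\b^2\log n/\ve^2)$ used there).
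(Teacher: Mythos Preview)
Your proof is correct and follows essentially the same approach as the paper: estimate each $\m_u$ and $\m_{uv}$ by empirical averages, apply Hoeffding/Rademacher concentration with a union bound over $O(n^2)$ events, and propagate the error through the product. Your perturbation bound $|\hat\m_u\hat\m_v - \m_u\m_v| \le |\hat\m_v-\m_v| + |\hat\m_u-\m_u|$ is in fact slightly cleaner than the paper's, which incurs an additional $\ve^2$ cross-term and handles it with a separate case analysis.
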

\begin{proof}
%This lemma is a direct consequence of Lemma \ref{lem:radest}.
%We focus on the algorithm for detecting edges, the algorithm for detecting nodes follows similarly.
%Note that for any edge $e=(u,v)\in E$, $X_u X_v \sim Rademacher((1+\mu_e)/2)$. Also $X_u \sim Rademacher((1+\m_u)/2)$ and $X_v \sim Rademacher((1+\m_v)/2)$.
%We will use Lemma \ref{lem:radest} to show that $O(\log n/\ve^2)$ samples suffice to detect whether $\lambda_e=0$ or $|\lambda_e|\geq \ve$ with probability at least $1 - 1/10n^2$.
For an edge $e = (u,v)$, with $O(\log n/ \ve^2)$ samples, a Chernoff bound implies we can obtain estimates such that $|\hat{\m}_{uv} - \m_{uv}|, |\hat{\m}_u - \mu_u|, |\hat{\m}_v - \mu_v|$ are all $\leq \frac{\ve}{10}$ with probability at least $1 - 1/10n^2$, which we condition on.
%respectively such that $|\hat{\m}_{uv} - \m_{uv}| \leq \frac{\ve}{10}$, $|\hat{\m}_u - \m_u| \leq \frac{\ve}{10}$ and $|\hat{\m}_v - \m_v| \leq \frac{\ve}{10}$ with probability at least $1-1/10n^2$. 
Let $\hat{\lambda}_{uv} = \hat{\m}_{uv}-\hat{\m}_u\hat{\m}_v$. 
%Then from the above, it follows 
By triangle inequality, $|\lambda_{uv}-\hat{\lambda}_{uv}| \leq \frac{3\ve}{10} + \frac{\ve^2}{100} \leq \frac{2\ve}{5}$. 
%It can be seen that in the case when the latter term in the previous inequality dominates the first, $\ve$ is large enough that $O(\log n)$ samples suffice to distinguish the two cases. 
%In the more interesting case, $\frac{\ve^2}{100} \leq \frac{\ve}{10}$, and $|\lambda_{uv}-\hat{\lambda}_{uv}| \leq \frac{4\ve}{10}$. 
%Therefore 
If $\abss{\l_{uv}} \geq \ve$, then $\abss{\hat{\lambda}_{uv}} \geq \frac{3\ve}{5}$, and if $\abss{\l_{uv}} = 0$, then $\abss{\hat{\l}_{uv}} \leq \frac{2\ve}{5}$, so thresholding $\abss{\hat{\l}_{uv}}$ at $\frac{\ve}{2}$ lets us detect whether $\lambda_{uv}=0$ or $|\lambda_{uv}| \geq \ve$ with probability at least $1 - 1/10n^2$.
%With $O(n^{2\tau_1} \beta^2 \log n/\ve^2)$ samples we can obtain estimates of $\mu_u$ and $\mu_v$ which with probability at least $1-1/10n^2$ are accurate 
%Taking a union bound over all edges, the probability that we correctly identify all such edges is at least $9/10$. 
The result follows by a union bound.% over all edges.
\end{proof}

With this lemma in hand, we now prove the first part of Theorem~\ref{thm:localization}.
\begin{lemma}
\label{lem:localization-product}
Given $\tilde O\left(\frac{m^2 \b^2}{\ve^2}\right)$ samples from an Ising model $X \sim p$, Algorithm~\ref{alg:localization-product-test} distinguishes between the cases $p \in \I$ and $\dskl(p, \I) \geq \ve$ with probability at least $2/3$.
\end{lemma}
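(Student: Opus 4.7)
The plan is to handle completeness and soundness separately, leveraging Lemma~\ref{lem:localization-skl} together with the estimation guarantee of Lemma~\ref{lem:localization}. For completeness, suppose $p \in \I$. Then every pair of coordinates is independent, so $\lambda_{uv} = \mu_{uv} - \mu_u \mu_v = 0$ for every pair $(u,v)$. Invoking Lemma~\ref{lem:localization} with accuracy parameter $\ve' \triangleq \ve/(4 m \b)$ requires $O(\log n / \ve'^2) = O(m^2 \b^2 \log n / \ve^2)$ samples, which matches the sample budget of Algorithm~\ref{alg:localization-product-test}, and with probability at least $9/10$ yields $|\hat{\lambda}_{uv}| < \ve/(4m\b)$ for all pairs, so the algorithm correctly reports $p \in \I$.

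For soundness, suppose $\dskl(p,\I) \ge \ve$. The key idea is to choose a carefully tailored product Ising model $q$ to compare against so that Lemma~\ref{lem:localization-skl} must place the blame on an edge. Specifically, let $q$ be the product Ising model obtained by setting $\theta^q_{uv} = 0$ for every edge and $\theta^q_v = \tanh^{-1}(\mu_v^p)$ for every node, so that $\mu_v^q = \mu_v^p$ and hence $\mu_{uv}^q = \mu_u^p \mu_v^p$. Since $q \in \I$, we have $\dskl(p,q) \ge \ve$. The node term in the expression from Equation~(\ref{eq:dskl}) then vanishes identically, since $\mu_v^p - \mu_v^q = 0$ for every $v$. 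Therefore the edge term alone exceeds $\ve$, and by averaging there exists some edge $(u,v)$ with
\[
(\theta^p_{uv} - 0)\,(\mu_{uv}^p - \mu_u^p \mu_v^p) \;=\; \theta^p_{uv} \cdot \lambda_{uv} \;\ge\; \frac{\ve}{2m}.
\]
Using $|\theta^p_{uv}| \le \b$, this forces $|\lambda_{uv}| \ge \ve/(2m\b) = 2 \ve'$. Now apply Lemma~\ref{lem:localization} again with parameter $\ve'$: with probability $\ge 9/10$ the estimate satisfies $|\hat{\lambda}_{uv} - \lambda_{uv}| \le \ve'$, so $|\hat{\lambda}_{uv}| \ge \ve' = \ve/(4m\b)$ on this edge, and the algorithm correctly reports that $p$ is far from $\I$.

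Combining the two cases, the algorithm succeeds with probability at least $2/3$ (after appropriate rescaling of failure probabilities, or simply noting that in each run we invoke Lemma~\ref{lem:localization} once with success probability $9/10$), using $\tilde O(m^2 \b^2 / \ve^2)$ samples. The only subtle step in the proof is the choice of the reference product model $q$: by matching the node marginals of $p$, the node contribution to $\dskl(p,q)$ is zeroed out, which is what allows us to convert the distance lower bound directly into a lower bound on a pairwise covariance that our empirical estimator can detect. I do not expect any serious technical obstacles; the argument is a clean application of the two lemmas already established.
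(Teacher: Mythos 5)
Your proposal is correct and follows essentially the same route as the paper's proof: both compare $p$ against the product model $q$ matching $p$'s node marginals so the node term in the SKL expression vanishes, localize the remaining $\ve$ to a single edge via Lemma~\ref{lem:localization-skl}, convert the bound $\theta^p_{uv}\lambda_{uv} \ge \ve/2m$ into $|\lambda_{uv}| \ge \ve/(2m\b)$ using $|\theta^p_{uv}|\le\b$, and detect it with Lemma~\ref{lem:localization}. The only differences are immaterial constant-factor bookkeeping.
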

\begin{proof}
%We will run Algorithm~\ref{alg:localization-product-test} on all pairs $X_u,X_v$ to identify any pair such that $\abss{\lambda_{uv}}$ is large.
%Algorithm~\ref{alg:localization-product-test} runs the algorithm of Lemma~\ref{lem:localization} with parameter ``$\ve$'' as $\ve/2\b m$.
%If no such pair is identified, output that $p \in \I$, and otherwise, output that $\dskl(p, \I) \geq \ve$. 
If $p \in \I$, then $\mu^p_{uv} = \mu^p_v \mu^p_v$ and thus $\lambda_{uv} = 0$ for all edges $e = (u,v)$.
% and thus , and therefore, with probability $9/10$, there will be no edges for which the empirical estimate $|\hat \lambda_e|$ will be $\geq \frac{\ve}{4\b m}$.
On the other hand, if $\dskl(p, \I) \geq \ve$, then in particular $p$ will be far from the product distribution $q$ on $n$ nodes such that $\m_u^q = \m_u^p$ for all $u \in V$. 
By minor modifications to Lemma~\ref{lem:localization-skl} (since $\mu_u^p = \mu_u^q$ for all nodes $u$, and $\theta^q_e = 0$ for all edges $e$), there will exist $e = (u,v)$ such that $|\mu_{uv}^p - \mu_{uv}^q| = |\lambda_{uv}| \geq \frac{\ve}{2m\beta}$.
Note that Algorithm~\ref{alg:localization-product-test} simply runs the test of Lemma~\ref{lem:localization}, so the result follows by its guarantees.
%For this particular product distribution $q$, by $(\ref{eq:dskl})$, there must exist some $e^*$ such that $|\lambda_{e^*}| \geq \frac{\ve}{2\b m}$, and the algorithm will identify this edge. This is because
%\begin{align}
%\sum_{v \in V} \left(\theta^p_v - \theta^q_v\right)\left(\m_v^p - \m^q_v\right) &= 0 \label{eq:localization-1}\\
%\therefore \dskl(p,q) &\geq \ve \notag \\
%\implies &\exists e^*=(u,v) \text{ s.t } \left(\theta^p_e - \theta^q_e\right)\left(\m^p_e - \m^q_e\right) \geq \frac{\ve}{m} \label{eq:localization-2} \\
%\implies &\exists e^*=(u,v) \text{ s.t } \abss{\left(\m^p_e - \m^q_e\right)} \geq \frac{\ve}{2\b m} \label{eq:localization-3}\\
%\implies &\exists e^*=(u,v) \text{ s.t } \abss{\lambda_{e^*}} \geq \frac{\ve}{2\b m}. \notag
%\end{align}
%where~\eqref{eq:localization-1} follows because $\m_v^p = \m_v^q$ for all $v \in V$, \eqref{eq:localization-2} follows from Lemma \ref{lem:localization-skl} and~\eqref{eq:localization-3} follows because $\abss{\theta^p_e - \theta^q_e} \leq 2\b$. This completes the proof of the first part of Theorem~\ref{thm:localization}.
\end{proof}
%If one wishes to test for uniformity instead of independence, one simply runs the edge and node algorithms of Lemma $\ref{lem:localization}$, with parameters $n^{\tau_1} = m$ and $n$ respectively, and $`\ve' = \ve/2$. 
%By similar reasoning as before, if $X = \U$, all parameters must be $0$, and if $\dskl(X, \U) \geq \ve$, there must exist an edge $e^*$ such that $|\m_{e^*}| \geq \frac{\ve}{2 \b m}$ or a node $v^*$ such that $|\m_{v^*}| \geq \frac{\ve}{2 h n}$.

\subsection{Identity Test using Localization}
\label{subsec:localization-id}
Along the same lines, one can also test identity to an Ising model $q$.
In fact, this time the approach is even simpler, as we simply compare the empirical node and edge marginals from $p$ with the node and edge marginals of $q$.
As before, we can quantify the empirical estimation of these parameters using a Chernoff bound.
Since the argument is very similar to before, we omit the details, and state the guarantees of Algorithm~\ref{alg:localization-identity-test}, which allows us to conclude Theorem~\ref{thm:localization}.
\begin{lemma}
\label{lem:localization-identity}
Given $\tilde O\left(\frac{m^2 \b^2}{\ve^2} + \frac{n^2 h^2}{\ve^2}\right)$ samples from an Ising model $X \sim p$ and a description of an Ising model $q$, Algorithm~\ref{alg:localization-identity-test} distinguishes between the cases $p=q$ and $\dskl(p, q) \geq \ve$ with probability at least $2/3$.
%Furthermore, given samples from an Ising model $p$ and a description of an Ising model $q$, there exists a polynomial-time algorithm which distinguishes between the cases $p = q$ and $\dskl(p,q) \geq \ve$ with probability at least $2/3$. % where $\b = \max\{ |\th_{uv}| \}$ and $h = \max\{\abss{ \th_u} \}$.
\end{lemma}

\begin{algorithm}[htb]
\begin{algorithmic}[1]
\Function{LocalizationTestIdentity}{sample access to Ising model $X \sim p$, description of Ising model $q$, accuracy parameter $\ve$,$\b$,$h$,$m$}
\State \parbox[t]{\dimexpr\linewidth-\algorithmicindent}{Draw $k = O\left(\frac{\left(m^2\b^2+n^2h^2\right)\log n}{\ve^2}\right)$ samples from $p$. Denote the samples by $X^{(1)},\ldots,X^{(k)}$.\strut}
\State \parbox[t]{\dimexpr\linewidth-\algorithmicindent}{Compute empirical estimates $\hat{\m}_u^p = \frac{1}{k}\sum_{i} X_u^{(i)}$ for each node $u \in V$ and $\hat{\m}_{uv}^p = \frac{1}{k}\sum_{i} X_u^{(i)}X_v^{(i)}$ for each pair of nodes $(u,v)$.\strut}
\State \parbox[t]{\dimexpr\linewidth-\algorithmicindent}{If for any pair of nodes $(u,v)$, $\abss{\hat{\m}_{uv}^p-\m_{uv}^q} \geq \frac{\ve}{8m\b}$ return that $\dskl(p, q) \geq \ve$.\strut}
\State \parbox[t]{\dimexpr\linewidth-\algorithmicindent}{If for any node $u$, if $\abss{\hat{\m}_{u}^p-\m_u^q} \geq \frac{\ve}{8n h}$ return that $\dskl(p,q) \geq \ve$.\strut}
\State Otherwise, return that $p = q$.
\EndFunction
\end{algorithmic}
\caption{Test if an Ising model $p$ is identical to $q$}
\label{alg:localization-identity-test}
\end{algorithm}

%The proof of correctness of Algorithm \ref{alg:localization-identity-test} follows along the same lines as that of Algorithm \ref{alg:localization-product-test} and uses Lemma \ref{lem:localization-general}. We omit the proof here.

\section{Improved Tests for Forests and Ferromagnetic Ising Models}
\label{sec:stronglocalization}

In this section we will describe testing algorithms for two commonly studied classes of Ising models, namely forests and ferromagnets.
In these cases, the sample complexity improves compared to the baseline result when in the regime of no external field. 
The testers are still localization based (like those of Section~\ref{sec:localization}), but we can now leverage structural properties to obtain more efficient testers.

First, we consider the class of all forest structured Ising models, where the underlying graph $G=(V,E)$ is a forest. 
%Such models exhibit nice structural properties which can be exploited to obtain more efficient tests. 
%In particular, 
Under no external field, it can be shown that the edge marginals $\m_e$ have a simple closed form expression. %, which, in general are hard to compute, have a simple closed form expression. 
This structural information enables us to improve our testing algorithms from Section \ref{sec:localization} on forest graphs. 
We state the improved sample complexities here and defer a detailed description of the algorithms to Section~\ref{sec:forests}. 

\begin{theorem}[Independence testing of Forest-Structured Ising Models]
\label{thm:forests-independence}
Algorithm \ref{alg:forests-independence} takes in $\tilde{O}\left( \frac{n}{\ve}\right)$ samples from an Ising model $X \sim p$ whose underlying graph is a forest and which is under no external field and outputs whether $p \in \I$ or $\dskl(p,\I) \geq \ve$ with probability $\geq 9/10$.
\end{theorem}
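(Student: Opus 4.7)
The plan is to reduce independence testing on a forest-structured Ising model with no external field to detecting any pair of nodes whose empirical pairwise correlation is unusually large, controlling the required accuracy via the structural identity for edge marginals on trees.

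First I would exploit the symmetry $X \mapsto -X$ available under zero external field: it forces $\m_u^p = 0$ for every $u$, so the product distribution whose node marginals match those of $p$ is the uniform distribution $\U$, which is itself an Ising model with all parameters zero. Since $\dskl(p,\I)\ge\ve$ implies $\dskl(p,\U)\ge\ve$ and the node terms in the SKL expression~(\ref{eq:dskl}) vanish against $\U$, Lemma~\ref{lem:localization-skl} produces an edge $e^*=(u,v)$ with
$$\th_{e^*}^p\cdot\m_{e^*}^p \,\ge\, \frac{\ve}{2m} \,\ge\, \frac{\ve}{2(n-1)},$$
where the second inequality uses $m\le n-1$ for a forest.

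Next I would invoke the structural lemma for trees (Lemma~\ref{lem:trees-structural}) to substitute $\m_{e^*}^p = \tanh(\th_{e^*}^p)$, turning the bound into $x\tanh x\ge \ve/(2(n-1))$ at $x=\th_{e^*}^p$. A short case analysis on whether $|x|\le 1$ or $|x|\ge 1$ shows that $x\tanh x\ge c$ forces $|\tanh x|=\Omega(\min\{\sqrt c,1\})$, so there is an edge with $|\m_{e^*}^p| = \Omega(\sqrt{\ve/n})$ (assuming $\ve \le n$, otherwise the bound is trivial).

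Finally, the tester will estimate $\m_{uv}^p$ by the sample mean of $X_uX_v$ for every pair $(u,v)$ and reject $\I$ whenever any estimate exceeds a threshold of order $\sqrt{\ve/n}$. Lemma~\ref{lem:localization-general} at accuracy $\Theta(\sqrt{\ve/n})$ correctly resolves all $\binom{n}{2}$ pairs with probability at least $9/10$ using $\tilde O(n/\ve)$ samples. When $p\in\I$, every $\m_{uv}^p=\m_u^p\m_v^p=0$ and nothing is flagged; when $\dskl(p,\I)\ge\ve$, the edge produced above is flagged. The main---but still elementary---obstacle is the deterministic inversion ``$x\tanh x\ge c\Rightarrow|\tanh x|=\Omega(\sqrt c)$ for small $c$''; all other steps are clean applications of previously stated lemmas, and each runs in polynomial time.
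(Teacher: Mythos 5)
Your proposal is correct and follows essentially the same route as the paper: reduce to uniformity testing (node marginals vanish under zero external field), localize the SKL mass to a single edge via Lemma~\ref{lem:localization-skl}, use $\m_e=\tanh(\th_e)$ from Lemma~\ref{lem:trees-structural} to conclude some edge marginal is $\Omega(\sqrt{\ve/n})$, and detect it with the pairwise test at accuracy $\Theta(\sqrt{\ve/n})$ using $\tilde O(n/\ve)$ samples. The only difference is cosmetic: the paper splits into cases on whether some $|\th_e|$ exceeds $\sqrt{\ve/n}$, whereas you perform the single unified inversion $x\tanh x\ge c\Rightarrow|\tanh x|=\Omega(\min\{\sqrt{c},1\})$, which is a slightly cleaner way to package the same estimate.
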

\begin{remark}
Note that Theorem~\ref{thm:forests-independence} together with our lower bound described in Theorem~\ref{thm:linear-lb} indicate a tight sample complexity up to logarithmic factors for independence testing on forest-structured Ising models under no external field.
\end{remark}

\begin{theorem}[Identity Testing of Forest-Structured Ising Models]
\label{thm:forests-identity}
Algorithm \ref{alg:forests-identity} takes in the edge parameters of an Ising model $q$ on a forest graph and under no external field as input, and draws $\tilde{O}\left(c(\b) \frac{n}{\ve}\right)$ samples from an Ising model $X \sim p$ (where $c(\b)$ is a function of the parameter $\b$) whose underlying graph is a forest and under no external field, and outputs whether $p = q$ or $\dskl(p,q) \geq \ve$ with probability $\geq 9/10$.
\end{theorem}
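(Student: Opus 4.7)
The plan is to sharpen the general-purpose localization tester of Section~\ref{sec:localization} by plugging in the structural identity $\m_e = \tanh(\th_e)$ of Lemma~\ref{lem:trees-structural}, which holds for every true edge $e$ of a forest Ising model with no external field. Since $q$'s parameters are given, the algorithm first precomputes $\m_{uv}^q$ for every pair $(u,v)$: as $q$ is a forest, $\m_{uv}^q$ is either the product of $\tanh(\th_e^q)$ over the edges of the unique $u$-$v$ path in $q$, or $0$ if $u$ and $v$ lie in different components. It then draws $k = \tilde{O}(\cosh^4(\b)\,n/\ve)$ samples from $p$, forms the empirical edge marginals $\hat{\m}_{uv}^p$ for every pair, and outputs ``$p=q$'' iff $|\hat{\m}_{uv}^p - \m_{uv}^q| \le \tau$ for every pair at tolerance $\tau = \Theta(\sech^2(\b)\sqrt{\ve/n})$; this sets $c(\b) = \cosh^4(\b)$.

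Completeness is a direct Hoeffding bound: if $p=q$ then $\E[\hat{\m}_{uv}^p] = \m_{uv}^q$, so $k = \Omega(\log(n)/\tau^2)$ samples make every comparison succeed with probability $\ge 9/10$ after a union bound over the $\binom{n}{2}$ pairs. Soundness uses the SKL expression~\eqref{eq:dskl} with no external field, $\dskl(p,q) = \sum_e (\th_e^p - \th_e^q)(\m_e^p - \m_e^q)$; the sum is effectively over $E_p \cup E_q$, which has size at most $2(n-1)$, so some $e^*$ contributes at least $\ve/(2(n-1))$. In the matched case $e^* \in E_p \cap E_q$, both marginals are $\tanh$ of the relevant parameter, and the chain
$$\tfrac{\ve}{2(n-1)} \le (\th^p_{e^*}-\th^q_{e^*})(\tanh(\th^p_{e^*})-\tanh(\th^q_{e^*})) \le (\th^p_{e^*}-\th^q_{e^*})^2,$$
combined with $|\tanh(x)-\tanh(y)| \ge \sech^2(\b)|x-y|$ on $[-\b,\b]$, yields $|\m^p_{e^*}-\m^q_{e^*}| \ge \sech^2(\b)\sqrt{\ve/(2(n-1))}$, which exceeds $\tau$ and triggers the test. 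The gain over the baseline tester is exactly the $\sqrt{\,\cdot\,}$ instead of linear dependence, driven by the Lipschitz-plus-monotonicity squeeze on $\tanh$.

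The genuinely delicate case, and the main obstacle, is the ``mismatched'' case $e^* \in E_p \triangle E_q$: the structural identity now pins down only one of $\m_{e^*}^p, \m_{e^*}^q$, while the other is a product of $\tanh$'s along whichever path, in the forest that still connects $u$ and $v$, happens to exist. I plan to handle this by one of two equivalent routes. The direct route: if $e^* = (u,v) \in E_q \setminus E_p$ and $u,v$ are disconnected in $p$, then $\m^p_{e^*} = 0$ and $\th^q_{e^*}\tanh(\th^q_{e^*}) \ge \ve/(2(n-1))$ immediately gives $|\m^p_{e^*} - \m^q_{e^*}| = |\tanh(\th^q_{e^*})| \ge \sech(\b)\sqrt{\ve/(2(n-1))}$; if instead $u,v$ are connected in $p$ by a path $u=x_0,x_1,\dots,x_k=v$, I rewrite $\m^q_{e^*}-\m^p_{e^*}$ as a telescoping sum of products of $\tanh$'s differing in one coordinate, so that some single-edge difference along the path (or the edge $e^*$ itself) realizes an $\Omega(\sech^2(\b)\sqrt{\ve/n})$ gap and is flagged in one of the per-pair tests. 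The cleaner packaging is to prove the aggregate inequality $\dskl(p,q) \le c(\b) \sum_{(u,v)}(\m^p_{uv}-\m^q_{uv})^2$ directly, which forces some pair to have discrepancy $\Omega(\sech^2(\b)\sqrt{\ve/n})$ at once. Either route completes soundness, and together with the Chernoff tolerance $\tau$ this gives the claimed $\tilde{O}(c(\b)\,n/\ve)$ sample complexity.
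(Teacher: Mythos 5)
Your overall strategy --- localization via the SKL decomposition combined with the structural identity $\m_e=\tanh(\th_e)$ --- is the same as the paper's, and on the ``matched'' pairs (edges present in both forests) your argument is correct and in fact cleaner than the paper's: the mean-value bound $|\tanh x-\tanh y|\ge\sech^2(\b)|x-y|$ on $[-\b,\b]$ replaces the paper's two-case Taylor split, and your insistence on comparing $\hat{\m}^p_{uv}$ against the true path-product $\m^q_{uv}$ (rather than against $\tanh(\th^q_{uv})$ for every pair) is what makes the completeness direction actually go through. Your $c(\b)=\cosh^4(\b)$ matches the paper's $c_1(\b)$.

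The gap is exactly where you flagged it: the mismatched case $e^*\in E_p\,\triangle\,E_q$ is not proved, and the ``direct route'' you sketch does not close it. If $e^*=(u,v)\in E_q\setminus E_p$ with $u,v$ connected in $p$, the localization inequality gives only $\th^q_{e^*}\bigl(\m^q_{e^*}-\m^p_{e^*}\bigr)\ge\ve/2(n-1)$, where $\m^p_{e^*}$ is a path product; since $|\th^q_{e^*}|$ may be as large as $\b$, this yields only $|\m^p_{e^*}-\m^q_{e^*}|\ge\ve/(2(n-1)\b)$, which for large $\b$ is far below your acceptance threshold $\tau=\Theta(\sech^2(\b)\sqrt{\ve/n})$, so the pair $e^*$ itself need not be flagged by your test. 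To rescue the argument you must show that the discrepancy surfaces at some \emph{other} pair, which is a global statement about the two forests --- essentially your asserted aggregate inequality $\dskl(p,q)\le c(\b)\sum_{u\neq v}(\m^p_{uv}-\m^q_{uv})^2$ --- and neither that inequality nor the telescoping is actually established (the telescoping is not even well-defined as stated, since $\m^q_{e^*}$ is a single $\tanh$ while $\m^p_{e^*}$ is a product over a different index set). For what it is worth, the paper's own proof silently elides the same case: it applies $\m_{uv}=\tanh(\th_{uv})$ to both models at the localized pair, which is valid only when that pair is an edge of both forests or its endpoints are disconnected in both. So you have correctly isolated the real difficulty, but your proposal identifies it without resolving it.
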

Note that for identity testing, any algorithm necessarily has to have at least a $\b$ dependence due to the lower bound we show in Theorem \ref{thm:bh-lb}.\\

The second class of Ising models we consider this section are ferromagnets. 
For a ferromagnetic Ising model, $\th_{uv} \geq 0$ for every pair of nodes $u,v$. 
Ferromagnets may potentially contain cycles but since all interactions are ferromagnetic, the marginal of every edge is at least what it would have been if it was a lone edge. 
%This intuitive property turns out to be surprisingly difficult to prove in a direct way. 
%We prove this structural property using an alternative view of the Ising model density which comes from the Fortuin-Kasteleyn random cluster model. 
We prove this with an application of Griffiths inequality, and using this structural property, we give a quadratic improvement in the dependence on parameter $m$ for testing independence under no external field. 
We state our main result in this regime here and a full description of the algorithm and the structural lemma are provided in Section~\ref{sec:ferro}.

\begin{theorem}[Independence Testing of Ferromagnetic Ising Models]
\label{thm:ferro-independence}
Algorithm \ref{alg:ferro} takes in $\tilde{O}\left( \frac{n\dm}{\ve}\right)$ samples from a ferromagnetic Ising model $X \sim p$ which is under no external field and outputs whether $p \in \I$ or $\dskl(p,\I) \geq \ve$ with probability $\geq 9/10$.
\end{theorem}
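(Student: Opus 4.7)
The plan is to run essentially the same localization test as Algorithm~\ref{alg:localization-product-test}, but tighten the detection threshold by invoking the structural inequality $\m_e^p \ge \tanh(\th_e^p)$ for ferromagnets under no external field (Lemma~\ref{lem:ferro-structural}). The gain is that this replaces the naive bound $|\m_e^p|\ge \ve/(2m\b)$ used in Theorem~\ref{thm:localization}, which carries a factor of $\b$ and is quadratic in $\ve$, by a $\b$-free bound of order $\sqrt{\ve/m}$. This saves roughly a factor of $m\b^2/\ve$ in the sample complexity and also removes the $\b$ dependence entirely.

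\noindent First, I would reduce to a pairwise detection problem. Because $p$ has no external field, the $x\mapsto -x$ symmetry of the density~(\ref{eq:ising model}) forces $\m_u^p=0$ for every $u\in V$, so the uniform distribution $\U$ lies in $\I$ and matches every node marginal of $p$. If $\dskl(p,\I)\ge \ve$, then in particular $\dskl(p,\U)\ge \ve$; since the node contributions to the expression (\ref{eq:dskl}) vanish, Lemma~\ref{lem:localization-skl} applied to $(p,\U)$ produces an edge $e^*=(u,v)$ with
\[ \th_{e^*}^p \cdot \m_{e^*}^p \ \ge\ \frac{\ve}{2m}, \]
where both factors are nonnegative by ferromagneticity.

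\noindent Next, I would convert this product lower bound into an outright lower bound on $\m_{e^*}^p$ via Lemma~\ref{lem:ferro-structural}, using a short two-case analysis. If $\th_{e^*}^p \ge 1$, then $\m_{e^*}^p \ge \tanh(1) = \Omega(1)$. If $\th_{e^*}^p \in (0,1]$, then since $\tanh(x)/x$ is decreasing on $(0,1]$ we have $\tanh(\th_{e^*}^p) \ge \tanh(1)\cdot\th_{e^*}^p$, hence $\m_{e^*}^p \ge c\,\th_{e^*}^p$ with $c=\tanh(1)$; combining with $\th_{e^*}^p \m_{e^*}^p \ge \ve/(2m)$ yields $(\m_{e^*}^p)^2 \ge c\,\ve/(2m)$, i.e.\ $\m_{e^*}^p = \Omega(\sqrt{\ve/m})$. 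Either way, at least one edge has $\m_{e^*}^p$ exceeding a threshold $\tau = \Theta(\sqrt{\ve/m})$, while $p\in\I$ would force $\m_{uv}^p=0$ for every pair.

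\noindent Finally, the tester draws samples, forms empirical edge marginals $\hat\m_{uv}$ for every pair $(u,v)$, and accepts iff $|\hat\m_{uv}| < \tau/2$ for all pairs. By Lemma~\ref{lem:localization-general} applied with the constant $c=0$ and accuracy parameter $\Theta(\sqrt{\ve/m})$, $O((m/\ve)\log n)$ samples suffice to correctly classify all $\binom{n}{2}$ pairs simultaneously with probability at least $9/10$. Substituting the bound $m \le n\dm/2$ (which is what is actually known from $\dm$) yields the stated $\tilde O(n\dm/\ve)$ sample complexity. The only non-routine input is the structural Lemma~\ref{lem:ferro-structural}, which is where the real work lies (it is proved via a coupling with the Fortuin--Kasteleyn random cluster model as sketched in Section~\ref{sec:ferrotech}); modulo that lemma, the present argument is a tighter rerun of the generic localization tester, with the threshold calibrated to the ferromagnetic structural bound rather than to $\ve/(m\b)$.
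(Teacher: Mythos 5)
Your proof is correct and follows essentially the same route as the paper's: localize via Lemma~\ref{lem:localization-skl} to an edge with $\th_{e^*}\m_{e^*}\ge \ve/(2m)$, invoke the ferromagnetic structural bound $\m_e\ge\tanh(\th_e)$ to convert this into $\m_{e^*}=\Omega(\sqrt{\ve/m})$, and detect it with the $\tilde O(m/\ve)=\tilde O(n\dm/\ve)$-sample localization test. The only (cosmetic) difference is that you split cases on $\th_{e^*}\gtrless 1$ for the single blamed edge, whereas the paper splits on whether any edge parameter exceeds the detection threshold itself; your version also makes explicit the substitution $m\le n\dm/2$ that reconciles the proof with the stated bound.
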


\subsection{Improved Algorithms for Independence and Identity Testing on Forests}
\label{sec:forests}

%Before we present the improved algorithms, we will prove the following fact about the edge marginals of an arbitrary Ising model with no external field where the underlying graph is a forest. 
%This result was known prior to this work by the community but we couldn't find a proof of the same, hence we provide our own proof of the lemma in Appendix \ref{sec:treemarginalAppendix}.
We use the following folklore fact on the edge marginals of forest-structured Ising models; for completeness, we provide a proof in Appendix~\ref{sec:treemarginalAppendix}.
\begin{lemma}[Structural Lemma for Forest-Structured Ising Models] \label{lem:trees-structural}
If $p$ is an Ising model on a forest graph with no external field, and $X \sim p$, then for any $(u,v) \in E$, %we have
%\begin{equation*}
$\E\left[ X_u X_v\right] = \tanh(\th_{uv})$.
%\end{equation*}
\end{lemma}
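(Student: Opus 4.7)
The plan is to exploit the spin-flip symmetry of the Ising model with no external field together with the tree structure. Since $G$ is a forest, deleting the edge $(u,v)$ disconnects it into two subgraphs (plus possibly other connected components that play no role): a subtree $T_u$ containing $u$ and a subtree $T_v$ containing $v$. Components that contain neither $u$ nor $v$ factor out of $\E[X_u X_v]$ entirely, so we may assume without loss of generality that $G = T_u \cup \{(u,v)\} \cup T_v$.

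First I would write the joint density in the factored form
\begin{equation*}
p(x) \;\propto\; \exp\!\Bigl(\theta_{uv} x_u x_v\Bigr)\,\exp\!\Bigl(\sum_{e \in T_u}\theta_e x_a x_b\Bigr)\,\exp\!\Bigl(\sum_{e \in T_v}\theta_e x_a x_b\Bigr),
\end{equation*}
and then compute the marginal of $(X_u, X_v)$ by summing out the remaining spins. Summing out the spins of $T_u$ with $x_u$ held fixed produces a ``conditional partition function'' $Z_u(x_u)$, and analogously a $Z_v(x_v)$ for $T_v$. The key observation is the spin-flip symmetry: because there is no external field, the Hamiltonian on $T_u$ depends only on products $x_a x_b$, so flipping every spin in $T_u$ (including $u$) is a measure-preserving bijection on the configuration space. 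Hence $Z_u(+1) = Z_u(-1)$, and likewise $Z_v(+1) = Z_v(-1)$. Call these common values $Z_u$ and $Z_v$.

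With this symmetry in hand, the marginal collapses to
\begin{equation*}
\Pr[X_u = x_u,\, X_v = x_v] \;=\; \frac{Z_u Z_v \, e^{\theta_{uv} x_u x_v}}{\sum_{s,t \in \{\pm 1\}} Z_u Z_v \, e^{\theta_{uv} st}} \;=\; \frac{e^{\theta_{uv} x_u x_v}}{2\bigl(e^{\theta_{uv}} + e^{-\theta_{uv}}\bigr)},
\end{equation*}
so the product $X_u X_v$ is itself a $\{\pm 1\}$-valued random variable with $\Pr[X_u X_v = 1] = e^{\theta_{uv}}/(e^{\theta_{uv}} + e^{-\theta_{uv}})$. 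Taking the expectation yields
\begin{equation*}
\E[X_u X_v] \;=\; \frac{e^{\theta_{uv}} - e^{-\theta_{uv}}}{e^{\theta_{uv}} + e^{-\theta_{uv}}} \;=\; \tanh(\theta_{uv}),
\end{equation*}
as desired.

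There is no real obstacle here — the argument is essentially a symmetry-plus-factorization computation and the only care needed is in isolating the two subtrees meeting at the edge $(u,v)$ and invoking the zero-external-field spin-flip symmetry on each subtree independently. The proof would not go through without the forest assumption, because any additional $u$–$v$ path would couple the two subtrees and destroy the factorization.
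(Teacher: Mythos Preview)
Your proof is correct and takes essentially the same approach as the paper: both arguments exploit the forest structure to factorize the density across the edge $(u,v)$ and then invoke spin-flip symmetry (valid because there is no external field) on each subtree separately. The only cosmetic difference is that the paper establishes the ratio $\Pr[X_u=1,X_v=1]=e^{2\theta_{uv}}\Pr[X_u=1,X_v=-1]$ and computes $\E[X_uX_v]$ from that, whereas you compute the $(X_u,X_v)$ marginal directly via the conditional partition functions $Z_u(\pm 1)=Z_v(\pm 1)$; both routes are the same idea and yield the same one-line $\tanh$ computation.
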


Given the above structural lemma, we give the following localization-style algorithm for testing independence on forest Ising models under no external field.

\begin{algorithm}[htb]
\begin{algorithmic}[1]
\Function{TestForestIsing-Product}{sample access to Ising model $p$}
\State \parbox[t]{\dimexpr\linewidth-\algorithmicindent}{Run the algorithm of Lemma \ref{lem:localization} to identify all edges $e=(u,v)$ such that $\abss{\E[X_uX_v]} \geq \sqrt{\frac{\ve}{n}}$ using $\tilde{O}\left(\frac{n}{\ve}\right)$ samples.
If it identifies any edges, return that $\dskl(p, \I) \geq \ve$.\strut}
\State Otherwise, return that $p$ is product.
\EndFunction
\end{algorithmic}
\caption{Test if a forest Ising model $p$ under no external field is product}
\label{alg:forests-independence}
\end{algorithm}

%Algorithm~\ref{alg:forests-independence}, at a high level, works as follows. If there is an edge parameter whose absolute value is larger than a certain threshold, it will be easy to detect due to the structural information about the edge marginals. In case all edges have parameters smaller in absolute value than this threshold, the expression for $\dskl(.,.)$ between two Ising models tells us that there still has to be at least one edge with a significantly large value of $\mu_e$ in case the model is far from uniform, and hence will still be detectable by the algorithm of Lemma~\ref{lem:localization}. The proof of Theorem \ref{thm:forests-independence} shows this formally.\\

%------
\begin{prevproof}{Theorem}{thm:forests-independence}
Firstly, note that under no external field, the only product Ising model is the uniform distribution $\U$. Therefore the problem reduces to testing whether $p$ is uniform or not. Consider the case when $p$ is indeed uniform. That is, there are no edges in the underlying graph of the Ising model. In this case with probability at least $9/10$ the localization algorithm of Lemma \ref{lem:localization} will output no edges. Hence Algorithm \ref{alg:forests-independence} will output that $p$ is uniform.\\
In case $\dskl(p,\U) \geq \ve$, we split the analysis into two cases.
\begin{itemize}
\item \textit{Case 1:} There exists an edge $e=(u,v)$ such that $\abss{\th_{uv}} \geq \sqrt{\frac{\ve}{n}}$. In this case, $\E[X_uX_v] = \tanh(\th_{uv})$ and in the regime where $\abss{\th} = o(1)$, $\abss{\tanh(\th)} \geq \abss{\th/2}$. Hence implying that $\abss{\E[X_uX_v]} \geq \abss{\th_{uv}/2} \geq \abss{\sqrt{\frac{\ve}{n}}/2}$. Therefore the localization algorithm of Lemma \ref{lem:localization} would identify such an edge with probability at least $9/10$. Note that the regime where the inequality $\abss{\tanh(\th)} \geq \abss{\th/2}$ isn't valid is easily detectable using $\tilde{O}(\frac{n}{\ve})$ samples, as this would imply that $|\theta| \geq 1.9$ and $\abss{\E[X_uX_v]} \geq 0.95$.
\item \textit{Case 2:} All edges $e=(u,v)$ are such that $\abss{\th_{uv}} \leq \abss{\sqrt{\frac{\ve}{n}}}$. 
Manipulation of (\ref{eq:dskl}) tells us that there exists an edge $(u,v)$ such that $\th_{uv}\E[X_uX_v] \geq \ve/n$.
The upper bound on $\th_{uv}$ indicates that $\E[X_uX_v] \geq \sqrt{\ve/n}$, and thus
%In this case we have,
	%\begin{eqnarray}
	%\dskl(p,\U) &\geq& \ve \\
	%\implies \exists \text{ edge }e=(u,v) \text{ s.t } \th_{uv}\E[X_uX_v] &\geq& \frac{\ve}{n} \\
	%\implies \exists \text{ edge }e=(u,v) \text{ s.t } \abss{\E[X_uX_v]} &\geq& \abss{\frac{\ve}{n}\times \sqrt{\frac{n}{\ve}}} = \sqrt{\frac{\ve}{n}}.
	%\end{eqnarray}
 the localization algorithm of Lemma \ref{lem:localization} identifies this edge with probability at least $9/10$.

\end{itemize}
\end{prevproof}

Next, we will present an algorithm for identity testing on forest Ising models under no external field. 

\begin{algorithm}[htb]
\begin{algorithmic}[1]
\Function{TestForestIsing-Identity}{Ising model $q$,sample access to Ising model $p$}
\State \parbox[t]{\dimexpr\linewidth-\algorithmicindent}{If the Ising model $q$ is not a forest, or has a non-zero external field on some node, return $\dskl(p, q) \geq \ve$.\strut}
\State \parbox[t]{\dimexpr\linewidth-\algorithmicindent}{Run the algorithm of Lemma \ref{lem:localization} to identify all edges $e=(u,v)$ such that $\abss{\E[X_uX_v]-\tanh(\th_{uv}^q)} \geq \sqrt{\frac{\ve}{n}}$ using $\tilde{O}\left(\frac{n}{\ve}\right)$ samples.
If it identifies any edges, return that $\dskl(p, q) \geq \ve$.\strut}
\State Otherwise, return that $p=q$.
\EndFunction
\end{algorithmic}
\caption{Test if a forest Ising model $p$ under no external field is identical to a given Ising model $q$}
\label{alg:forests-identity}
\end{algorithm}

%--------
\begin{prevproof}{Theorem}{thm:forests-identity}
Consider the case when $p$ is indeed $q$. In this case with probability at least $9/10$ the localization algorithm of Lemma \ref{lem:localization} will output no edges. Hence Algorithm \ref{alg:forests-identity} will output that $p$ is uniform.\\
In case $\dskl(p,q) \geq \ve$, we split the analysis into two cases.
\begin{itemize}
\item \textit{Case 1:} There exists an edge $e=(u,v)$ such that $\abss{\th_{uv}^p-\th_{uv}^q} \geq \sqrt{\frac{\ve}{n}}$. In this case, $\E[X_uX_v] - \m_{uv}^q = \tanh(\th_{uv}^p) - \tanh(\th_{uv}^q)$ and hence has the same sign as $\th_{uv}^p-\th_{uv}^q$. Assume that $\th_{uv}^p \geq \th_{uv}^q$. The argument for the case $\th_{uv}^q > \th_{uv}^p$ will follow similarly. If $\th_{uv}^p-\th_{uv}^q \leq 1/2\tanh(\b)$, then the following inequality holds from Taylor's theorem.
$$\tanh(\th_{uv}^p)-\tanh(\th_{uv}^q) \geq \frac{\sech^2(\b)\left(\th_{uv}^p-\th_{uv}^q \right)}{2}$$
which would imply $\tanh(\th_{uv}^p)-\tanh(\th_{uv}^q) \geq \frac{\sech^2(\b)}{2}\sqrt{\frac{\ve}{n}}$ and hence the localization algorithm of Lemma~\ref{lem:localization} would identify edge $e$ with probability at least $9/10$ using $\tilde{O}\left(\frac{c_1(\b)n}{\ve}\right)$ samples (where $c_1(\b) = \cosh^4(\b)$).
If $\th_{uv}^p-\th_{uv}^q > 1/2\tanh(\b)$, then $\tanh(\th_{uv}^p)-\tanh(\th_{uv}^q) \geq \tanh(\b)-\tanh\left(\b-\frac{1}{2\tanh(\b)}\right)$ and hence the localization algorithm of Lemma~\ref{lem:localization} would identify edge $e$ with probability at least $9/10$ using $\tilde{O}\left( c_2(\b) \right)$ samples where $c_2(\b) = \frac{1}{(\tanh(\b)-\tanh(\b-1/2\tanh(\b)))^2}$. Note that as $\b$ grows small, $c_2(\b)$ gets worse. However it cannot grow unbounded as we also have to satisfy the constraint that $\th_{uv}^p - \th_{uv}^q \leq 2\b$. This implies that 
\[
c_2(\b) = \min\left\{\b^2, \frac{1}{(\tanh(\b)-\tanh(\b-1/2\tanh(\b)))^2} \right\}
\] samples suffice in this case. Therefore the algorithm will give the correct output with probability $> 9/10$ using $\tilde{O}\left(c(\b)\frac{n}{\ve} \right)$ samples where $c(\b) = \max\{c_1(\b),c_2(\b)\}$.

\item \textit{Case 2:} All edges $e=(u,v)$ are such that $\abss{\th_{uv}^q-\th_{uv}^q} \leq \sqrt{\frac{\ve}{n}}$. 
This case is similar to the corresponding case in the proof of Theorem~\ref{thm:forests-independence}.

\end{itemize}
\end{prevproof}

\subsection{Improved Algorithms for Independence Testing on Ferromagnets}
\label{sec:ferro}
%In this section we describe an algorithm for testing independence of ferromagnetic Ising models under no external field. 
%The tester follows the localization based recipe of Section \ref{sec:localization} but leverages additional structural information about ferromagnets to obtain an improved sample complexity.
%At a high level, the algorithm is as follows: 
%If there exists an edge with a large edge parameter, then we lower bound its marginal by $\tanh(\th_{uv})$ where $uv$ is the edge under consideration. 
%This implies that its marginal sticks out and is easy to catch via performing local tests on all edges. 
%If all the edge parameters were small, then Algorithm \ref{alg:localization-product-test} is already efficient. 

We first prove a structural lemma about ferromagnetic Ising models. We will employ Griffiths inequality~\cite{Griffiths69} to argue that in any ferromagnetic Ising model $\mu_{uv} \geq \tanh(\th_{uv})$ for all pairs $u,v$. 

\begin{lemma}[Griffiths Inequality]
	\label{lem:griffiths}
	Let $p$ be a ferromagnetic Ising model under no external field defined on $G = (V,E)$. For any $A \subseteq V$ define $X_A = \prod_{u \in A}  X_u$ and let $\mu_{A}^p = \E\left[ X_A \right]$. For every $A,B \subseteq V$, 
	\begin{align*}
	\E\left[X_A X_B\right] \ge \E\left[ X_A\right]\E\left[X_B\right].
	\end{align*}
\end{lemma}

\begin{lemma} \label{lem:ferro-percolation}
Consider two ferromagnetic Ising models $p$ and $q$ under no external field defined on $G_p = (V,E_p)$ and $G_q = (V,E_q)$. Denote the parameter vector of $p$ model by $\vec{\th}^p$ and that of $q$ model by $\vec{\th}^q$. If $\vec{\th}^p \geq \vec{\th}^q$ coordinate-wise, then for any two nodes $u,v \in V$, $\mu_{uv}^p \geq \mu_{uv}^q$.
\end{lemma}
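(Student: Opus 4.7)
The plan is to compute $\mu_{uv}$ directly through the coupling $Q$ and reduce the claim to a monotonicity statement about the random cluster measure. I will work on the complete graph on $V$, adopting the convention that $\theta_e = 0$ (hence $r_e = 1 - e^{-2\theta_e} = 0$) for any edge $e$ which is absent from the underlying graph; this puts both $p$ and $q$ into the same FK framework without changing either measure, and lets me compare $\vec{r}^p$ and $\vec{r}^q$ coordinate-wise.

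First I would write $\mu_{uv}^p = \sum_{X,\eta} X_u X_v \, Q(X,\eta) = \sum_{X,\eta} Q(X,\eta)\,\mathbbm{1}[X_u=X_v] - \sum_{X,\eta} Q(X,\eta)\,\mathbbm{1}[X_u\ne X_v]$ and partition each sum according to whether $C_\eta(u,v) = 1$ or $C_\eta(u,v) = 0$. Property 2 of $Q$ forces $X_u = X_v$ whenever $C_\eta(u,v)=1$, so the $C_\eta(u,v)=1$ contribution to $\Pr[X_u \ne X_v]$ vanishes, while the $C_\eta(u,v)=0$ pieces cancel against each other by Claim \ref{clm:coupling-prop}. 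Collecting these observations gives the clean identity
\begin{equation*}
\mu_{uv}^p \;=\; \sum_{\eta:\,C_\eta(u,v)=1} \rho_{\vec r^{\,p},2}(\eta) \;=\; \Pr_{\eta \sim \rho_{\vec r^{\,p},2}}\!\bigl[u \leftrightarrow v\bigr],
\end{equation*}
and analogously for $q$. So the lemma reduces to showing $\Pr_{\rho_{\vec r^{\,p},2}}[u \leftrightarrow v] \ge \Pr_{\rho_{\vec r^{\,q},2}}[u \leftrightarrow v]$.

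For this, I would note that the map $\theta_e \mapsto r_e = 1 - e^{-2\theta_e}$ is monotone nondecreasing on $[0,\infty)$, so the ferromagnetic hypothesis together with $\vec\theta^{\,p} \ge \vec\theta^{\,q}$ yields $\vec r^{\,p} \ge \vec r^{\,q}$ coordinate-wise (and all entries lie in $[0,1)$). The event $\{u \leftrightarrow v\}$ is an increasing event on $\{0,1\}^E$ because adding open bonds can only create, never destroy, connections. Since $s = 2 \ge 1$, Lemma \ref{lem:stochastic-dominance} says $\rho_{\vec r^{\,q},2} \preceq \rho_{\vec r^{\,p},2}$ in the stochastic order, and stochastic domination preserves the expectation of any increasing function; applying this to the indicator of $\{u \leftrightarrow v\}$ gives the required inequality, and chaining the two identities completes the proof.

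The only real subtlety, and the step I would scrutinize most carefully, is the reduction $\mu_{uv} = \Pr[u \leftrightarrow v]$: one must verify that the $C_\eta(u,v) = 0$ contributions cancel exactly (which is where Claim \ref{clm:coupling-prop} is used) and that no constant factors are lost in the marginalization identities relating $Q$ to the Ising and FK measures. Everything else is a direct invocation of the stochastic-domination lemma on an increasing event, which is routine.
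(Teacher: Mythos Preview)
Your proof is correct and follows essentially the same route as the paper's: both use the FK coupling and Claim~\ref{clm:coupling-prop} to reduce $\mu_{uv}$ to the FK connection probability $\Pr[u\leftrightarrow v]$, then invoke the stochastic-domination Lemma~\ref{lem:stochastic-dominance}. Your application of stochastic domination is in fact a bit cleaner---you apply it directly to the increasing event $\{u\leftrightarrow v\}$, whereas the paper reaches the same inequality by arguing via up-sets of minimal connecting configurations.
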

\begin{proof}
	
	We will show that for any ferromagnetic Ising model $p$ defined on $G = (V,E)$ under no external field, for all $i,j \in V$,
	%\begin{align}
	$\frac{\partial \mu_{uv}^p}{\partial \th_{ij}^p} \ge 0$.
	%\end{align}
	It is clear that this will imply the lemma statement. 
	We drop the superscript $p$ from the following calculations for brevity.
	\begin{align}
	\frac{\partial \mu_{uv}}{\partial \th_{ij}} &= \frac{\partial}{\partial \th_{ij}} \frac{\sum_{x \in \Omega} x_u x_v \exp\left(\sum_{u \ne v} \th_{uv}x_u x_v\right)}{Z_{\vec{\th}}} \notag\\
	&= \frac{ \frac{\partial \left( \sum_{x \in \Omega} x_u x_v \exp\left(\sum_{u \ne v} \th_{uv}x_u x_v\right)  \right)}{\partial \th_{ij}}}{Z_{\vec{\th}}} - \frac{\mu_{uv} \frac{\partial \left(\sum_{x \in \Omega} \exp\left(\sum_{u \ne v} \th_{uv} x_u x_v\right) \right)}{\partial \th_{ij}}}{Z_{\vec{\th}}} \label{eq:fer-struct2}\\
	&= \E\left[X_uX_vX_iX_j\right] - \E\left[X_uX_v\right]\E\left[X_iX_j\right] \ge 0, \label{eq:fer-struct3}
	\end{align}
	where (\ref{eq:fer-struct2}) follows from the quotient rule for differentiation and (\ref{eq:fer-struct3}) follows by simple observation and finally to conclude non-negativity we used the Griffiths inequality (Lemma \ref{lem:griffiths}).
\end{proof}

Using the above lemma, we now prove the main structural lemma for ferromagnets which will be crucial to our algorithm for testing ferromagnetic Ising models.
\begin{lemma}[Structural Lemma about Ferromagnetic Ising Models]
\label{lem:ferro-structural}
If $X \sim p$ is a ferromagnetic Ising model on a graph $G=(V,E)$ under zero external field, then $\mu_{uv} \geq \tanh(\th_{uv})$ for all edges $(u,v) \in E$.
\end{lemma}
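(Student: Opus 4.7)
The plan is to deduce this structural lemma as an immediate corollary of the two previously established pieces: Lemma~\ref{lem:trees-structural} (exact edge marginal for forest Ising models under no external field) and Lemma~\ref{lem:ferro-percolation} (coordinate-wise monotonicity of edge marginals in the parameter vector for ferromagnets under no external field).

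Fix an edge $(u,v) \in E$ and let $p$ be the given ferromagnetic Ising model on $G=(V,E)$ with parameter vector $\vec{\theta}^p \ge \vec{0}$. I would introduce a comparison model $q$ on the same vertex set $V$ whose only nonzero edge parameter is on $(u,v)$: set $\theta^q_{uv} = \theta^p_{uv}$ and $\theta^q_{e} = 0$ for every other edge $e \ne (u,v)$, with no external field. Then $q$ is an Ising model on a forest (in fact a single edge plus isolated vertices), and it is ferromagnetic since $\theta^p_{uv} \ge 0$. Moreover, the inequality $\vec{\theta}^p \ge \vec{\theta}^q$ holds coordinate-wise, because $\theta^p_e \ge 0 = \theta^q_e$ on every edge other than $(u,v)$ and the two models agree on $(u,v)$.

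Applying Lemma~\ref{lem:trees-structural} to the forest model $q$ gives $\mu^q_{uv} = \tanh(\theta^q_{uv}) = \tanh(\theta^p_{uv})$. Applying Lemma~\ref{lem:ferro-percolation} to the pair $(p,q)$ (whose hypotheses are satisfied by construction) yields $\mu^p_{uv} \ge \mu^q_{uv}$. Chaining these two facts gives $\mu^p_{uv} \ge \tanh(\theta^p_{uv})$, which is exactly what we wanted to show. Since the edge $(u,v)$ was arbitrary, the conclusion holds for all edges in $E$.

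There is no real obstacle here: the heavy lifting has been done in Lemma~\ref{lem:ferro-percolation}, which is where the random-cluster coupling argument is required in order to compare the two models despite the presence of cycles in $G$. The only thing to be careful about is verifying that the comparison model $q$ really is a legitimate (ferromagnetic, no-external-field, forest-structured) Ising model so that both prior lemmas apply, and that zeroing out all edges except $(u,v)$ preserves the coordinate-wise parameter inequality needed to invoke Lemma~\ref{lem:ferro-percolation}.
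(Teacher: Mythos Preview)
Your proof is correct and takes essentially the same approach as the paper: both invoke Lemma~\ref{lem:trees-structural} on the single-edge model and Lemma~\ref{lem:ferro-percolation} to compare it with the full model $p$. The only difference is cosmetic: the paper builds the graph up one edge at a time and applies Lemma~\ref{lem:ferro-percolation} at each step of an induction, whereas you apply it once directly between $p$ and the single-edge model $q$; your version is slightly more streamlined, since the induction is unnecessary given the full strength of Lemma~\ref{lem:ferro-percolation}.
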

\begin{proof}
Fix any pair of nodes $(u,v)$. 
Let $\vec{\th}^p$ be the parameter vector of $p$, and $\vec{\th}^q$ be the parameter vector of an Ising model with all parameters equal to $0$, barring $\theta^q_{uv} = \theta^p_{uv}$.
The result follows by the expression for the edge marginal in $q$ in combination with Lemma~\ref{lem:ferro-percolation}.
%Fix the edge of concern $e=(u,v)$. If the graph doesn't contain cycles, then from Lemma~\ref{lem:trees-structural} $\m_{uv} = \tanh(\th_{uv})$ and the statement is true. To show that the statement holds for general graphs we will use induction on the structure of the graph. Graph $G$ can be constructed as follows. Start with the single edge $e=(u,v)$ and then add the remaining edges in $E \backslash \{e\}$ one by one in some order. 
%Denote the intermediate graphs obtained during this process as $G_0,G_1,\ldots,G_m=G$ where $G_0$ is the graph consisting of just a single edge. For each graph $G_i$ we can associate the corresponding Ising model $p_i$ to be the model which has $\th_e^{p_i} = \th_e$ for $e \in E_{G_i}$ and $\th_e^{p_i} = 0$ otherwise. For each graph $G_i$ in the sequence, we will use $\mu_{uv}^{p_i}$ to denote $\E\left[X_uX_v \right]$ for the Ising model corresponding to graph $G_i$. 
%We will prove that $\mu_{uv}^p \geq \tanh(\th_{uv})$ by induction on this sequence of graphs. The statement can be easily verified to be true for $G_0$. 
%In fact, $\mu_{uv}^{p_0} = \tanh(\th_{uv})$. Suppose the statement was true for some $G_i$ in the sequence. By Lemma~\ref{lem:ferro-percolation}, we have that $\mu_{uv}^{p_{i+1}} \geq \mu_{uv}^{p_i}$. This implies that $\mu_{uv}^{Gp{i+1}} \geq \tanh(\th_{uv})$ hence showing the statement to be true for all graphs $G_i$ in the sequence.
\end{proof}

Given the above structural lemma about ferromagnetic Ising models under no external field, we present the following algorithm for testing whether a ferromagnetic Ising model is product or not.
\begin{algorithm}[htb]
\begin{algorithmic}[1]
\Function{TestFerroIsing-Independence}{sample access to an Ising model $p$}
\State \parbox[t]{\dimexpr\linewidth-\algorithmicindent}{Run the algorithm of Lemma \ref{lem:localization} to identify if all edges $e=(u,v)$ such that $\E[X_uX_v] \geq \sqrt{\ve}/n$ using $\tilde{O}\left(\frac{n^2}{\ve}\right)$ samples.
If it identifies any edges, return that $\dskl(p, \I) \geq \ve$.\strut}
\State Otherwise, return that $p$ is product.
\EndFunction
\end{algorithmic}
\caption{Test if a ferromagnetic Ising model $p$ under no external field is product}
\label{alg:ferro}
\end{algorithm}

\begin{prevproof}{Theorem}{thm:ferro-independence}
Firstly, note that under no external field, the only product Ising model is the uniform distribution $\U$. To the problem reduces to testing whether $p$ is uniform or not. Consider the case when $p$ is indeed uniform. That is, there are no edges in the underlying graph of the Ising model. In this case with probability at least $9/10$ the localization algorithm of Lemma \ref{lem:localization} with output no edges. Hence Algorithm \ref{alg:ferro} will output that $p$ is product.\\
In case $\dskl(p,\I) \geq \ve$, we split the analysis into two cases.
\begin{itemize}
\item \textit{Case 1:} There exists an edge $e=(u,v)$ such that $\abss{\th_{uv}} \geq \sqrt{\frac{\ve}{n^2}}$. In this case, $\abss{\E[X_uX_v]} \geq \abss{\tanh(\th_{uv})}$ and in the regime where $\ve$ is a fixed constant, $\abss{\tanh(\th)} \geq \abss{\th/2}$. Hence implying that $\abss{\E[X_uX_v]} \geq \abss{\th_{uv}/2} \geq \sqrt{\frac{\ve}{n^2}}/2$. Therefore the localization algorithm of Lemma \ref{lem:localization} would identify such an edge with probability at least $9/10$. (The regime where the inequality $\abss{\tanh(\th)} \geq \abss{\th/2}$ isn't valid would be easily detectable using $\tilde{O}(\frac{n^2}{\ve})$ samples.)
\item \textit{Case 2:} All edges $e=(u,v)$ are such that $\th_{uv} \leq \sqrt{\frac{\ve}{n^2}}$. 
%Once again following the reasoning in Case 2 of the proof of Theorem~\ref{thm:forests-independence}, 
(\ref{eq:dskl}) combined with this condition implies that there must exist an edge marginal of magnitude at least $\sqrt{\ve}/n$, and thus the localization algorithm of Lemma~\ref{lem:localization} is likely to identify it.
%In this case we have,
%	\begin{eqnarray}
%	\dskl(X,\I) &\geq& \ve \\
%	\implies \exists \text{ edge }e=(u,v) \text{ s.t } \th_{uv}\E[X_uX_v] &\geq& \frac{\ve}{n^2} \\
%	\implies \exists \text{ edge }e=(u,v) \text{ s.t } \E[X_uX_v] &\geq& \frac{\ve}{n^2}\times \sqrt{\frac{n^2}{\ve}} = \sqrt{\frac{\ve}{n^2}}.
%	\end{eqnarray}
%	Hence, the localization algorithm of Lemma \ref{lem:localization} would identify such an edge with probability at least $9/10$.
\end{itemize}
\end{prevproof}

%Assuming an upper bound $\b$ on the edge parameters we can also do identity testing in the case of no external field with $n^2$ samples. => Proof idea analyze difference of $\tanh(\th_e^P) - \tanh(\th_e^Q)$ using a Taylor around one of the $\tanh$s to get a lower bound of at least $c(\b)\sqrt{\ve/n}$ when $\th_e^P-\th_e^Q \geq \sqrt{\ve/n}$.

\section{An Improved Test for High-Temperature Ising Models: A Learn-then-Test Algorithm}
\label{sec:learn-and-test}

In this section, we describe a framework for testing Ising models in the high-temperature regime which results in algorithms which are more efficient than our baseline localization algorithm of Section~\ref{sec:localization} for dense graphs. This is the more technically involved part of our paper and we modularize the description and analysis into different parts. We begin with a high level overview of our approach.

The main approach we take in this section is to consider a global test statistic over all the variables on the Ising model in contrast to the localized statistics of Section~\ref{sec:localization}. 
For ease of exposition, we first describe the approach for testing independence under no external field.
We then describe the changes that need to be made to obtain tests for independence under an external field and goodness-of-fit in Section~\ref{sec:changes}. 

Note that testing independence under no external field is the same as testing uniformity. % as the only independent Ising model when there is no external field is the one corresponding to the uniform distribution. 
The intuition for the core of the algorithm is as follows.
%Note that testing uniformity of an Ising model $X$, if we knew exactly the node marginals of $X$, boils down to testing if $X$ is identical to a product Ising model $I$ which has the same node marginals as $X$. 
%The first step would be to obtain estimates of these marginals which are good enough. 
%Consider the statistic $Z = \sum_{e = (u,v) \in E} \sign(\lambda_e)\cdot \left(X_u X_v - \m_u\m_v\right)$, where $\sign(\lambda_e)$ is chosen arbitrarily if $\lambda_e = 0$.
Suppose we are interested in testing uniformity of Ising model $p$ with parameter vector $\vec{\th}$.
%Note that for the uniform Ising model, $\th_{uv}=\th_u=0$ for all $u,v \in V$. 
We start by observing that, by (\ref{eq:dskl}), $\frac{\dskl(p,\U)}{\beta} \leq \sum_{u \neq v} \abss{\m_{uv}}$.
% which can be captured via a statistic that does not depend on $\vec{\th}$.
%From $(\ref{eq:dskl})$, we have that under no external field ($\th_{u}=0$ for all $u \in V$),
%\begin{align}
%&\dskl(p,\U) = \sum_{e=(u,v) \in E} \th_{uv} \m_{uv} \notag\\
%\implies &\dskl(p,\U) \leq \sum_{u \neq v} \b \abss{\m_{uv}} \label{eq:theta-to-b}\\
%\implies &\frac{\dskl(p,\U)}{\b} \leq \sum_{u \neq v} \abss{\m_{uv}}. \label{eq:dskl-over-b}
%\end{align}
%where (\ref{eq:theta-to-b}) holds because $\abss{\th_{uv}} \leq \b$.
With this in mind, we consider the statistic $Z = \sum_{u \neq v} \sign(\m_{uv})\cdot \left(X_u X_v \right)$, where $X \sim p$ and $\sign(\m_{uv})$ is chosen arbitrarily if $\m_{uv} = 0$.
It is easy to see that $\E[Z] = \sum_{u \neq v} \abss{\m_{uv}}$.
If $X \in \I$, then $\E[Z] = 0$.
On the other hand, by the bound above, if $\dskl(X, \I) \geq \ve$ then $\E[Z] \geq \ve/\b$.
If the $\sign(\m_{uv})$ parameters were known, we could simply plug them into $Z$, and using Chebyshev's inequality, distinguish these two cases using $\Var(Z) \b^2/\ve^2$ samples. 

There are two main challenges here.
\begin{itemize}
\item First, the sign parameters, $\sign(\m_{uv})$, are \emph{not} known.
\item Second, it is not obvious how to get a non-trivial bound for $\Var(Z)$.
\end{itemize}

%We now address the first challenge.
One can quickly see that learning all the sign parameters might be prohibitively expensive.
For example, if there is an edge $e$ such that $|\m_e| = 1/2^n$, there would be no hope of correctly estimating its sign with a polynomial number of samples. 
Instead, we perform a process we call \emph{weak learning} -- rather than trying to correctly estimate all the signs, we instead aim to obtain a $\vec \G$ which is \emph{correlated} with the vector $\sign(\m_e)$.
%In particular, we aim to obtain $\vec \G$ such that, in the case where $\dskl(X, \I) \geq \ve$, $\E[\sum_{e = (u,v) \in E} \G_e \left(X_u X_v-\m_u\m_v\right)] \geq \ve/\zeta \b$, where $\zeta = \poly(n)$.  That is we learn a sign vector $\vec \G$ which is correlated enough with the true sign vector such that a sufficient portion of the signal from the $\dskl$ expression is still preserved.
In particular, we aim to obtain $\vec \G$ such that, in the case where $\dskl(p, \U) \geq \ve$, $\E[\sum_{e = (u,v) \in E} \G_e \left(X_u X_v\right)] \geq \ve/\zeta \b$, where $\zeta = \poly(n)$. 
%That is we learn a sign vector $\vec \G$ which is correlated enough with the true sign vector such that a sufficient portion of the signal from the $\dskl$ expression is still preserved.
The main difficulty of analyzing this process is due to correlations between random variables $\left(X_uX_v\right)_{(u,v) \in E}$.
Naively, we could get an appropriate $\G_e$ for $\left(X_uX_v\right)$ by running a weak learning process independently for each edge.
However, this incurs a prohibitive cost of $O(n^2)$ by iterating over all edges.
We manage to sidestep this cost by showing that, despite these correlations, learning all $\G_e$ simultaneously succeeds with a probability which is $\geq 1/\poly(n)$, for a moderate polynomial in $n$.
Thus, repeating this process several times, we can obtain a $\vec \G$ which has the appropriate guarantee with sufficient constant probability.

At this point, we have a statistic 
\begin{equation}
Z' = \sum_{u \neq v} c_{uv} X_uX_v, \label{eq:statistic no external field}
\end{equation}
where $c \in \{\pm 1\}^{V \choose 2}$ represent the signs obtained from the weak learning procedure.
The gap in the expectation of $Z'$ in the two cases is $\ve/\zeta \b$, and thus by Chebyshev's inequality, they can be distinguished using $\Var(Z') \zeta^2 \b^2/\ve^2$ samples.
At this point, we run into the second issue mentioned above: we must bound the variance of $Z'$.
Since the range of $Z'$ is $\Omega(n^2)$, a crude bound for its variance is $O(n^4)$, granting us no savings over the localization algorithm of Theorem \ref{thm:localization}.
However, in the high temperature regime, Theorem \ref{thm:variance bound high temperature no external field yuval} shows the bound $\Var(Z') = O(n^2)$. %+ \tilde{O}\left(n^3\b^3 \dm^{1.5} \right).$$
%Surprisingly, for dense graphs in our high temperature regime, the above bound implies that $\Var(Z') = \tilde O(n^2)$.
In other words, despite the potentially complex structure of the Ising model and potential correlations, the variables $X_u X_v$ contribute to the variance of $Z'$ roughly as if they were all independent!
%We believe the result and techniques involved in the analysis of this variance bound are of independent interest outside the context of this algorithm, and describe them in Section \ref{sec:variance}.
We describe the result and techniques involved in the analysis of the variance bound in Section \ref{sec:variance}.
%Given the tighter bound on the variance of our statistic, we run the Chebyshev-based test on all the hypotheses obtained in the previous learning step (with appropriate failure probability) to conclude our algorithm.
%Further details about the algorithm are provided in Sections \ref{sec:weak-learning}-\ref{sec:balance}.

The description of our algorithm for independence testing with no external field is presented in Algorithm \ref{alg:framework}. It depends on a parameter $\tau$, which can be set as $4/3$ to optimize the sample complexity as in Theorem~\ref{thm:learn-and-test}.
\begin{theorem}
\label{thm:learn-and-test}
Given $\tilde O\left(\min_{\t > 0} \left(n^{2 + \tau} + n^{6 - 2\tau} \right)\frac{\b^2}{\ve^2} \right) = \tilde O\left(\frac{n^{10/3}\beta^2}{\ve^2}\right)$ i.i.d.\ samples from an Ising model $p$ in the high-temperature regime with no external field, Algorithm~\ref{alg:framework} distinguishes between the cases $p \in \I$ and $\dskl(p, \I) \geq \ve$ with probability at least $2/3$.
%If there is an external field, the sample complexity is \\
%$\tilde O\left(\min_{\t > 0} \left(n^{2 + \tau} + n^{4 - 2\tau} \cdot \min\left\{n^3,\max\left\{n^2, n^{3.5}\cdot \dm \cdot \b^2\right\}\right\} \right)\frac{\b^2}{\ve^2} \right)$.
\end{theorem}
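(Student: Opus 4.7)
The plan is to verify both completeness and soundness of Algorithm~\ref{alg:framework}, and then tally the sample cost of its three phases.

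For completeness, suppose $p\in\I$. Under no external field this forces $p=\U$, so every edge marginal $\m_{uv}$ is $0$. Step~1 therefore rejects no edge w.h.p.\ by Lemma~\ref{lem:localization}, and for every sign vector $\vec\G^{(\ell)}$ produced in the loop the bilinear statistic $Z'=\sum_{u\ne v}\G^{(\ell)}_{uv}X_uX_v$ has mean exactly $0$, so the Chebyshev-based tester of Lemma~\ref{lem:chebytest} accepts each $\ell$ at its prescribed failure probability $\d=O(n^{\tau-2})$. Union-bounding over the $O(n^{2-\tau})$ candidates yields overall acceptance with constant probability.

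For soundness, assume $\dskl(p,\I)\ge\ve$. Either Step~1 catches an edge with $|\hat\m_{uv}|\gtrsim \ve/n^{\tau}$ (in which case we correctly reject), or every $|\m_{uv}|$ lies below this threshold. In the latter case, the bound~\eqref{eq:dskl-over-b} still gives $\sum_{u\ne v}|\m_{uv}|\ge \ve/\b$, with the signal spread across many edges. The weak-learning subroutine is designed so that a single call produces, with probability $\Omega(n^{\tau-2})$, a sign vector $\vec\G$ whose inner product with $(\m_{uv})_{u\ne v}$ is $\gtrsim \ve/(\b\,n^{2-\tau})$; running the learner $O(n^{2-\tau})$ times then makes some $\ell^\ast$ ``good'' with constant probability. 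For such an $\ell^\ast$, Chebyshev's inequality combined with the variance bound $\Var(Z')=\tilde O(n^2)$ from Theorem~\ref{thm:variance bound high temperature no external field yuval} distinguishes $\E[Z']=0$ from $\E[Z']\gtrsim \ve/(\b n^{2-\tau})$ using $\tilde O(n^{6-2\tau}\b^2/\ve^2)$ samples, and the $O(n^{2-\tau})$-wise union bound costs only logarithmic factors thanks to the choice of $\d$.

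Summing the phases: the localization pre-check needs $\tilde O(n^{2\tau}\b^2/\ve^2)$ samples, the weak-learning loop contributes $\tilde O(n^{2+\tau}\b^2/\ve^2)$, and the Chebyshev tests, which share one common batch of samples across all $\ell$, add $\tilde O(n^{6-2\tau}\b^2/\ve^2)$. The total is $\tilde O\!\left((n^{2+\tau}+n^{6-2\tau})\b^2/\ve^2\right)$; taking the minimum over $\tau>0$ recovers the stated bound, and at $\tau=4/3$ matches the $n^{10/3}$ exponent of Theorem~\ref{thm:learn-then-test-beta-and-d}.

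I expect the main technical obstacle to lie in the weak-learning analysis rather than in the Chebyshev step. The delicate part is showing that \emph{one} batch of samples, applied to all $\binom{n}{2}$ edges \emph{simultaneously}, yields a sign vector sufficiently correlated with $\sign(\m_{uv})$ with probability at least $\Omega(n^{\tau-2})$. A naive per-edge implementation would pay an $\tilde\Omega(n^2)$ factor and ruin the sample complexity; saving it requires an anti-concentration / Paley--Zygmund argument on the global sum that exploits the guarantee of the Step~1 pre-check---that every individual $|\m_{uv}|$ is small---together with a careful treatment of the correlations among the $X_uX_v$. Once the weak-learning guarantee and the variance bound are in place, the rest is a routine Chebyshev-plus-union-bound computation.
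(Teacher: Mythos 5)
Your proposal is correct and follows essentially the same route as the paper: localization pre-check to ensure all $|\m_{uv}|$ are small, $O(n^{2-\tau})$ repetitions of the simultaneous weak-learning step (Lemma~\ref{lem:weak-learning-prob-bound}) to obtain a good sign vector with constant probability, the Chebyshev test of Lemma~\ref{lem:chebytest} run on a shared batch with per-hypothesis failure probability $O(1/n^{2-\tau})$, and the variance bound of Theorem~\ref{thm:variance bound high temperature no external field yuval} to turn the $\s^2 n^{4-2\tau}$ term into $n^{6-2\tau}$. The accounting of the three phases and the identification of the weak-learning correlation argument as the technical crux both match the paper's own proof.
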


\begin{algorithm}[htb]
\begin{algorithmic}[1]
\Function{Learn-Then-Test-Ising}{sample access to an Ising model $p,\b,\dm,\ve, \tau$}
\State \parbox[t]{\dimexpr\linewidth-\algorithmicindent}{Run the localization Algorithm \ref{alg:localization-product-test} on $p$ with accuracy parameter $\frac{\ve}{ n^{\t}}$.
If it identifies any edges, return that $\dskl(p, \I) \geq \ve$.\strut}
\For {$\ell = 1$ to $O(n^{2 - \tau})$}
\State \parbox[t]{\dimexpr\linewidth-\algorithmicindent}{Run the weak learning Algorithm \ref{alg:weak-learn} on $S = \{ X_uX_v\}_{u \neq v}$ with parameters $\tau$ and $\ve/\b$ to generate a sign vector $\vec \G^{(\ell)}$ where $\G^{(\ell)}_{uv}$ is weakly correlated with $\sign\left(\E\left[ X_{uv}\right]\right)$.\strut}
\EndFor
\State \parbox[t]{\dimexpr\linewidth-\algorithmicindent}{Using the \emph{same set of samples for all $\ell$}, run the testing algorithm of Lemma \ref{lem:chebytest} on each of the $\vec \G^{(\ell)}$ with parameters $\tau_2 = \tau, \d = O(1/n^{2 - \tau})$. If any output that $\dskl(p, \I) \geq \ve$, return that $\dskl(p, \I) \geq \ve$. Otherwise, return that $p \in \I$.\strut}
\EndFunction
\end{algorithmic}
\caption{Test if an Ising model $p$ under no external field is product using Learn-Then-Test}
\label{alg:framework}
\end{algorithm}

%A statement of the exact sample complexity achieved by our algorithm is given in Theorem \ref{thm:learn-and-test}. 

%When optimized for the parameter $\t$, this yields Theorem \ref{thm:learn-then-test-beta-and-d}.
%\begin{restatable}[Independence Testing using Learn-Then-Test, No External Field]{theorem}{indnofield}
%\label{thm:learn-then-test-beta-and-d}
%Suppose $p$ is an Ising model in the high temperature regime under no external field. Then, given\\ $\tilde{O}\left(\frac{n^{10/3}\b^2}{\ve^2}\right)$ i.i.d.\ samples from $p$, the learn-then-test algorithm runs in polynomial time and distinguishes between the cases $p \in \I$ and $\dskl(p, \I) \geq \ve$ with probability at least $9/10$.
%\end{restatable}

\begin{remark}
The first step in Algorithm \ref{alg:framework} is to perform a simple localization test to check if $\abss{\m_e}$ is not too far away from $0$ for all $e$. 
It is added to help simplify the analysis of the algorithm and is not necessary in principle.
In particular, if we pass this test, then the rest of the algorithm has the guarantee that $|\m_e|$ is small for all $e \in E$.
%In particular, we use the first part of Algorithm \ref{alg:localization-product-test}, which checks if any edge looks far from uniform, to perform this first step, albeit with a smaller value of the accuracy parameter $\ve$ than before.
%Similar to before, if we find a single non-uniform edge, this is sufficient evidence to output $\dskl(X, \I) \geq \ve$.
%If we do not find any edges which are verifiably far from uniform, we proceed onward, with the additional guarantee that $|\m_e|$ is small for all $e \in E$.
\end{remark}

%Further details are provided in Section \ref{sec:framework-appendix}. 

%\section{More Details on the Learn-then-Test Algorithm}
%\label{sec:framework-appendix}

%\subsection{Independence Testing under an External Field}

%\begin{theorem}
%\label{thm:learn-and-test}
%Given $\tilde O\left(\min_{\t > 0} \left(n^{2 + \tau} + n^{4 - 2\tau} \cdot \min\left\{n^3,\max\left\{n^2, n^3\cdot \dm^{1.5}\cdot \b^{3}\right\}\right\} \right)\frac{\b^2}{\ve^2} \right)$ i.i.d samples from an Ising model $p$ in the high-temperature regime with no external field, there exists a polynomial-time algorithm which distinguishes between the cases $p \in \I$ and $\dskl(p, \I) \geq \ve$ with probability at least $2/3$.
%%If there is an external field, the sample complexity is \\
%%$\tilde O\left(\min_{\t > 0} \left(n^{2 + \tau} + n^{4 - 2\tau} \cdot \min\left\{n^3,\max\left\{n^2, n^{3.5}\cdot \dm \cdot \b^2\right\}\right\} \right)\frac{\b^2}{\ve^2} \right)$.
%\end{theorem}

The organization of the rest of the section is as follows.
We describe and analyze our weak learning procedure in Section \ref{sec:weak-learning}.
Given a vector with the appropriate weak learning guarantees, we describe and analyze the testing procedure in Section \ref{sec:hyp-testing}. 
In Section \ref{sec:combine}, we describe how to combine all these ideas -- in particular, our various steps have several parameters, and we describe how to balance the complexities to obtain the sample complexity stated in Theorem \ref{thm:learn-and-test}.
%Finally, in Section \ref{sec:balance}, we optimize the sample complexities from Theorem \ref{thm:learn-and-test} for the parameter $\t$ and filter out cleaner statement of Theorem \ref{thm:learn-then-test-beta-and-d}. 
%We compare the performance of our localization and learn-then-test algorithms and describe the best sample complexity achieved in different regimes in Section~\ref{sec:compare}.

\subsection{Weak Learning}
\label{sec:weak-learning}

Our overall goal of this section is to ``weakly learn'' the sign of $\m_e = \E[X_uX_v]$ for all edges $e = (u,v)$.
More specifically, we wish to output a vector $\vec \G$ with the guarantee that 
$\E_{X}\left[\sum_{e = (u,v) \in E} \G_e X_u X_v \right] \geq \frac{c \ve}{2\b n^{2 - \tau_2}}$,
for some constant $c > 0$ and parameter $\tau_2$ to be specified later.
Note that the ``correct'' $\G$, for which $\G_e = \sign(\m_e)$, has this guarantee with $\tau_2 = 2$ -- by relaxing our required learning guarantee, we can reduce the sample complexity in this stage.

The first step will be to prove a simple but crucial lemma answering the following question: 
Given $k$ samples from a Rademacher random variable with parameter $p$, how well can we estimate the sign of its expectation?
This type of problem is well studied in the regime where $k = \Omega(1/p^2)$, in which we have a constant probability of success, but we analyze the case when $k \ll 1/p^2$ and prove how much better one can do versus randomly guessing the sign.
See Lemma \ref{lem:weakLearningBernoulli} in Section \ref{sec:weak-learning-bern} for more details.

With this lemma in hand, we proceed to describe the weak learning procedure. Given parameters $\t,\ve$ and sample access to a set $S$ of `Rademacher-like' random variables which may be \emph{arbitrarily correlated} with each other, the algorithm draws $\tilde{O}\left(\frac{n^{2\t}}{\ve^2} \right)$ samples from each random variable in the set and computes their empirical expected values and outputs a signs of thus obtained empirical expectations. The procedure is described in Algorithm \ref{alg:weak-learn}.

\begin{algorithm}[htb]
\begin{algorithmic}[1]
\Function{WeakLearning}{sample access to set $\{ Z_{i} \}_{i}$ of random variables where $Z_{i} \in \{\pm 1\}$ and can be arbitrarily correlated, $\ve$, $\t$}
\State Draw $k = \tilde{O}\left(\frac{n^{2\t}}{\ve^2} \right)$ samples from each $Z_{i}$. Denote the samples by $Z_{i}^{(1)},\ldots,Z_{i}^{(k)}$.
\State Compute the empirical expectation for each $Z_{i}$: $\hat{Z}_{i} = \frac{1}{k}\sum_{l=1}^k Z_{i}^{(l)}$.
\State Output $\vec \G$ where $\G_{i} = \sign(\hat{Z}_{i})$.
\EndFunction
\end{algorithmic}
\caption{Weakly Learn Signs of the Expectations of a set of Rademacher-like random variables}
\label{alg:weak-learn}
\end{algorithm}

We now turn to the setting of the Ising model, discussed in Section \ref{sec:weak-learning-ising}. We invoke the weak-learning procedure of Algorithm \ref{alg:weak-learn} on the set $S = \{ X_uX_v\}_{u \neq v}$ with parameters $\ve/\b$ and $0 \le \t \le 2$.
By linearity of expectations and Cauchy-Schwarz, it is not hard to see that we can get a guarantee of the form we want in expectation (see Lemma \ref{lem:largeExpectation}).
However, the challenge remains to obtain this guarantee with constant probability.
Carefully analyzing the range of the random variable and using this guarantee on the expectation allows us to output an appropriate vector $\vec \G$ with probability inversely polynomial in $n$ (see Lemma \ref{lem:weak-learning-prob-bound}).
Repeating this process several times will allow us to generate a collection of candidates $\{\vec \G^{(\ell)}\}$, at least one of which has our desired guarantees with constant probability.

\subsubsection{Weak Learning the Edges of an Ising Model}
\label{sec:weak-learning-ising}

We now turn our attention to weakly learning the edge correlations in the Ising model.
To recall, our overall goal is to obtain a vector $\vec \G$ such that $\E_{X \sim p}\left[\sum_{e = (u,v) \in E} \G_e X_u X_v \right] \geq \frac{c \ve}{2\b n^{2 - \tau_2}}$.

We start by proving that Algorithm \ref{alg:weak-learn} yields a $\vec \G$ for which such a bound holds in expectation.
The following is fairly straightforward from Lemma \ref{lem:weakLearningBernoulli} and linearity of expectations.
\begin{lemma}
\label{lem:largeExpectation}
Given $k = O\left(\frac{n^{2\tau_2} \beta^2}{\ve^2}\right)$ samples from an Ising model $X \sim p$ such that $\dskl(p, \I) \geq \ve$ and $|\m_e| \leq \frac{\ve}{\b n^{\tau_2}}$ for all $e \in E$, Algorithm \ref{alg:weak-learn} outputs $\vec \G = \{\G_e\} \in \{\pm 1\}^{|E|}$ such that
$$\E_{\vec \G} \left[ \E_{X \sim p} \left[ \sum_{e = (u,v) \in E} \G_e X_u X_v \right] \right] \geq \frac{c\beta}{\ve n^{2 - \tau_2}}\left(\sum_{e \in E} |\m_e|\right)^2,$$
for some constant $c > 0$.
\end{lemma}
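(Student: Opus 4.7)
The plan is to reduce the statement to a per-edge anti-concentration bound, which is exactly Lemma~\ref{lem:weakLearningBernoulli}, and then combine across edges using linearity of expectation followed by a Cauchy--Schwarz step. The crucial preliminary observation is that the samples used by Algorithm~\ref{alg:weak-learn} to form $\vec\G$ are independent of the fresh $X \sim p$ appearing in the inner expectation, so
$$\E_{\vec\G}\,\E_{X \sim p}\Bigl[\sum_{e=(u,v) \in E} \G_e X_u X_v\Bigr] \;=\; \sum_{e \in E} \E[\G_e]\cdot \m_e.$$
Thus it suffices to lower bound $\E[\G_e]\cdot \m_e$ for each edge $e$ separately; the joint dependence among the $\G_e$ is irrelevant for this lemma (it will matter only later, for the probability guarantee in Lemma~\ref{lem:weak-learning-prob-bound}).

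For a fixed edge $e=(u,v)$, the variable $X_uX_v$ is $\pm 1$-valued with mean $\m_e$, and $\G_e = \sign(\hat{Z}_e)$ for the empirical mean $\hat Z_e$ of $k$ samples. The hypothesis $|\m_e| \leq \ve/(\b n^{\tau_2})$ together with $k = O(n^{2\tau_2}\b^2/\ve^2)$ yields $k\m_e^2 = O(1)$, which is precisely the ``sub-concentration'' regime in which Lemma~\ref{lem:weakLearningBernoulli} improves on random guessing and gives, for some absolute constant $c>0$,
$$\E[\G_e]\cdot \sign(\m_e) \;\geq\; 2c\,|\m_e|\sqrt{k}.$$
Multiplying by $|\m_e|$ gives $\E[\G_e]\,\m_e \geq 2c\,\m_e^2 \sqrt{k}$, and summing over edges yields $\sum_e \E[\G_e]\m_e \geq 2c\sqrt{k}\sum_e \m_e^2$.

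To convert $\sum_e \m_e^2$ into the $\ell_1$ quantity appearing on the right-hand side of the claim, I apply Cauchy--Schwarz using $|E| \leq \binom{n}{2} \leq n^2$:
$$\sum_e \m_e^2 \;\geq\; \frac{1}{|E|}\Bigl(\sum_e |\m_e|\Bigr)^2 \;\geq\; \frac{1}{n^2}\Bigl(\sum_e |\m_e|\Bigr)^2.$$
Finally, substituting $\sqrt{k} = \Theta(n^{\tau_2}\b/\ve)$ and absorbing constants produces the stated bound $\tfrac{c\b}{\ve\, n^{2-\tau_2}}(\sum_e|\m_e|)^2$.

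I expect no real obstacle here: the entire nontrivial content sits in Lemma~\ref{lem:weakLearningBernoulli}'s anti-concentration estimate, which has already been established. The remaining work is purely mechanical --- linearity of expectation (via the independence of the learning and test samples), Cauchy--Schwarz, and plugging in the value of $k$. The small-marginal assumption $|\m_e| \leq \ve/(\b n^{\tau_2})$ is used only to stay inside the weak-learning regime $k\m_e^2 = O(1)$ where the per-edge $\sqrt{k}|\m_e|$ advantage holds.
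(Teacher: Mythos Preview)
Your proposal is correct and follows essentially the same argument as the paper: apply Lemma~\ref{lem:weakLearningBernoulli} per edge in the small-$k$ regime to get $\E[\G_e]\m_e \geq c|\m_e|^2\sqrt{k}$, sum via linearity, and finish with Cauchy--Schwarz over $|E| \leq n^2$ after substituting $\sqrt{k} = \Theta(n^{\tau_2}\b/\ve)$. The only (harmless) addition is that you make explicit the independence between the weak-learning samples and the fresh $X$ in the inner expectation, which the paper uses implicitly.
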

\begin{proof}
Since for all $e = (u,v) \in E$, $|\m_e|\leq \frac{\ve}{\beta n^{\tau_2}}$, and by our upper bound on $k$, all of the random variables $X_u X_v$ fall into the first case of Lemma \ref{lem:weakLearningBernoulli} (the ``small $k$'' regime).
Hence, we get that $\Pr \left[ \G_e=\sign(\m_e) \right] \geq \frac{1}{2}+ \frac{c_1|\m_e|\sqrt{k}}{2}
$ which implies that $\E_{\G_e} \left[ \G_e \m_e \right] \geq \left(\frac{1}{2}+ \frac{c_1|\m_e|\sqrt{k}}{2}\right)|\m_e| + \left(\frac{1}{2}- \frac{c_1|\m_e|\sqrt{k}}{2}\right)(-|\m_e|)
= c_1|\m_e|^2\sqrt{k}$.
Summing up the above bound over all edges, we get
$$\E_{\vec{\G}}\left[ \sum_{e \in E} \G_e \m_e\right] \geq c_1\sqrt{k} \sum_{e \in E} |\m_e|^2 
\geq \frac{c_1' n^{\tau_2}\beta }{\ve}\sum_{e \in E} |\m_e|^2$$
for some constant $c_1' > 0$.
Applying the Cauchy-Schwarz inequality gives us
$\E_{\vec \G }\left[ \sum_{e \in E} \G_e \m_e\right] \geq \frac{c\beta }{\ve n^{2-\tau_2}}\left(\sum_{e \in E} |\m_e|\right)^2$ as desired.
\end{proof}

Next, we prove that the desired bound holds with sufficiently high probability.
The following lemma follows by a careful analysis of the extreme points of the random variable's range. 

\begin{lemma}
\label{lem:weak-learning-prob-bound}
Define $\goodgamma$ to be the event that $\vec \G = \{\G_e\} \in \{\pm 1\}^{|E|}$ is such that $$\E_{X \sim p}\left[\sum_{e = (u,v) \in E} \G_e X_u X_v\right] \geq \frac{c\ve}{2\b n^{2 - \tau_2}},$$
for some constant $c > 0$.
Given $k = O\left(\frac{n^{2\tau_2}\beta^2}{\ve^2}\right)$ i.i.d.\ samples from an Ising model $p$ such that $\dskl(p, \I) \geq \ve$ and $|\m_e| \leq \frac{\ve}{\b n^{\tau_2}}$ for all $e \in E$, Algorithm \ref{alg:weak-learn} outputs $\vec \G$ satisfying $\goodgamma$ with probability at least $\frac{c}{4 n^{2 - \tau_2}}.$
\end{lemma}
\begin{proof}
We introduce some notation which will help in the elucidation of the argument which follows. Let $r$ be the probability that the $\vec \G$ output by Algorithm~\ref{alg:weak-learn} satisfies $\goodgamma$.
Let $T = \frac{c \beta}{2\ve n^{2 - \tau_2}}\left(\sum_{e \in E} |\m_e|\right)^2$.
Let $Y,U,L$ be random variables defined as follows: 
$$ Y = \E_{X\sim p}\left[\sum_{e = (u,v) \in E} \G_e X_u X_v\right],\ 
 U = \E_{\vec \G} \left[ Y | Y > T \right],\ 
 L = \E_{\vec \G} \left[ Y | Y \leq T \right].$$

Then by Lemma~\ref{lem:largeExpectation}, we have $rU + (1-r)L \ge 2T$, which implies that $r \ge \frac{2T-L}{U-L}$.
Since $- \sum_{e \in E} |\m_e| \leq L \leq T \leq U \leq \sum_{e \in E} |\m_e|$, we have $r \geq \frac{T}{2\left(\sum_{e \in E} |\m_e|\right)}$.
Substituting in the value for $T$ we get $r \geq \frac{c\b\left(\sum_{e \in E} |\m_e|\right)^2}{4\ve n^{2-\tau_2}\left(\sum_{e \in E} |\m_e|\right)} = \frac{c\b\left(\sum_{e \in E} |\m_e|\right)}{4\ve n^{2-\tau_2}}$.
Since $\dskl(p, \I) \geq \ve$, this implies $\left(\sum_{e \in E} |\m_e|\right) \geq \ve/\b$ and thus
$ r \geq \frac{c}{4n^{2-\tau_2}}, $
as desired.
\end{proof}

\subsection{Testing Our Learned Hypothesis}
\label{sec:hyp-testing}
In this section, we assume that we were successful in weakly learning a vector $\vec \Gamma$ which is ``good'' (i.e., it satisfies $\goodgamma$, which says that the expectation the statistic with this vector is sufficiently large).
With such a $\vec \Gamma$, we show that we can distinguish between $p \in \I$ and $\dskl(p, \I) \geq \ve$.
\begin{lemma}
\label{lem:chebytest}
Let $p$ be an Ising model, let $X \sim p$, and let $\s^2$ be such that, for any $\vec \g = \{\g_e\} \in \{\pm 1\}^{|E|}$,
$\Var\left(\sum_{e = (u,v) \in E} \g_e X_u X_v \right) \leq \s^2 .$
Given $k = O\left(\s^2 \cdot \frac{n^{4 - 2\tau_2 }\beta^2 \log(1/\d)}{\ve^2}\right)$ i.i.d.\ samples from $p$, which satisfies either $p \in \I$ or $\dskl(p, \I) \geq \ve$, and $\vec \G = \{\G_e\} \in \{\pm 1\}^{|E|}$ which satisfies $\goodgamma$ (as defined in Lemma \ref{lem:weak-learning-prob-bound}) in the case that $\dskl(p, \I) \geq \ve$, then there exists an algorithm which distinguishes these two cases with probability $\geq 1 - \d$.
\end{lemma}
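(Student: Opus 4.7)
The plan is to apply Chebyshev's inequality to the empirical mean of the statistic
\[
Z \;=\; \sum_{e=(u,v)\in E} \G_e\, X_u X_v,
\]
evaluated on the $k$ samples drawn from $p$. The key observation is that in the no-external-field regime under consideration (which is the setting of this section), every product Ising model has $\theta_u = 0$, hence $\E[X_u] = 0$, and therefore $\E[X_uX_v] = \E[X_u]\E[X_v] = 0$ for every edge. So when $p \in \I$ we have $\E[Z] = 0$, while in the far case, the assumption that $\vec\G$ satisfies the event $\goodgamma$ from Lemma~\ref{lem:weak-learning-prob-bound} directly gives $\E[Z] \geq c\ve/(2\b n^{2-\tau_2})$. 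Thus the algorithm is to compute the empirical average $\bar Z_k = \frac{1}{k}\sum_{i=1}^k Z^{(i)}$ from the samples, and output ``$p \in \I$'' iff $\bar Z_k < c\ve/(4\b n^{2-\tau_2})$.

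To analyze correctness, the variance hypothesis yields $\Var(\bar Z_k) \le \s^2/k$. Setting the threshold as the midpoint between the two possible values of $\E[Z]$, namely $t := c\ve/(4\b n^{2-\tau_2})$, Chebyshev's inequality gives
\[
\Pr\!\left[\,|\bar Z_k - \E[Z]| \ge t\,\right] \;\le\; \frac{\s^2/k}{t^2} \;=\; \frac{16\,\s^2\, \b^2\, n^{4-2\tau_2}}{c^2\,\ve^2\, k}.
\]
Choosing $k = \Theta\!\left(\s^2 \cdot n^{4-2\tau_2}\b^2/\ve^2\right)$ makes this probability at most a small constant (say $1/3$), which suffices to separate the two hypotheses with constant probability.

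Finally, to amplify the success probability from constant to $1-\d$, I would use the standard median-of-means boosting trick: partition a total of $k' = \Theta(k \log(1/\d))$ samples into $\Theta(\log(1/\d))$ disjoint blocks, run the constant-probability tester on each block, and output the majority answer. A Chernoff bound on the indicator of correctness per block gives the overall failure probability at most $\d$, explaining the $\log(1/\d)$ factor in the stated sample bound.

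There is no real obstacle here: the main conceptual work has already been done in establishing (i) that $\E[Z]$ has a sufficient gap between the two hypotheses (via the $\goodgamma$ event from weak learning) and (ii) that $\Var(Z) \le \s^2$ (the hypothesis of the lemma, to be supplied later by Theorem~\ref{thm:variance bound high temperature no external field yuval}). Given these, the lemma reduces to a textbook Chebyshev-plus-median-boosting argument, and the quoted sample complexity is exactly what this calculation produces.
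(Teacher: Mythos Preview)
Your proposal is correct and essentially identical to the paper's proof: the paper also computes the empirical average of $Z=\sum_e \G_e X_uX_v$, thresholds at $\frac{c\ve}{4\b n^{2-\tau_2}}$, applies Chebyshev with $\Var(\bar Z_k)\le \s^2/k$ to get failure probability $\le \frac{16\s^2\b^2 n^{4-2\tau_2}}{kc^2\ve^2}$, and then boosts by majority over $O(\log(1/\d))$ repetitions. Your explicit justification that $\E[Z]=0$ when $p\in\I$ (using $\E[X_u]=0$ in the no-external-field setting) is a welcome clarification that the paper leaves implicit.
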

\begin{proof}
We prove this lemma with failure probability $1/3$ -- by standard boosting arguments, this can be lowered to $\d$ by repeating the test $O(\log (1/\d))$ times and taking the majority result.

Denoting the $i$th sample as $X^{(i)}$, the algorithm computes the statistic 
$$Z = \frac{1}{k} \left(\sum_{i=1}^{k} \sum_{e=(u,v) \in E} \G_e X^{(i)}_u X^{(i)}_v\right).$$
If $Z \leq \frac{c \ve}{4 \b n^{2 - \tau_2}}$, then the algorithm outputs that $p \in \I$, otherwise, it outputs that $\dskl(p, \I) \geq \ve$.

By our assumptions in the lemma statement $ \Var\left(Z\right) \leq \frac{\s^2}{k}$.
If $p \in \I$, then we have that $\E[Z] = 0$, and Chebyshev's inequality implies that 
$ \Pr\left[Z \geq \frac{\ve}{4\beta n^{2-\tau_2}} \right] \leq \frac{16\s^2\beta^2 n^{4-2\tau_2}}{kc^2\ve^2}$.
Substituting the value of $k$ gives the desired bound in this case.
The case where $\dskl(p, \I) \geq \ve$ follows similarly, using the fact that $\goodgamma$ implies 
$\E[Z] \geq \frac{c \ve}{2 \b n^{2 - \tau_2}}$ .
\end{proof}

\subsection{Combining Learning and Testing}
\label{sec:combine}
In this section, we combine lemmas from the previous sections to complete the proof of Theorem \ref{thm:learn-and-test}.
Lemma \ref{lem:weak-learning-prob-bound} gives us that a single iteration of the weak learning step gives a ``good'' $\vec \G$ with probability at least $\Omega\left(\frac{1}{n^{2-\tau_2}}\right)$. 
We repeat this step $O(n^{2 - \tau_2})$ times, generating $O(n^{2 - \tau_2})$ hypotheses $\vec \G^{(\ell)}$.
By standard tail bounds on geometric random variables, this will imply that at least one hypothesis is good (i.e. satisfying $\goodgamma$) with probability at least $9/10$.
We then run the algorithm of Lemma \ref{lem:chebytest} on each of these hypotheses, with failure probability $\d = O(1/n^{2 - \tau_2})$.
If $p \in \I$, all the tests will output that $p \in \I$ with probability at least $9/10$.
Similarly, if $\dskl(p, \I) \geq \ve$, conditioned on at least one hypothesis $\vec \G^{(\ell^*)}$ being good, the test will output that $\dskl(p, \I) \geq \ve$ for this hypothesis with probability at least $9/10$.
This proves correctness of our algorithm.

To conclude our proof, we analyze its sample complexity.
Combining the complexities of Lemmas \ref{lem:localization}, \ref{lem:weak-learning-prob-bound}, and \ref{lem:chebytest}, the overall sample complexity is
$
O\left(\frac{n^{2\tau_1}\beta^2\log n}{\ve^2}\right)+O\left(\frac{n^{2+\tau_2}\beta^2}{\ve^2} \right)+O\left( \s^2\frac{n^{4-2\tau_2}\beta^2}{\ve^2} \log n\right).
$
Noting that the first term is always dominated by the second term we can simplify the complexity to the expression $O\left(\frac{n^{2+\tau_2}\beta^2}{\ve^2} \right)+O\left( \s^2\frac{n^{4-2\tau_2}\beta^2}{\ve^2} \log n\right)$.
Plugging in the variance bound from Section~\ref{sec:variance} (Theorem \ref{thm:variance bound high temperature no external field yuval}) gives Theorem \ref{thm:learn-and-test}.

\subsection{Changes Required for General Independence and Identity Testing}
\label{sec:changes}
We describe the modifications that need to be done to the learn-then-test approach described in Sections \ref{sec:weak-learning}-\ref{sec:combine} to obtain testers for independence under an arbitrary external field (Section~\ref{sec:changes-ind}), identity without an external field (Section~\ref{sec:changes-id-nofield}), and identity under an external field (Section~\ref{sec:changes-id-field}).

\subsubsection{Independence Testing under an External Field}
\label{sec:changes-ind}
Under an external field, the statistic we considered in Section \ref{sec:learn-and-test} needs to be modified. 
We are interested in testing independence of an ising model $p$ on graph $G = (V,E)$.
%Suppose we are interested in testing independence of an Ising model $p$ defined on a graph $G=(V,E)$ with a parameter vector $\vec{\th}^p$. 
We have that $\dskl(p,\I) = \min_{q \in \I} \dskl(p,q)$. 
In particular, let $q$ be the product Ising model on graph $G'=(V,\emptyset)$ where $\m_u^q = \m_u^p$ for all $u \in V$. 
Manipulation of (\ref{eq:dskl}) gives that $\frac{\dskl(p,\I)}{\b} \leq  \sum_{e=(u,v) \in E} \abss{\m_{uv}^p - \m_u^p\m_v^p}$.
This suggests a statistic $Z$ such that $\E[Z] = \sum_{e=(u,v) \in E} \abss{\l_{uv}^p}$ where $\l_{uv}^p = \m_{uv}^p - \m_u^p\m_v^p$. 
We consider 
$$Z = \frac12 \sum_{u \neq v} \sign(\l_{uv})\left(X_u^{(1)}-X_u^{(2)} \right)\left(X_v^{(1)}-X_v^{(2)} \right),$$
 where $X^{(1)},X^{(2)} \sim p$ are two independent samples from $p$. 
It can be seen that $Z$ has the desired expectation. 
We face the same challenge as before, since we do not know the $\sign(\l_{uv})$ parameters, so we again apply our weak learning procedure.
Consider the following random variable: 
$Z_{uv} = \frac{1}{4}\left(X_u^{(1)} - X_u^{(2)}\right)\left(X_v^{(1)} - X_v^{(2)}\right).$
Though $Z_{uv}$ takes values in $\{-1,0,+1 \}$, we can easily transform it to the domain $\{\pm 1\}$; define $Z'_{uv}$ to be the Rademacher random variable which is equal to $Z_{uv}$ if $Z_{uv} \neq 0$, and is otherwise chosen to be $\{\pm 1\}$ uniformly at random.
%Consider an associated Rademacher variable $Z'_{uv}$ defined as follows: $\Pr[Z_{uv} = -1] = \Pr[Z_{uv}=-1] + 1/2\Pr[Z_{uv}=0]$. 
%It is easy to simulate a sample from $Z'_{uv}$ given access to a sample from $Z_{uv}$. 
%If $Z_{uv} = 0$, toss a fair coin to decide whether $Z'_{uv}=-1$ or $+1$. 
It is easy to see that this preserves the expectation: $\E[Z'_{uv}] = \E[Z_{uv}] = \frac{\l_{uv}}{2}$. 
Hence $Z'_{uv} \sim Rademacher\left(\frac{1}{2} + \frac{\l_{uv}}{4} \right)$, and given $k$ samples, Lemma \ref{lem:weakLearningBernoulli} allows us to learn its sign correctly with probability at least $1/2+c_1 \sqrt{k}\abss{\l_{uv}}$.
The rest of the weak learning argument of Lemmas \ref{lem:largeExpectation} and \ref{lem:weak-learning-prob-bound} follows by replacing $\m_e$ with $\l_e$.

Once we have \emph{weakly learnt} the signs, we are left with a statistic $Z'_{cen}$ of the form:
\begin{align}
Z'_{cen} = \sum_{u \neq v} c_{uv} \left(X_u^{(1)}-X_u^{(2)} \right)\left(X_v^{(1)}-X_v^{(2)}\right). \label{eq:statistic external field}
\end{align}

%where the subscript $cen$ denotes that the statistic is a centered one and $c \in \{\pm 1\}^{V \choose 2}$. 
We need to obtain a bound on $\Var(Z'_{cen})$. 
We again employ the techniques described in Section \ref{sec:variance} to obtain a non-trivial bound on $\Var(Z'_{cen})$ in the high-temperature regime. 
The statement of the variance result is given in Theorem \ref{thm:variance bound high temperature with external field yuval} and the details are in Section \ref{sec:variance bound external field yuval}.
Putting this all together gives us the sample complexity 
%Combining the weak learning part and the variance bound gives us the following sample complexity for independence testing under an external field:
%\begin{align*}
%& \tilde{O}\left(\frac{(n^{2+\t} + n^{4-2\t}\s^2)\b^2}{\ve^2} \right) \\
$ \tilde{O}\left(\frac{(n^{2+\t} + n^{4-2\t}n^2)\b^2}{\ve^2} \right)$, where we again choose $\tau = 4/3$ to obtain Theorem~\ref{thm:learn-then-test-ind-extfield-balanced}.
The algorithm is described in Algorithm~\ref{alg:independence-learn-then-test-external-field}.

\begin{theorem}[Independence Testing using Learn-Then-Test, Arbitrary External Field]
\label{thm:learn-then-test-ind-extfield-balanced}
Suppose $p$ is an Ising model in the high temperature regime under an arbitrary external field. The learn-then-test algorithm takes in $\tilde{O}\left( \frac{n^{10/3}\b^2}{\ve^2}\right)$ i.i.d.\ samples from $p$ and distinguishes between the cases $p \in \I$ and $\dskl(p,\I) \ge \ve$ with probability $\ge 9/10$.
\end{theorem}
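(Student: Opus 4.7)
The plan is to follow the learn-then-test recipe from Sections~\ref{sec:weak-learning}--\ref{sec:balance}, but with the modifications to the statistic and to the weak-learner that are already sketched at the start of Section~\ref{sec:changes-ind}. The starting point is the upper bound $\dskl(p,\I)/\b \le \sum_{u \ne v} |\lambda_{uv}^p|$, where $\lambda_{uv}^p = \m_{uv}^p - \m_u^p \m_v^p$, which is obtained by comparing $p$ against the specific product model $q$ matching the node marginals of $p$. This motivates replacing the bilinear statistic $\sum c_{uv} X_u X_v$ of the no-external-field case by the centered bilinear statistic $Z'_{cen}$ of \eqref{eq:statistic external field} on two independent draws $X^{(1)}, X^{(2)} \sim p$, whose expectation (with the correct signs) is proportional to $\sum_{u \ne v} |\lambda_{uv}^p|$.

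Next, I would redo the weak-learning step of Lemma~\ref{lem:largeExpectation} and Lemma~\ref{lem:weak-learning-prob-bound} with $\lambda_{uv}$ in place of $\m_{uv}$. As observed in Section~\ref{sec:changes-ind}, the variable $Z_{uv} = \tfrac14(X_u^{(1)} - X_u^{(2)})(X_v^{(1)} - X_v^{(2)}) \in \{-1,0,1\}$ can be converted into a genuine Rademacher variable $Z'_{uv}$ with parameter $\tfrac12 + \tfrac{\lambda_{uv}}{4}$ by tossing an independent fair coin whenever $Z_{uv} = 0$. Applying Lemma~\ref{lem:weakLearningBernoulli} to these $Z'_{uv}$'s yields exactly the per-edge sign-recovery guarantee that powered Lemmas~\ref{lem:largeExpectation} and~\ref{lem:weak-learning-prob-bound}, so the same two-step argument (expectation bound via linearity plus Cauchy--Schwarz, then a range/extremal-point argument to convert to a probability bound) produces a vector $\vec \G$ satisfying the analogue of $\goodgamma$ with probability $\Omega(1/n^{2-\tau_2})$, at the cost of $\tilde O(n^{2\tau_2}\b^2/\ve^2)$ samples. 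Repeating $O(n^{2-\tau_2})$ times and union-bounding gives a ``good'' candidate with constant probability.

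For the testing step, I would invoke Lemma~\ref{lem:chebytest} on $Z'_{cen}$ using each candidate $\vec \G^{(\ell)}$, with failure probability $O(1/n^{2-\tau_2})$ so that a union bound over the $O(n^{2-\tau_2})$ candidates costs only $\tilde O(1)$. The key input is a bound on $\Var(Z'_{cen})$, which is the main obstacle: the naive bound is $O(n^4)$ and would give a vacuous sample complexity. Here I would plug in Theorem~\ref{thm:variance bound high temperature with external field yuval}, which gives $\Var(Z'_{cen}) = \tilde O(n^2)$ in the high-temperature regime even in the presence of an external field; this is the analogue of the no-external-field variance bound and is the technically delicate ingredient used as a black box.

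Putting the three parts together gives a total sample complexity of $\tilde O((n^{2+\tau_2} + n^{4-2\tau_2}\sigma^2)\b^2/\ve^2)$ just as in \eqref{eq:combine-complexity}. Substituting $\sigma^2 = \tilde O(n^2)$ and applying Claim~\ref{clm:balanced-tau} with $s = 2$ gives the optimal choice $\tau_2 = 4/3$ and hence a sample complexity of $\tilde O(n^{10/3}\b^2/\ve^2)$, matching the statement of Theorem~\ref{thm:learn-then-test-ind-extfield-balanced}. The overall correctness (outputting $p \in \I$ vs.\ $\dskl(p,\I) \ge \ve$ with probability at least $9/10$) then follows from the same boosting/union-bound accounting as in Section~\ref{sec:combine}.
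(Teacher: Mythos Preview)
Your proposal is correct and follows essentially the same approach as the paper: you adapt the learn-then-test framework by replacing $\m_{uv}$ with $\lambda_{uv}$, use the centered two-sample statistic $Z'_{cen}$, weakly learn signs via the Rademacher variable $Z'_{uv}$ derived from $Z_{uv}$, invoke the variance bound of Theorem~\ref{thm:variance bound high temperature with external field yuval}, and balance via Claim~\ref{clm:balanced-tau} with $s=2$ to obtain $\tau_2 = 4/3$. This matches the paper's derivation in Section~\ref{sec:changes-ind} line by line.
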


\begin{algorithm}[htb]
\begin{algorithmic}[1]
\Function{Learn-Then-Test-Ising}{sample access to an Ising model $p,\b,\dm,\ve, \tau$}
\State \parbox[t]{\dimexpr\linewidth-\algorithmicindent}{Run the localization Algorithm \ref{alg:localization-product-test} with accuracy parameter $\frac{\ve}{n^{\t}}$.
If it identifies any edges, return that $\dskl(p, \I) \geq \ve$.\strut} 
\For {$\ell = 1$ to $O(n^{2 - \tau})$}
\State \parbox[t]{\dimexpr\linewidth-\algorithmicindent}{Run the weak learning Algorithm \ref{alg:weak-learn} on $S = \{ (X_u^{(1)}-X_u^{(2)})(X_v^{(1)}-X_v^{(2)})\}_{u \neq v}$ with parameters $\tau_2 = \tau$ and $\ve/\b$ to generate a sign vector $\vec \G^{(\ell)}$ where $\G^{(\ell)}_{uv}$ is weakly correlated with $\sign\left(\E\left[(X_u^{(1)}-X_u^{(2)})(X_v^{(1)}-X_v^{(2)})\right]\right)$.\strut}
\EndFor
\State \parbox[t]{\dimexpr\linewidth-\algorithmicindent}{Using the \emph{same set of samples for all $\ell$}, run the testing algorithm of Lemma \ref{lem:chebytest} on each of the $\vec \G^{(\ell)}$ with parameters $\tau_2 = \tau, \d = O(1/n^{2 - \tau})$. If any output that $\dskl(p, \I) \geq \ve$, return that $\dskl(p, \I) \geq \ve$. Otherwise, return that $p \in \I$.\strut}
\EndFunction
\end{algorithmic}
\caption{Test if an Ising model $p$ under arbitrary external field is product}
\label{alg:independence-learn-then-test-external-field}
\end{algorithm}

\subsubsection{Identity Testing under No External Field}
\label{sec:changes-id-nofield}
We discuss the changes needed for identity testing under no external field. 
Similar to before, we start by upper bounding the $\dskl$ between the Ising models $p$ and $q$, obtaining
$\frac{\dskl(p,q)}{2\b} \le \sum_{u \neq v} \abss{\m_{uv}^p-\m_{uv}^q }$.
Since we know $\m_{uv}^q$ for all pairs $u,v$, this suggests a statistic of the form
\begin{align*}
Z = \sum_{u \neq v} \sign\left( \m_{uv}^p - \m_{uv}^q\right)\left(X_uX_v - \m_{uv}^q \right).
\end{align*}
%If $p=q$, $\E[Z] = 0$ and if $\dskl(p,q) \ge \ve$, $\E[Z] \ge \ve/2\b$. 
%As before, there are two things we need to do: learn a sign vector which is weakly correlated with the right sign vector and obtain a bound on $\Var(Z)$. 
By separating out the part of the statistic which is just a constant, we obtain that
$\Var(Z) \le \Var\left(\sum_{u \neq v} c_{uv} X_uX_v \right),$
 and we can apply the variance bound of Theorem \ref{thm:variance bound high temperature no external field yuval}.

In this case, we can weakly learn the signs using Corollary~\ref{cor:weakLearningBernoulliGeneral} of Lemma~\ref{lem:weakLearningBernoulli}.
Given $k$ samples, we correctly learn the sign of $\m_{uv}^p - \m_{uv}^q$ with probability at least $1/2+c_1\sqrt{k}\abss{\m_{uv}^p - \m_{uv}^q}$.
After replacing $\m_e$ by $\m_e^p - \m_e^q$, we can prove statements analogous to Lemmas~\ref{lem:largeExpectation} and \ref{lem:weak-learning-prob-bound}.
% where $\m_e$ is replaced by $\m_e^p - \m_e^q$, we achieve our goal of weakly learning the signs with a sufficient success probability. \\
These changes allow us to conclude the following theorem, which formalizes the guarantees of Algorithm~\ref{alg:identity-learn-then-test-no-field}.
\begin{theorem}[Identity Testing using Learn-Then-Test, No External Field]
\label{thm:learn-then-test-id-noextfield-balanced}
Suppose $p$ and $q$ are Ising models in the high temperature regime under no external field. The learn-then-test algorithm takes in $\tilde{O}\left(\frac{n^{10/3}\b^2}{\ve^2} \right)$ i.i.d.\ samples from $p$ and distinguishes between the cases $p = q$ and $\dskl(p,q) \ge \ve$ with probability $\ge 9/10$.
\end{theorem}

%\begin{theorem}[Identity Testing using Learn-Then-Test, No External Field]
%\label{thm:learn-then-test-id-noextfield-balanced}
%Suppose $p$ and $q$ are Ising models in the high temperature regime under no external field. The learn-then-test algorithm takes in $\tilde{O}\left(\frac{n^{8/3}\max\{n^{2/3}, n\b\dm^{0.5} \}\b^2}{\ve^2} \right)$ i.i.d. samples from $p$ and distinguishes between the cases $p = q$ and $\dskl(p,q) \ge \ve$ with probability $\ge 9/10$.
%\end{theorem}

%The tester is formally described in Algorithm \ref{alg:identity-learn-then-test-no-field}.

\begin{algorithm}[htb]
\begin{algorithmic}[1]
\Function{TestIsing}{sample access to an Ising model $p,\b,\dm,\ve, \tau$, description of Ising model $q$ under no external field}
\State \parbox[t]{\dimexpr\linewidth-\algorithmicindent}{Run the localization Algorithm \ref{alg:localization-identity-test} with accuracy parameter $\frac{\ve}{n^{\t}}$.
If it identifies any edges, return that $\dskl(p, q) \geq \ve$.\strut} 
\For {$\ell = 1$ to $O(n^{2 - \tau})$}
\State \parbox[t]{\dimexpr\linewidth-\algorithmicindent}{Run the weak learning Algorithm \ref{alg:weak-learn} on $S = \{ X_uX_v - \m_{uv}^q\}_{u \neq v}$ with parameters $\tau_2 = \tau$ and $\ve/\b$ to generate a sign vector $\vec \G^{(\ell)}$ where $\G^{(\ell)}_{uv}$ is weakly correlated with $\sign\left(\E\left[ X_{uv} - \m_{uv}^q\right]\right)$.\strut}
\EndFor
\State \parbox[t]{\dimexpr\linewidth-\algorithmicindent}{Using the \emph{same set of samples for all $\ell$}, run the testing algorithm of Lemma \ref{lem:chebytest} on each of the $\vec \G^{(\ell)}$ with parameters $\tau_2 = \tau, \d = O(1/n^{2 - \tau})$. If any output that $\dskl(p, q) \geq \ve$, return that $\dskl(p, q) \geq \ve$. Otherwise, return that $p  = q$.\strut}
\EndFunction
\end{algorithmic}
\caption{Test if an Ising model $p$ under no external field is identical to $q$}
\label{alg:identity-learn-then-test-no-field}
\end{algorithm}

\subsubsection{Identity Testing under an External Field}
\label{sec:changes-id-field}
There are two significant differences when performing identity testing under an external field.
First, we must now account for the contributions of nodes in (\ref{eq:dskl}).
%First, the terms corresponding to nodes of the Ising model in the SKL expression no longer vanish and have to be accounted for. 
Second, it is not clear how to define an appropriately centered statistic which has a variance of $O(n^2)$ in this setting, and we consider this an interesting open question. 
Instead, we use the uncentered statistic which has variance $\Theta(n^3)$. 

%We now describe the first change in more detail now. 
Again, we start by considering an upper bound on the SKL between Ising models $p$ and $q$.
%\begin{align*}
%\dskl(p,q) = \sum_{v \in V} \left( \th_v^p - \th_v^q\right)\left(\m_v^p - \m_v^q \right) + \sum_{(u,v) \in E} \left( \th_{uv}^p - \th_{uv}^q\right)\left(\m_{uv}^p - \m_{uv}^q \right) \\
$\dskl(p,q) \le 2h \sum_{v \in V} \abss{\m_v^p - \m_v^q } + 2\b \sum_{u \neq v} \abss{\m_{uv}^p - \m_{uv}^q }$, and thus if $\dskl(p,q) \ge \ve$, then either 
%\begin{itemize}
$2h\sum_{v \in V} \abss{\m_v^p - \m_v^q } \ge \ve/2$ or
$2\b \sum_{u \neq v} \abss{\m_{uv}^p - \m_{uv}^q } \ge \ve/2$.
%\end{itemize} 
%Moreover, if $p=q$, then both $2h\sum_{v \in V} \abss{\left(\m_v^p - \m_v^q \right)} = 0$ and $2\b \sum_{u \neq v} \abss{\left(\m_{uv}^p - \m_{uv}^q \right)} = 0$. 
%Our tester will first test for case (i) and if that test doesn't declare that the two Ising models are far, then proceeds to test whether case (ii) holds. \\
Our algorithm will first check the former case, to see if the nodes serve as a witness to $p$ and $q$ being far. 
If they do not, we proceed to check the latter case, checking if the edges cause $p$ and $q$ to be far.

We first describe the test to detect whether $\sum_{v \in V} \abss{\m_v^p - \m_v^q } = 0$ or is $\ge \ve/4h$. 
%We observe that the random variables $X_v$ are Rademachers and hence we can use the weak learning framework we developed so far to accomplish this goal. 
We consider a statistic of the form $Z = \sum_{v \in V} \sign(\m_v^p)\left(X_v - \m_v^q \right)$. 
%Again, as before, we face two challenges: we don't know the signs of the node expectations $\m_v^p$ and we need a bound on $\Var(Z)$.\\
We employ the weak-learning framework described in Sections \ref{sec:weak-learning} to weakly learn a sign vector correlated with the true sign vector. 
Since $X_v \sim Rademacher(1/2+\m_v/2)$, Corollary \ref{cor:weakLearningBernoulliGeneral} implies that with $k$ samples, we can correctly estimate $\sign(\m_v^p - \m_v^q)$ with probability $1/2 + c_1\sqrt{k}\abss{\m_v^p - \m_v^q}$. 
The rest of the argument follows similar to before, though we enjoy some savings since we only have a linear number of nodes (compared to a quadratic number of edges), corresponding to a linear number of terms.
%The rest of the argument for obtaining a sign vector which, with sufficient probability, preserves a sufficient amount of signal from the expected value of the statistic, proceeds in a similar way as before. 
%However since the total number of terms we have in our expression is only linear we get some savings in the sample complexity.
Letting $f_c(X) = \sum_{v \in V} c_v X_v$ for some sign vector $c$, Lemma~\ref{lem:lipschitz-lemma} implies $\Var(f_c(X)) = O(n)$.
By calculations analogous to the ones in Sections \ref{sec:combine}, we obtain that by with $\tilde{O}\left(\frac{n^{5/3}h^2}{\ve^2} \right)$ samples, we can test whether $\sum_{v \in V} \abss{\m_v^p - \m_v^q } = 0$ or $\ge \ve/4h$ with probability $\ge 19/20$. 
If the tester outputs that $\sum_{v \in V} \abss{\m_v^p - \m_v^q} = 0$, then we proceed to test whether $\sum_{u \neq v} \abss{\m_{uv}^p - \m_{uv}^q } = 0$ or $\ge \ve/4\b$. 

To perform this step, we begin by looking at the statistic $Z$ used in Section \ref{sec:changes-id-nofield}:
$$Z = \sum_{u \neq v} \sign\left( \m_{uv}^p - \m_{uv}^q\right)\left(X_uX_v - \m_{uv}^q \right).$$
Note that $\E[Z] = \sum_{u \neq v} |\mu_{uv}^p - \mu_{uv}^q|$.
As before, we apply weak learning to obtain a sign vector which is weakly correlated with the true signs.
%as $Z$ has the right expected value. We learn a sign vector which is weakly correlated with the true sign vector.
We also need a variance bound on functions of the form $f_c(X) = \sum_{u \neq v} c_{uv} (X_uX_v - \m_{uv}^q)$ where $c$ is a sign vector. 
By ignoring the constant term in $f_c(X)$, we get that
$\Var(f_c(X)) = \Var\left( \sum_{u \neq v} c_{uv} X_uX_v \right)$.
Note that this quantity can be $\Omega(n^3)$, as it is not appropriately centered. 
We employ Lemma \ref{lem:lipschitz-lemma} to get a variance bound of $O(n^3)$ which yields a sample complexity of $\tilde{O}\left(\frac{n^{11/3}\b^2}{\ve^2} \right)$ for this setting.
Our algorithm is presented as Algorithm~\ref{alg:identity-learn-then-test-external-field}, and its guarantees are summarized in Theorem~\ref{thm:learn-then-test-id-extfield-balanced}.
% captures the total sample complexity of our identity tester under the presence of external fields.
\begin{theorem}[Identity Testing using Learn-Then-Test, Arbitrary External Field]
\label{thm:learn-then-test-id-extfield-balanced}
Suppose $p$ and $q$ are Ising models in the high temperature regime under arbitrary external fields. The learn-then-test algorithm takes in $\tilde{O}\left(\frac{n^{5/3}h^2 + n^{11/3}\b^2}{\ve^2} \right)$ i.i.d.\ samples from $p$ and distinguishes between the cases $p =q$ and $\dskl(p,q) \ge \ve$ with probability $\ge 9/10$.
\end{theorem}
%The changes required for identity testing are fairly straightforward.
%The first step required is a localization based test for the nodes, which comes at a cost of $O(n^2h^2/\ve^2)$.
%If the model passes this localization test, i.e., fails to detect any large node, then we can apply our framework to the edges.
%The main difference is that we must now consider a recentered statistic, based on the known parameter we wish to test against.
%Indeed, one might notice that our weak learning lemma applies to a more general setting than needed for uniformity testing, such as when the random variable is not centered.
%After applying the weak learning lemma, the testing procedure follows very similarly to before.
%We note that additional work is required to prove the variance bounds in the presence of an external field, which may be present in identity testing.

%The tester is formally described in Algorithm \ref{alg:identity-learn-then-test-external-field}.

\begin{algorithm}[htb]
\begin{algorithmic}[1]
\Function{TestIsing}{sample access to an Ising model $p,\b,\dm,\ve, \t_1,\t_2$, description of Ising model $q$}
\State \parbox[t]{\dimexpr\linewidth-\algorithmicindent}{Run the localization Algorithm \ref{alg:localization-identity-test} on the nodes with accuracy parameter $\frac{\ve}{2n^{\t_1}}$.
If it identifies any nodes, return that $\dskl(p, q) \geq \ve$.\strut}
\For {$\ell = 1$ to $O(n^{1 - \tau_1})$}
\State \parbox[t]{\dimexpr\linewidth-\algorithmicindent}{Run the weak learning Algorithm \ref{alg:weak-learn} on $S = \{ (X_u - Y_{u}\}_{u \in V}$, where $Y_{u} \sim Rademacher(1/2+\m_{u}^q/2)$, with parameters $\tau_1$ and $\ve/2h$ to generate a sign vector $\vec \G^{(\ell)}$ where $\G^{(\ell)}_{u}$ is weakly correlated with $\sign\left(\E\left[X_u - \m_{u}^q\right]\right)$.\strut}
\EndFor
\State \parbox[t]{\dimexpr\linewidth-\algorithmicindent}{Using the \emph{same set of samples for all $\ell$}, run the testing algorithm of Lemma \ref{lem:chebytest} on each of the $\vec \G^{(\ell)}$ with parameters $\t_3 = \t_1, \d = O(1/n^{1 - \tau_1})$. If any output that $\dskl(p, q) \geq \ve$, return that $\dskl(p, q) \geq \ve$.\strut}
\State --------------------------
\State \parbox[t]{\dimexpr\linewidth-\algorithmicindent}{Run the localization Algorithm \ref{alg:localization-identity-test} on the edges with accuracy parameter $\frac{\ve}{2n^{\t_2}}$.
If it identifies any edges, return that $\dskl(p, q) \geq \ve$.\strut}
\For {$\ell = 1$ to $O(n^{2 - \tau_2})$}
\State \parbox[t]{\dimexpr\linewidth-\algorithmicindent}{Run the weak learning Algorithm \ref{alg:weak-learn} on $S = \{ (X_uX_v - Y_{uv}\}_{u \neq v}$, where $Y_{uv} \sim Rademacher(1/2+\m_{uv}^q/2)$, with parameters $\tau_2$ and $\ve/2\b$ to generate a sign vector $\vec \G^{(\ell)}$ where $\G^{(\ell)}_{uv}$ is weakly correlated with $\sign\left(\E\left[X_uX_v - \m_{uv}^q\right]\right)$.\strut}
\EndFor
\State \parbox[t]{\dimexpr\linewidth-\algorithmicindent}{Using the \emph{same set of samples for all $\ell$}, run the testing algorithm of Lemma \ref{lem:chebytest} on each of the $\vec \G^{(\ell)}$ with parameters $\t_4 = \t_2, \d = O(1/n^{2 - \tau_2})$. If any output that $\dskl(p, q) \geq \ve$, return that $\dskl(p, q) \geq \ve$. Otherwise, return that $p = q$.}
\EndFunction
\end{algorithmic}
\caption{Test if an Ising model $p$ under an external field is identical to Ising model $q$}
\label{alg:identity-learn-then-test-external-field}
\end{algorithm}

\section{Improved Tests for High-Temperature Ferromagnetic Ising Models}
\label{sec:htferro}
In this section, we present an improved upper bound for testing uniformity of Ising models which are both high-temperature and ferromagnetic.
Similar to the algorithms of Section~\ref{sec:learn-and-test}, we use a \emph{global} statistic, in comparison to the local statistic which is employed for general ferromagnets in Section~\ref{sec:ferro}.

Our result is the following:
\begin{theorem}[Independence Testing of High-Temperature Ferromagnetic Ising Models]
\label{thm:htferro-independence}
Algorithm~\ref{alg:htferro} takes in $\tilde{O}\left( \frac{n}{\ve}\right)$ samples from a high-temperature ferromagnetic Ising model $X \sim p$ which is under no external field and outputs whether $p \in \I$ or $\dskl(p,\I) \geq \ve$ with probability $\geq 9/10$.
\end{theorem}

We note that a qualitatively similar result was previously shown in~\cite{GheissariLP18}, using a $\chi^2$-style statistic.
Our algorithm is extremely similar to our test for general high-temperature Ising models.
The additional observation is that, since the model is ferromagnetic, we know that all edge marginals have non-negative expectation, and thus we can skip the ``weak learning'' stage by simply examining the global statistic with the all-ones coefficient vector.
The test is described precisely in Algorithm~\ref{alg:htferro}.

\begin{algorithm}[htb]
\begin{algorithmic}[1]
\Function{TestHighTemperatureFerroIsing-Independence}{sample access to an Ising model $p$}
\State \parbox[t]{\dimexpr\linewidth-\algorithmicindent}{Run the algorithm of Lemma \ref{lem:localization} to identify if all edges $e=(u,v)$ such that $\E[X_uX_v] \geq \sqrt{\ve/n}$ using $\tilde{O}\left(\frac{n}{\ve}\right)$ samples. \label{st:filter}
If it identifies any edges, return that $\dskl(p, \I) \geq \ve$.\strut}
\State Draw $k = O\left(\frac{n}{\ve}\right)$ samples from $p$, denote them by $X^{(1)}, \dots, X^{(k)}$.
\State Compute the statistic $Z = \frac{1}{k}\sum_{i=1}^k \sum_{(u,v) \in E} X^{(i)}_u X^{(i)}_v$.
\State If $Z \geq \frac{1}{4}\sqrt{\ve n}$, return that $\dskl(p, \I) \geq \ve$.
\State Otherwise, return that $p$ is product.
\EndFunction
\end{algorithmic}
\caption{Test if a high-temperature ferromagnetic Ising model $p$ under no external field is product}
\label{alg:htferro}
\end{algorithm}

\begin{prevproof}{Theorem}{thm:htferro-independence}
First, note that under no external field, the only product Ising model is the uniform distribution $\U$, and the problem reduces to testing whether $p$ is uniform or not. 
Consider first the filtering in Step~\ref{st:filter}.
By the correctness of Lemma~\ref{lem:localization}, this will not wrongfully reject any uniform Ising models.
Furthermore, for the remainder of the algorithm, we have that $\E[X_uX_v] \leq \sqrt{\ve/n}$.

Now, we consider the statistic $Z$.
By Theorem~\ref{thm:variance bound high temperature no external field yuval}, we know that the variance of $Z$ is at most $O\left(n^2/k\right)$ (since we are in high-temperature).
It remains to consider the expectation of the statistic.
When $p$ is indeed uniform, it is clear that $\E[Z] = 0$.
When $\dskl(p, \U) \geq \ve$, we have that
\begin{align}
\ve &\leq \sum_{(u,v) \in E} \theta_{uv}\E[X_uX_v] \label{eq:htferro1}\\
&\leq \sum_{(u,v) \in E} \tanh^{-1}(\E[X_uX_v]) \E[X_uX_v] \label{eq:htferro2}\\
&\leq \sum_{(u,v) \in E} 2\E[X_uX_v]^2 \label{eq:htferro3}\\
&\leq 2\sqrt{\frac{\ve}{n}} \sum_{(u,v) \in E} \E[X_uX_v] \label{eq:htferro4}
\end{align}
(\ref{eq:htferro1}) follows by (\ref{eq:dskl}), (\ref{eq:htferro2}) is due to Lemma~\ref{lem:ferro-structural} (since the model is ferromagnetic), (\ref{eq:htferro3}) is because $\tanh^{-1}(x) \leq 2x$ for $x \leq 0.95$, and (\ref{eq:htferro4}) is since after Step~\ref{st:filter}, we know that $\E[X_uX_v] \leq \sqrt{\ve/n}$.
This implies that $\E[Z] \geq \sqrt{\ve n/4}$.

At this point, we have that $\E[Z] = 0$ when $p$ is uniform, and $\E[Z] \geq \sqrt{\ve n/4}$ when $\dskl(p,\U) \geq \ve$.
Since the standard deviation of $Z$ is $O\left(n/\sqrt{k}\right)$, by Chebyshev's inequality, choosing $k = \Omega(n/\ve)$ suffices to distinguish the two cases with probability $\geq 9/10$.
\end{prevproof}

\section{Bounding the Variance of Functions of the Ising Model in the High-Temperature Regime} \label{sec:variance}
In this section, we describe how we can bound the variance of our statistics on the Ising model in high temperature.
Due to the complex structure of dependences, it can be challenging to obtain non-trivial bounds on the variance of even relatively simple statistics.
In particular, to apply our learn-then-test framework of Section \ref{sec:learn-and-test}, we must bound the variance of statistics of the form $Z' = \sum_{u\neq v} c_{uv}X_uX_v$ (under no external field, see (\ref{eq:statistic no external field})) and $Z'_{cen} = \sum_{u \neq v} c_{uv}\left(X_u^{(1)}-X_u^{(2)} \right) \left( X_v^{(1)}-X_v^{(2)}\right)$ (under an external field, see (\ref{eq:statistic external field})).
While the variance for both the statistics is easily seen to be $O(n^2)$ if the graph has no edges, to prove variance bounds better than the trivial $O(n^4)$ for general graphs requires some work. We show the following two theorems in this section.

The first result, Theorem \ref{thm:variance bound high temperature no external field yuval}, bounds the variance of functions of the form $\sum_{u \neq v} c_{uv} X_uX_v$ under no external field which captures the statistic used for testing independence and identity by the learn-then-test framework of Section \ref{sec:learn-and-test} in the absence of an external field.

\begin{restatable}[High Temperature Variance Bound, No External Field]{theorem}{vbnofieldyuval}
\label{thm:variance bound high temperature no external field yuval}
Let $c \in [-1,1]^{V \choose 2}$ and define $f_c: \{\pm 1\}^V \rightarrow \reals$ as follows: $f_c(x)=\sum_{i \neq j} c_{\{i,j\}} x_i x_j$. Let also $X$ be distributed according to an Ising model, without node potentials (i.e. $\theta_v=0$, for all $v$), in the high temperature regime of Definition~\ref{def:dobrushin-new}. Then
$\Var \left( f_c(X)\right) 
=O({n}^{2})$.
\end{restatable}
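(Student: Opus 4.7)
The plan is to apply the standard Poincar\'e inequality for the (lazy) Glauber dynamics with transition kernel $P$ and stationary distribution $\pi$ equal to the Ising model in question, namely
$$\Var_\pi(f_c) \;\le\; \frac{1}{\gamma}\cdot \mathcal{E}(f_c,f_c),$$
where $\gamma$ is the spectral gap and $\mathcal{E}(f_c,f_c)=\tfrac12\,\E_\pi\big[(f_c(X)-f_c(X'))^2\big]$ is the Dirichlet form, with $X'$ obtained from $X$ by one Glauber step. Under Dobrushin's uniqueness condition (Definition~\ref{def:dobrushin-new}), the Hamming contraction of the greedy coupling noted in Remark~\ref{rm:contraction}, fed into the standard path--coupling machinery of~\cite{LevinPW09}, gives $\gamma = \tilde\Omega(1/n)$. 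It then remains to show $\mathcal{E}(f_c,f_c)=\tilde O(n)$, which is the ``expected Lipschitz property'' referenced in the introduction.

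To bound the Dirichlet form, I would use that a single Glauber step changes at most one coordinate $i$, chosen uniformly. Writing $g_i(X_{-i}) := \sum_{j\neq i} c_{\{i,j\}}X_j$, the change in $f_c$ is
$$f_c(X')-f_c(X) \;=\; 2\,(X'_i-X_i)\,g_i(X_{-i}),$$
so $(f_c(X')-f_c(X))^2 \le 16\,g_i(X_{-i})^2$ because spins lie in $\{\pm 1\}$. Averaging over the uniform choice of coordinate and over the resampled value $X'_i$ yields
$$\mathcal{E}(f_c,f_c) \;\le\; \frac{8}{n}\sum_{i=1}^n \E_\pi\!\left[g_i(X)^2\right].$$

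The work reduces to controlling each $\E_\pi[g_i(X)^2]$, and this is exactly where the no--external--field hypothesis is crucial. With all $\theta_v=0$, the symmetry $X\sim -X$ of the Ising model forces every linear functional of the spins to have mean zero, so $\E_\pi[g_i(X)]=0$ and hence $\E_\pi[g_i(X)^2]=\Var_\pi(g_i(X))$. Since $g_i$ is a signed linear statistic with per-coordinate Lipschitz constants at most $2|c_{\{i,j\}}|\le 2$, the Lipschitz concentration inequality for high--temperature Ising models (Lemma~\ref{lem:lipschitz-lemma}) gives $\Var_\pi(g_i(X))=O(n)$. Plugging back yields $\mathcal{E}(f_c,f_c)=O(n)$, and combining with the spectral gap bound gives $\Var_\pi(f_c)=\tilde O(n^2)$, as claimed.

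The main subtlety to be careful about is threading the Dobrushin condition through two separate tools simultaneously: it must deliver both the $\tilde\Omega(1/n)$ spectral gap and the hypothesis needed to invoke Lemma~\ref{lem:lipschitz-lemma}. Both follow from the same contraction of the greedy coupling highlighted in Remark~\ref{rm:contraction}, but the pipeline should be invoked cleanly (once for path coupling $\Rightarrow$ spectral gap, once for Lipschitz concentration of linear functionals). A secondary point is that the replacement $\E[g_i^2]=\Var(g_i)$ genuinely uses the absence of external fields; when node potentials are present the linear functionals $g_i$ need not be centered and one must work instead with the centered statistic $Z'_{cen}$ of equation~(\ref{eq:statistic external field}), which is why the companion bound (Theorem~\ref{thm:variance bound high temperature with external field yuval}) is treated separately. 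Beyond these points, everything is routine Dirichlet--form calculus in the spirit of~\cite{LevinPW09}.
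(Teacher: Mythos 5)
Your proposal is correct and follows essentially the same route as the paper: a Poincar\'e inequality for the Glauber dynamics, the $\tilde\Omega(1/n)$ spectral-gap bound from path coupling under Dobrushin's condition, and a $\tilde O(n)$ bound on the Dirichlet form coming from the fact that one Glauber step changes $f_c$ by a linear statistic controlled via Lemma~\ref{lem:lipschitz-lemma}. The only (cosmetic) difference is that the paper bounds $\E[(f_c(X)-f_c(X'))^2]$ by a tail bound plus a union bound over the $n+1$ possible updated coordinates, whereas you condition on the updated coordinate and bound the second moment of the centered linear statistic directly, which is slightly cleaner and saves the logarithmic factors in the Dirichlet-form estimate.
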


The second result of this section, Theorem \ref{thm:variance bound high temperature with external field yuval}, bounds the variance of functions of the form $\sum_{u \neq v} c_{uv} (X_u^{(1)} - X_u^{(2)})(X_v^{(1)} - X_v^{(2)})$ which captures the statistic of interest for independence testing using the learn-then-test framework of Section \ref{sec:learn-and-test} under an external field. 
Intuitively, this modification is required to ``recenter'' the random variables. 
Here, we view the two samples from Ising model $p$ over graph $G=(V,E)$ as coming from a single Ising model $p^{\otimes 2}$ over a graph $G^{(1)} \cup G^{(2)}$ where $G^{(1)}$ and $G^{(2)}$ are identical copies of $G$.

\begin{restatable}[High Temperature Variance Bound, Arbitrary External Field]{theorem}{vbfieldyuval}
\label{thm:variance bound high temperature with external field yuval}
Let $c \in [-1,1]^{V \choose 2}$ and let $X$ be distributed according to Ising model $p^{\otimes 2}$ over graph $G^{(1)} \cup G^{(2)}$ in the high temperature regime of Definition~\ref{def:dobrushin-new} and define $g_c: \{ \pm 1\}^{V\cup V'} \rightarrow \mathbb{R}$ as follows: $g_c(x) = \sum_{\substack{u,v \in V \\ \text{s.t. } u \neq v}} c_{uv} (x_{u^{(1)}} - x_{u^{(2)}})(x_{v^{(1)}} - x_{v^{(2)}})$. Then
$\Var(g_c(X)) = O\left(n^{2}\right).$
\end{restatable}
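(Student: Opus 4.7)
The plan is to invoke the Poincaré/variance inequality for reversible Markov chains: for any test function $h$, $\Var_\pi(h) \leq \frac{1}{2\gamma}\, \E_\pi[(h(X) - h(X'))^2]$, where $X'$ denotes one step of a reversible chain with stationary distribution $\pi$ and spectral gap $\gamma$. I apply this with $h = g_c$, $\pi = p^{\otimes 2}$, and the single-site Glauber dynamics running on the $2n$ sites of $V^{(1)}\cup V^{(2)}$. Since $p^{\otimes 2}$ is itself an Ising model (the disjoint union of two independent copies of $p$), it inherits the Dobrushin condition, so by the path-coupling argument referenced in Remark~\ref{rm:contraction} (Theorem 15.1 of~\cite{LevinPW09}) the spectral gap of Glauber is $\gamma = \Omega(1/n)$. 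It then suffices to show that the Dirichlet form $\E_\pi[(g_c(X) - g_c(X'))^2]$ is $\tilde O(n)$.

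A single Glauber step resamples the spin at a uniformly random vertex in $V^{(1)}\cup V^{(2)}$. If the chosen vertex is $u^{(1)}$, the only affected terms of $g_c$ are those incident to $u$, and a direct expansion gives
\[
\bigl|g_c(X) - g_c(X^{u^{(1)}})\bigr| \;\le\; 2\,\bigl|S_u^{(1)}\bigr|, \qquad S_u^{(1)} \;:=\; \sum_{w\neq u} c_{uw}\bigl(X_w^{(1)} - X_w^{(2)}\bigr),
\]
and symmetrically for $u^{(2)}$. Averaging over the choice of site yields
\[
\E_\pi\bigl[(g_c(X) - g_c(X'))^2\bigr] \;\le\; \frac{C}{n}\sum_{u\in V}\Bigl(\E[(S_u^{(1)})^2] + \E[(S_u^{(2)})^2]\Bigr),
\]
so the whole proof reduces to bounding $\sum_u \E[(S_u^{(1)})^2] = \tilde O(n^2)$.

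Using the independence of the two copies, $\E[(X_w^{(1)}-X_w^{(2)})(X_{w'}^{(1)}-X_{w'}^{(2)})] = 2\,\mathrm{Cov}_p(X_w,X_{w'})$, so writing $M$ for the covariance matrix of $p$ and $C=(c_{uw})$ with zero diagonal,
\[
\sum_u \E[(S_u^{(1)})^2] \;=\; 2\sum_{w,w'} \mathrm{Cov}_p(X_w,X_{w'})\,(C^\top C)_{ww'} \;=\; 2\,\mathrm{tr}\!\bigl(M\, C^\top C\bigr) \;\le\; 2\,\|M\|_{\mathrm{op}}\,\|C\|_F^2 \;\le\; 2\,\|M\|_{\mathrm{op}}\,n^2.
\]
The central ingredient is therefore an $\tilde O(1)$ bound on $\|M\|_{\mathrm{op}}$ in the Dobrushin regime. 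I would obtain this by invoking the standard correlation-decay consequence of Dobrushin's uniqueness, which gives $\max_w\sum_{w'}|\mathrm{Cov}_p(X_w,X_{w'})| = O(1)$ (derivable e.g.\ by an influence-matrix / coupling argument, or cited from Chatterjee's thesis~\cite{Chatterjee05}), together with the inequality $\|M\|_{\mathrm{op}}\le \|M\|_\infty$ for symmetric $M$. Combining the three pieces gives $\Var(g_c) \le \frac{1}{2\gamma}\cdot\tilde O(n) = \tilde O(n^2)$, as claimed.

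The main obstacle is the $\|M\|_{\mathrm{op}}=\tilde O(1)$ step, since one must convert the Dobrushin contraction on single-site updates into a global control on pairwise covariances; I expect either a direct Neumann-series argument on the influence matrix of the Glauber chain (which has norm $\le 1-\eta$ under Definition~\ref{def:dobrushin-new}) or the Lipschitz-concentration Lemma~\ref{lem:lipschitz-lemma} applied to linear statistics $\sum_w a_w X_w$ to give this. A secondary technical point is that the Poincaré inequality above is cleanest for the lazy discrete-time chain, so I would work with the lazy version of Glauber (which has the same $\Omega(1/n)$ gap up to constants) to avoid any sign issues from negative eigenvalues.
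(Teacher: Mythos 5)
Your proposal is correct, and its skeleton coincides with the paper's: both apply the Poincar\'e-type inequality (Lemma~\ref{lem:dirichlet_variance_bound}) to the Glauber dynamics on the $2n$ sites of $p^{\otimes 2}$, lower-bound the (absolute) spectral gap by $\tilde\Omega(1/n)$ via Dobrushin and the $O(n\log n)$ mixing time (Claim~\ref{clm:spectral_gap}), and reduce everything to showing the Dirichlet form is $\tilde O(n)$. Where you genuinely diverge is in that last step. The paper (Claim~\ref{clm:dirichlet_arbitraryfield}) observes that $g_c(x)-g_c(y)$ is one of $n+1$ linear statistics $\sum_v c_v(x_v^{(1)}-x_v^{(2)})$, applies the Lipschitz concentration bound of Lemma~\ref{lem:lipschitz-lemma} with a union bound to get $|g_c(x)-g_c(y)|\le O(\sqrt{n}\log n)$ except on an event of probability $\le 1/(10n^2)$, and pays a crude $4n^2\cdot\frac{1}{10n^2}$ term for the bad event. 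You instead compute the second moment of the single-site increment exactly: using independence of the two copies, $\sum_u \E[(S_u^{(1)})^2]=2\,\mathrm{tr}(M C^\top C)\le 2\|M\|_{\mathrm{op}}\|C\|_F^2$, and then need only $\|M\|_{\mathrm{op}}=O(1)$. That last ingredient is not an obstacle at all given the paper's toolkit: Lemma~\ref{lem:lipschitz-lemma} applied to $a^\top X$ (Lipschitz constants $2|a_w|$) gives $a^\top M a=\Var(a^\top X)\le O(\|a\|_2^2)$ directly, i.e.\ $\|M\|_{\mathrm{op}}=O(1)$ with no logs, so both arguments ultimately rest on the same concentration fact for linear functions in the Dobrushin regime. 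Your route buys a slightly cleaner bound ($O(n)$ for the Dirichlet form with no union bound and no bad-event bookkeeping), while the paper's is more elementary in that it never introduces the covariance matrix; the laziness issue you flag is already handled in the paper by the remark that Remark~13.13 of \cite{LevinPW09} applies to the absolute spectral gap.
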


\subsection{Technical Overview}
We will use tools from Chapter 13 of \cite{LevinPW09} to obtain the variance bounds of this section. The statistics we use in our testing algorithms are degree-2 polynomials of the Ising model. 
To begin with, we can bound the variance of linear (or degree-1) polynomials to be $O(n)$ (standard deviation $O(\sqrt{n})$) via the Lipschitz concentration lemma (Lemma~\ref{lem:lipschitz-lemma}). However, using this lemma for degree-2 yields a large variance bound of $O(n^3)$. An informal explanation for why this is the case is as follows. The bounded differences constants required to employ Lemma \ref{lem:lipschitz-lemma} for degree-2 polynomials are bounds on degree-1 polynomials of the Ising model which can be $O(n)$ in the worst case. However, since we know that the standard deviation of degree-1 polynomials is $O(\sqrt{n})$ from the above discussion, we can leverage this to improve the variance bound for degree-2 polynomials from $O(n^3)$ to $O(n^2)$.

\noindent At a high level the technique to bound the variance of a function $f$ on a distribution $\mu$ involves first defining a reversible Markov chain with $\mu$ as its stationary distribution. By studying the mixing time properties (via the spectral gap) of this Markov chain along with the second moment of the variation of $f$ when a single step is taken under this Markov chain we obtain bounds on the second moment of $f$ which consequently yield the desired variance bounds.
The Markov chain in consideration here will be the Glauber dynamics, the canonical Markov chain for sampling from an Ising model.
Glauber dynamics define a reversible, ergodic Markov chain whose stationary distribution is identical to the corresponding Ising model. 
In many relevant settings (including the high-temperature regime), the dynamics are fast mixing (i.e., they mix in time $O(n\log n)$) and hence offer an efficient way to sample from Ising models.
As stated in Section \ref{sec:preliminaries}, the Glauber dynamics are reversible and ergodic for Ising models.

Let $M$ be the reversible transition matrix for the Glauber dynamics on some Ising model $p$ and let $\g_*$ be the absolute spectral gap of $M$. The first step is to obtain a lower bound on $\g_*$. To do so, we will use the following Lemma.
\begin{lemma}
	\label{lem:hamm-contraction}
	Let $x,y \in \Omega = \{ \pm 1 \}^n$ such that $\d_H(x,y) = 1$. Let $X,Y$ denote the states obtained by executing one step of the Glauber dynamics in a high-temperature Ising model, starting at $x$ and $y$ respectively. Then, there exists a coupling of $X,Y$ such that
	$$\E\left[ \d_H(X,Y) \right] \le \left(1 - \frac{\eta}{n}\right).$$
\end{lemma}
Lemma \ref{lem:hamm-contraction} follows almost directly from the proof of Theorem 15.1 in \cite{LevinPW09} after a simple generalization to the Ising model as we consider it here (non-uniform edge parameters).

\begin{claim}
\label{clm:spectral_gap}
In the high-temperature regime/under Dobrushin conditions, $\g_* \geq \Omega\left(\frac{1}{n}\right)$ under an arbitrary external field.
\end{claim}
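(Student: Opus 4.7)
The plan is to apply the standard path-coupling recipe to the single-site heat-bath Glauber dynamics, and then convert the resulting mixing time bound into a spectral gap bound via the well-known relationship between the two for reversible chains. Concretely, I would consider two configurations $x,y \in \{\pm 1\}^V$ that differ at a single site $w$, and couple the Glauber updates by selecting the same site $v$ in both chains and using the optimal (maximal) coupling of the two conditional distributions $\Pr[X_v \mid x_{-v}]$ and $\Pr[X_v \mid y_{-v}]$. When $v = w$ the chains become identical, and when $v$ is a neighbor of $w$ the chains disagree at $v$ after the update with probability at most $\tanh(|\theta_{vw}|)$; for all other $v$ the disagreement is unchanged. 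Summing and using Dobrushin's condition $\max_v \sum_{u \neq v} \tanh(|\theta_{uv}|) \le 1 - \eta$, the expected Hamming distance contracts by a factor of at most $1 - \eta/n$ in one step.

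Next I would invoke the path-coupling theorem (Theorem 15.1 in \cite{LevinPW09}) to extend this one-step contraction from adjacent configurations to arbitrary pairs, obtaining a Wasserstein contraction $W_1(P(x,\cdot),P(y,\cdot)) \le (1-\eta/n)\, d_H(x,y)$ for all $x,y$. Iterating and taking $t = O((n/\eta)\log n)$ steps drives the total variation distance between $P^t(x,\cdot)$ and the stationary Ising measure below $1/4$, yielding the standard mixing time bound $t_{\mathrm{mix}}(1/4) = O(n \log n)$.

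To pass from mixing time to spectral gap, I would use the reversibility of heat-bath Glauber dynamics together with the inequality $t_{\mathrm{rel}} - 1 \le t_{\mathrm{mix}}(1/4)/\log 2$ from \cite{LevinPW09}, where $t_{\mathrm{rel}} = 1/\gamma$ is the relaxation time corresponding to the (non-absolute) spectral gap $\gamma$. This immediately gives $\gamma \ge \Omega(1/(n\log n))$. To upgrade this to a bound on the \emph{absolute} spectral gap $\gamma_*$, I would observe that each single-site heat-bath update $P_v$ is a positive semidefinite operator with respect to the inner product weighted by the stationary distribution (it acts as a $\pi$-orthogonal projection onto functions that depend on $x_{-v}$ only), and therefore the averaged kernel $P = \tfrac{1}{n}\sum_v P_v$ is also PSD; in particular all its eigenvalues lie in $[0,1]$, so $\gamma_* = \gamma$.

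The only mildly non-routine ingredient is the verification of the one-step Hamming contraction under Dobrushin's condition with the hyperbolic-tangent formulation in Definition~\ref{def:dobrushin-new}, and the PSD property that lets us identify $\gamma_*$ with $\gamma$. Both steps are standard, and I expect the bulk of the work to be bookkeeping rather than substantive; no step is a genuine obstacle. The external field $\theta_v$ plays no role in the contraction argument since only the neighbor-induced disagreement terms $\tanh(|\theta_{uv}|)$ appear in the coupling estimate.
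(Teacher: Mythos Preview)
Your proposal is correct and follows essentially the same route as the paper: derive the $O(n\log n)$ mixing time from path coupling under Dobrushin's condition (the paper simply cites Theorem~15.1 of \cite{LevinPW09} for this, while you spell out the one-step contraction), then convert to a spectral gap bound via Theorem~12.4 of \cite{LevinPW09}. One minor point: in \cite{LevinPW09} the relaxation time is already defined as $1/\gamma_*$ in terms of the \emph{absolute} spectral gap, so Theorem~12.4 directly yields $\gamma_* \ge \Omega(1/(n\log n))$ and your PSD argument, while correct, is not needed.
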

\begin{proof}
	We will use Theorem 13.1 of~\cite{LevinPW09} applied on the metric space $\Omega = \{ \pm 1 \}^n$ with the metric being the Hamming distance, i.e. $\rho(x,y) = \d_H(x,y)$. To obtain a bound on the absolute spectral gap of $M$ by this Theorem, we would first need to bound, for every $x,y \in \Omega$, the expected contraction in Hamming distance under the best coupling of two executions of one step of the Glauber dynamics, one starting from state $x$ and the other from $y$. Employing the path coupling theorem of Bubley and Dyer (Theorem 14.6 of \cite{LevinPW09}), we only need to show contraction for states $x$ and $y$ which are adjacent in the Glauber dynamics chain, i.e. $x,y$ for which $\d_H(x,y)=1$ which we have from Lemma \ref{lem:hamm-contraction}. Hence we get that $\g_* \ge \frac{\eta}{n}$ implying the claim.
\end{proof}

For a function $f$, define the Dirichlet form as 
$$\mathcal{E}(f) = \frac12\sum_{x,y \in \{\pm 1\}^n} [f(x) - f(y)]^2 \p(x)M(x,y).$$
This can be interpreted as the expected square of the difference in the function, when a step is taken at stationarity. That is,
\begin{equation}
\mathcal{E}(f) = \frac12 \E_{\substack{X \sim p, \\ Y \sim M(X,\cdot)}}\left[(f(X) - f(Y))^2\right] \label{eq:dirichlet}
\end{equation}
where $x$ is drawn from the Ising distribution and $y$ is obtained by taking a step in the Glauber dynamics starting from $x$.
We now state a slight variant of Remark 13.13 of~\cite{LevinPW09}, which we will use as Lemma~\ref{lem:dirichlet_variance_bound}.
\begin{lemma}
\label{lem:dirichlet_variance_bound}
For a reversible transition matrix $P$ on state space $\Omega$ with stationary distribution $\pi$, let 
$$\mathcal{E}(f) := \frac{1}{2}\sum_{x,y \in \Omega} (f(x) - f(y))^2\pi(x)P(x,y),$$
where $f$ is a function on $\Omega$ such that $\Var_{\pi}(f) > 0$. Also let $\g_*$ be the absolute spectral gap of $P$.
Then
$$\g_* \le \frac{\mathcal{E}(f)}{\Var_{\pi}(f)}.$$
\end{lemma}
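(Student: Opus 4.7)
The plan is to prove this via the spectral decomposition of the reversible transition matrix $P$, recovering the standard Poincar\'e inequality and then relating it to the absolute spectral gap. Since $P$ is reversible with respect to $\pi$, it is self-adjoint with respect to the weighted inner product $\langle g,h\rangle_\pi = \sum_x \pi(x)g(x)h(x)$. Consequently, $P$ admits an orthonormal (w.r.t.\ $\langle\cdot,\cdot\rangle_\pi$) basis of real eigenfunctions $\phi_1, \phi_2, \ldots, \phi_{|\Omega|}$ with real eigenvalues $1 = \lambda_1 \ge \lambda_2 \ge \cdots \ge \lambda_{|\Omega|} \ge -1$, where $\phi_1 \equiv 1$ corresponds to the stationary direction.

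The next step is to expand $f$ as $f = \sum_i \hat{f}_i \phi_i$ with $\hat{f}_i = \langle f, \phi_i\rangle_\pi$. Under this expansion, $\mathbb{E}_\pi[f] = \hat{f}_1$ and hence $\Var_\pi(f) = \sum_{i \ge 2} \hat{f}_i^2$ by Parseval. For the Dirichlet form, I would first rewrite the defining sum as $\mathcal{E}(f) = \langle f, (I-P)f\rangle_\pi$; the factor of $1/2$ and the squared difference combine cleanly using the detailed balance equation $\pi(x)P(x,y) = \pi(y)P(y,x)$ after expanding $(f(x)-f(y))^2$. Applying $P\phi_i = \lambda_i \phi_i$ then gives the diagonal form $\mathcal{E}(f) = \sum_i (1-\lambda_i)\hat{f}_i^2 = \sum_{i \ge 2}(1-\lambda_i)\hat{f}_i^2$, where the $i=1$ term vanishes.

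The final step is a simple pointwise bound on coefficients. Writing $\lambda_2 = \max_{i \ge 2} \lambda_i$ for the second-largest eigenvalue, we have $1 - \lambda_i \ge 1 - \lambda_2$ for every $i \ge 2$, so
\[
\mathcal{E}(f) \;=\; \sum_{i \ge 2}(1-\lambda_i)\hat{f}_i^2 \;\ge\; (1-\lambda_2)\sum_{i \ge 2}\hat{f}_i^2 \;=\; (1-\lambda_2)\Var_\pi(f).
\]
Since $\Var_\pi(f) > 0$ by hypothesis, this rearranges to $1 - \lambda_2 \le \mathcal{E}(f)/\Var_\pi(f)$. To conclude with the absolute spectral gap rather than the spectral gap, observe that $\gamma_* = 1 - \max_{i \ge 2} |\lambda_i|$, and since $\max_{i \ge 2} |\lambda_i| \ge \lambda_2$, we get $\gamma_* \le 1 - \lambda_2 \le \mathcal{E}(f)/\Var_\pi(f)$, which is exactly the claim.

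There is no serious obstacle here; this is a textbook Poincar\'e-type argument. The only mild subtlety is the distinction between the spectral gap $1-\lambda_2$ and the absolute spectral gap $\gamma_*$, which could accommodate a large negative eigenvalue; the inequality $\gamma_* \le 1-\lambda_2$ is what allows the statement to be phrased in terms of $\gamma_*$ (at the cost of losing nothing when, as in many applications, the chain is lazy enough that $\lambda_2$ dominates in absolute value).
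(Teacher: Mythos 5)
Your proof is correct and is exactly the standard spectral-decomposition argument behind Remark 13.13 of Levin--Peres--Wilmer, which the paper simply cites rather than reproving. Your explicit observation that $\g_* \le 1-\lambda_2$ cleanly justifies the paper's passing remark that the bound transfers from the spectral gap to the absolute spectral gap.
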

\begin{note}
	Remark 13.13 in \cite{LevinPW09} states a bound on the spectral gap as opposed to the absolute spectral gap bound which we use here. However, the proof of Remark 13.13 also works for obtaining a bound on the absolute spectral gap $\g_*$.
\end{note}

\subsection{Bounding Variance of $Z'$ Under No External Field}
We prove Theorem \ref{thm:variance bound high temperature no external field yuval} now. We first begin by bounding the variance of linear functions of the form $l_c(x) = \sum_u c_u x_u$. 
\begin{lemma}
	\label{lem:variance-linear}
	Let $X \sim p$ where $p$ is an Ising model on $G=(V,E)$ with no external field. Then 
	$\Var\left[ l_c(X) \right] \le 16 \sum_u c_u^2 $.
\end{lemma}
\begin{proof}
	The proof follows directly from Lemma \ref{lem:lipschitz-lemma}. First note that $\E[l_c(X)] = 0$.
	\begin{align*}
	\Var\left[ l_c(X) \right] = \E\left[ l_c(X)^2 \right] = \int_{0}^{\infty} \Pr\left[l_c(X)^2 \ge t\right] dt \le \int_{0}^{\infty} 2\exp\left(-\frac{t}{8\sum_u c_u^2} \right)dt \le 16\sum_u c_u^2,
	\end{align*}
	where we used that $\E[X] = \int_0^{\infty} \Pr[X \ge t]dt$ for a non-negative random variable $X$ and then applied Lemma \ref{lem:lipschitz-lemma} followed by simple calculations.
\end{proof}

Now consider the function $f_c(x) = \sum_{u, v} c_{uv} x_u x_v$ where $c \in [-1,1]^{|V| \choose 2}$.

\begin{claim}
\label{clm:dirichlet_nofield}
For an Ising model under no external field, $\mathcal{E}(f_c) =  O(n)$.
\end{claim}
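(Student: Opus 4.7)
The plan is to unfold the definition $\mathcal{E}(f_c) = \tfrac{1}{2}\,\mathbb{E}[(f_c(X) - f_c(Y))^2]$, where $Y$ is one Glauber step from $X \sim p$, and exploit the fact that a single Glauber step alters at most one coordinate. If node $w$ is the site picked in that step, then
\begin{equation*}
f_c(Y) - f_c(X) \;=\; (Y_w - X_w)\sum_{v \neq w} c_{wv}\,X_v,
\end{equation*}
and $|Y_w - X_w| \le 2$. Averaging over the uniformly random choice of $w$ and over $X \sim p$ gives
\begin{equation*}
\mathcal{E}(f_c) \;\le\; \frac{2}{n}\sum_{w \in V} \mathbb{E}\!\left[\Bigl(\sum_{v \neq w} c_{wv} X_v\Bigr)^{\!2}\right].
\end{equation*}

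The next step is the key reduction: since $p$ has no external field, the distribution is invariant under the global sign flip $x \mapsto -x$, so $\mathbb{E}[X_v]=0$ for every $v$, and hence $\mathbb{E}\bigl[\sum_{v \neq w} c_{wv} X_v\bigr] = 0$. Consequently the second moment above equals the variance of the \emph{linear} statistic $L_w(X) := \sum_{v \neq w} c_{wv} X_v$. With $|c_{wv}| \le 1$, the Lipschitz constant of $L_w$ in coordinate $v$ is at most $2|c_{wv}| \le 2$, so $\sum_v l_v^2 = O(n)$. Applying Chatterjee's Lipschitz concentration lemma (Lemma~\ref{lem:lipschitz-lemma}), valid because $p$ is in the high-temperature regime, yields $\Var(L_w(X)) = O(n)$ for each $w$.

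Summing over the $n$ choices of $w$ produces
\begin{equation*}
\mathcal{E}(f_c) \;\le\; \frac{2}{n}\sum_{w \in V} O(n) \;=\; O(n),
\end{equation*}
which in particular is $\widetilde{O}(n)$ as claimed. I do not expect any real obstacle here: the only subtle point is recognizing that a one-step update of Glauber dynamics turns the bilinear $f_c$ into a linear statistic in the remaining $n-1$ variables, so that variance bounds for \emph{linear} functions of the Ising model (which are already available from Chatterjee's thesis in the high-temperature regime) can be imported directly. The no-external-field assumption is used precisely to kill the mean of $L_w$ so that the second moment reduces to the variance; without it, one would have to separately absorb $(\mathbb{E}[L_w])^2$, which is handled by a different argument in Theorem~\ref{thm:variance bound high temperature with external field yuval}.
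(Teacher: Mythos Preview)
Your argument is correct and in fact slightly cleaner than the paper's. Both proofs rest on the same observation: after one Glauber update at site $w$, the difference $f_c(X)-f_c(Y)$ equals $(X_w-Y_w)\,L_w(X)$ with $L_w(X)=\sum_{v\neq w}c_{wv}X_v$ a \emph{linear} functional, and both then invoke Lemma~\ref{lem:lipschitz-lemma} for linear statistics in high temperature. The difference is in how that lemma is used. The paper applies the \emph{tail} bound from Lemma~\ref{lem:lipschitz-lemma}, takes a union bound over the $n$ possible linear functions $L_w$, and splits $\E[(f_c(X)-f_c(Y))^2]$ into a high-probability event where $|L_w(X)|\le c\sqrt{n}\log n$ and a low-probability bad event bounded crudely by $n^2$; this produces the stated $\tilde O(n)$ with logarithmic factors. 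You instead apply the \emph{variance} part of Lemma~\ref{lem:lipschitz-lemma} directly to each $L_w$, use $\E[L_w]=0$ (from the sign-flip symmetry under no external field) to equate second moment and variance, and average over $w$. This avoids the union bound and the good/bad-event split entirely and yields $O(n)$ with no log factors, which is strictly sharper than what the paper records. Your remark that the no-external-field hypothesis is needed precisely to kill $\E[L_w]$ is also on point; the paper uses it at the same juncture (so that the tail bound is centered at zero), though it is less explicit about this.
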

\begin{proof}
	Recall the definition of $\mathcal{E}(f_c)$. Suppose node $u$ was selected by the Glauber dynamics when updating $X$ to $Y$. In this case, denote $f_c(X) - f_c(Y) = \nabla_u f_c(X)$.  $\nabla_u f_c(X)$ is either equal to $\sum_v c_{uv}X_v$  or 0 depending on whether node $u$ flipped sign or not during the Glauber update. Denote, by $E_u$, the event that node $u$ is selected by Glauber dynamics when updating $X$.
	First, we bound $\E[\nabla_u f_c(X)^2 | E_u]$.
	\begin{align}
	\E\left[(\nabla_u f_c(X))^2 | E_u \right] &\le \E\left[ \left(\sum_v c_{uv}X_v\right)^2 | E_u \right] = \E\left[ \left(\sum_v c_{uv}X_v\right)^2  \right] \label{eq:diri1}\\
	&= \Var\left( \sum_v c_{uv}X_v \right) \le  16n, \label{eq:diri2}
	\end{align}
	where (\ref{eq:diri1}) follows because $(\nabla_u f_c(X))^2 \le  \left(\sum_v c_{uv}X_v\right)^2$ with probability 1, and in (\ref{eq:diri2}) we used Lemma \ref{lem:variance-linear}.
Now,
\begin{align}
\E\left[(f_c(X) - f_c(Y))^2 \right] &= \sum_{u \in V} \frac{1}{n}\E\left[(f_c(X) - f_c(Y))^2 |  E_u\right] \label{eq:diri3}\\
&= \sum_{u \in V} \frac{1}{n}\E\left[( \nabla_u f_c(X))^2 |  E_u\right] \le \sum_{u \in V} 16 \le 16n \label{eq:diri4},
\end{align}
where (\ref{eq:diri3}) follows because $\Pr[E_u] = 1/n$ under the single-site Glauber dynamics we consider and (\ref{eq:diri4}) follows from (\ref{eq:diri2}).
\end{proof}

Claim \ref{clm:spectral_gap} together with Claim \ref{clm:dirichlet_nofield} are sufficient to conclude an upper bound on the variance of $f_c$, by using Lemma \ref{lem:dirichlet_variance_bound}, thus giving us Theorem \ref{thm:variance bound high temperature no external field yuval}.

\subsection{Bounding Variance of $Z'_{cen}$ Under Arbitrary External Field}
\label{sec:variance bound external field yuval}
Under the presence of an external field, we saw that we need to appropriately center our statistics to achieve low variance. The function $g_c(x)$ of interest now is defined over the 2-sample Ising model $p^{\otimes 2}$ and is of the form
$$g_{c}(x) = \sum_{u,v} c_{uv} \left(x_u^{(1)} - x_u^{(2)}\right) \left(x_v^{(1)} - x_v^{(2)}\right)$$
where now $x,y \in \{ \pm 1\}^{2|V|}$.
First, note that the 2-sample Ising model also satisfies Dobrushin's condition and hence the absolute spectral gap for $p^{\otimes 2}$ is also at least $ \Omega(1/n)$ in this setting.
Now we bound $\mathcal{E}(g_{c})$.

We first begin by bounding the variance of linear functions of the form $ll_c(x) = \sum_u c_u \left(x_u^{(1)} - x_u^{(2)} \right) $. 
The proof is identical to that of Lemma~\ref{lem:variance-linear}.
\begin{lemma}
	\label{lem:variance-linear-ext}
	Let $X \sim p$ where $p$ is an Ising model on $G=(V,E)$ with arbitrary external field. Then,
	$$\Var\left[ ll_c(X) \right] \le 16 \sum_u c_u^2. $$
\end{lemma}
%\begin{proof}
	%The proof follows directly from Lemma \ref{lem:lipschitz-lemma}. First note that $\E[ll_c(X)] = 0$.
	%\begin{align*}
	%\Var\left[ ll_c(X) \right] = \E\left[ ll_c(X)^2 \right] = \int_{0}^{\infty} \Pr\left[ll_c(X)^2 \ge t\right] dt \le \int_{0}^{\infty} 2\exp\left(-\frac{t}{8\sum_u c_u^2} \right)dt \le 16\sum_u c_u^2,
	%\end{align*}
		%where we used that $\E[X] = \int_0^{\infty} \Pr[X \ge t]dt$ for a non-negative random variable $X$ and then applied Lemma \ref{lem:lipschitz-lemma} followed by simple calculations.
%\end{proof}

This allows us to bound the Dirichlet form of bilinear statistics.
The proof is identical to the proof of Claim~\ref{clm:dirichlet_nofield}.
\begin{claim}
\label{clm:dirichlet_arbitraryfield}
For an Ising model under an arbitrary external field, $\mathcal{E}(g_{c}) = O(n)$.
\end{claim}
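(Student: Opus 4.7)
The plan is to mirror the no-external-field proof of Claim~\ref{clm:dirichlet_nofield}: bound the pointwise change $g_c(y)-g_c(x)$ under one step of Glauber by a linear statistic on $p^{\otimes 2}$, then apply Lemma~\ref{lem:lipschitz-lemma}. The reason this still works under an external field is that the ``doubled'' statistic $g_c$ is built out of differences $X_v^{(1)}-X_v^{(2)}$, which have mean zero for free since $X^{(1)}$ and $X^{(2)}$ are i.i.d., independent of whether node potentials are present.

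First I would verify that $p^{\otimes 2}$, viewed as an Ising model on the disjoint union $G^{(1)}\cup G^{(2)}$ with no edges between the copies, lies in the high-temperature regime of Definition~\ref{def:dobrushin-new} whenever $p$ does (the Dobrushin row sums are unchanged). Consequently, both the spectral-gap bound of Claim~\ref{clm:spectral_gap} and the Lipschitz concentration bound of Lemma~\ref{lem:lipschitz-lemma} apply to it.

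A single Glauber step on $p^{\otimes 2}$ picks a uniformly random node $(w,i)\in V\times\{1,2\}$ and resamples $y_w^{(i)}$. Since only the terms of $g_c$ that touch index $w$ change, a direct calculation gives
\[
g_c(y)-g_c(x) \;=\; s_i\,\bigl(y_w^{(i)}-x_w^{(i)}\bigr)\cdot L_w,\qquad L_w \;\triangleq\; C\!\sum_{v\neq w} c_{wv}\bigl(x_v^{(1)}-x_v^{(2)}\bigr),
\]
with $s_1=+1$, $s_2=-1$ and $C\in\{1,2\}$ absorbing whether the sum in $g_c$ is read over ordered or unordered pairs. Hence $(g_c(y)-g_c(x))^2 \le 4\,L_w^2$. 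The main step is then to bound $\E[L_w^2]$: $L_w$ is a linear function of the $2n$ spins of $p^{\otimes 2}$, each coordinate flip changes $L_w$ by $O(1)$, so $\sum_j l_j^2 = O(n)$; and $\E[L_w]=0$ by identical distribution of the two copies, so the expectation of each $X_v^{(1)}-X_v^{(2)}$ vanishes. Lemma~\ref{lem:lipschitz-lemma} then yields $\E[L_w^2]=\Var(L_w)=O(n)$.

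Finally, conditioning on the update site $(w,i)$---which is drawn uniformly from $2n$ options, independent of $x$---and averaging gives $\E[(g_c(y)-g_c(x))^2] \le 4\,\max_w \E[L_w^2] = O(n)$, so $\mathcal{E}(g_c) = \tfrac12\,\E[(g_c(y)-g_c(x))^2] = \widetilde{O}(n)$, exactly what the claim asserts. Combined with $\gamma_* = \widetilde{\Omega}(1/n)$ for $p^{\otimes 2}$ via Lemma~\ref{lem:dirichlet_variance_bound}, this will feed into the desired $\Var(g_c) = \widetilde{O}(n^2)$ bound of Theorem~\ref{thm:variance bound high temperature with external field yuval}. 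The one conceptual subtlety I expect to flag is that the two-sample recentering (not merely subtracting per-edge means) is what forces $L_w$ to be mean zero in the presence of an arbitrary external field---without this trick, $L_w$ could have mean of order $nh$, whereupon Lemma~\ref{lem:lipschitz-lemma} would only give $\Var$ control and the bound on $\E[L_w^2]$ would degrade to $\Theta(n^2)$, killing the improvement.
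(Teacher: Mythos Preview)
Your proposal is correct and follows essentially the same strategy as the paper: identify that one Glauber step changes $g_c$ by a linear function of the Ising spins on $p^{\otimes 2}$, then invoke Lemma~\ref{lem:lipschitz-lemma}. The one technical difference is that the paper uses the \emph{concentration} half of Lemma~\ref{lem:lipschitz-lemma}---union-bounding over the $n+1$ possible linear functions to conclude $|g_c(x)-g_c(y)|\le c\sqrt{n}\log n$ with probability $1-O(1/n^2)$, then splitting $\E[(g_c(x)-g_c(y))^2]$ into the tail and bulk---whereas you go straight to the \emph{variance} half of the same lemma, using $\E[L_w]=0$ to get $\E[L_w^2]=\Var(L_w)=O(n)$. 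Your route is a bit cleaner (no union bound, no case split, no log factors), but the underlying mechanism is identical.
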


Similar to before, Claim \ref{clm:spectral_gap} together with Claim \ref{clm:dirichlet_arbitraryfield} are sufficient to conclude an upper bound on the variance of $g_c$, by using Lemma \ref{lem:dirichlet_variance_bound}, thus giving us Theorem~\ref{thm:variance bound high temperature with external field yuval}.

\section{Lower Bounds}
\label{sec:lb}
\subsection{Dependences on $n$}
Our first lower bounds show dependences on $n$, the number of nodes, in the complexity of testing Ising models.

To start, we prove that uniformity testing on product measures over a binary alphabet requires $\Omega(\sqrt{n}/\ve)$ samples.
Note that a binary product measure corresponds to the case of an Ising model with no edges.
This implies the same lower bound for identity testing, but \emph{not} independence testing, as a product measure always has independent marginals, so the answer is trivial.
\begin{theorem}
\label{thm:product-lb}
There exists a constant $c > 0$ such that any algorithm, given sample access to an Ising model $p$ with no edges (i.e., a product measure over a binary alphabet), which distinguishes between the cases $p = \U$ and $\dskl(p, \U) \geq \ve$ with probability at least $99/100$ requires $k \geq c\sqrt{n}/\ve$ samples.
\end{theorem}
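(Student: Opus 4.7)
The plan is to apply a variant of Le Cam's two-point method (using the chi-squared bound) with a family of product Ising models obtained by randomly biasing each node. Define the parameter $c = \Theta(\sqrt{\epsilon/n})$ and, for each $\sigma \in \{\pm 1\}^n$, let $p_\sigma$ be the Ising model with $\theta_e^{p_\sigma} = 0$ for all $e$ and $\theta_v^{p_\sigma} = \sigma_v c$ for all $v$. Let $\sigma$ be drawn uniformly from $\{\pm 1\}^n$, so the sign bits are i.i.d.\ Rademacher. By the SKL formula (\ref{eq:dskl}), $\dskl(p_\sigma, \mathcal{U}_n) = \sum_v \theta_v^{p_\sigma} \mu_v^{p_\sigma} = n\,c\tanh(c)$, and by choosing the constant inside $c = \Theta(\sqrt{\epsilon/n})$ appropriately (using $c\tanh(c) \geq c^2/2$ for small $c$) we guarantee $\dskl(p_\sigma, \mathcal{U}_n) \geq \epsilon$ for every $\sigma$.

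Next I would reduce distinguishing between $\mathcal{U}_n$ and a random $p_\sigma$ to bounding the total variation distance between the $k$-fold mixtures $M_k := \mathbb{E}_\sigma\bigl[p_\sigma^{\otimes k}\bigr]$ and $\mathcal{U}_n^{\otimes k}$. By the standard chi-squared bound for mixtures,
\begin{equation*}
\chi^2(M_k,\, \mathcal{U}_n^{\otimes k}) + 1 \;=\; \mathbb{E}_{\sigma,\sigma'}\!\left[\prod_v \left(\sum_{x_v \in \{\pm 1\}} \frac{p_{\sigma,v}(x_v)\,p_{\sigma',v}(x_v)}{\mathcal{U}_v(x_v)}\right)^{\!k}\right],
\end{equation*}
where the factorization uses that each $p_\sigma$ is a product measure. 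A direct computation shows that the inner sum equals $1+\tanh^2(c)$ when $\sigma_v = \sigma'_v$ and $1-\tanh^2(c)$ when $\sigma_v \neq \sigma'_v$. Since the sign bits are independent across $v$,
\begin{equation*}
\chi^2(M_k,\, \mathcal{U}_n^{\otimes k}) + 1 \;=\; \left(\frac{(1+\tanh^2(c))^k + (1-\tanh^2(c))^k}{2}\right)^{\!n} \;\leq\; \cosh\!\bigl(k\tanh^2(c)\bigr)^n \;\leq\; \exp\!\bigl(n k^2 \tanh^4(c)/2\bigr).
\end{equation*}

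Plugging in $\tanh^2(c) \leq c^2 = O(\epsilon/n)$ gives $\chi^2(M_k,\mathcal{U}_n^{\otimes k}) \leq \exp(O(k^2\epsilon^2/n)) - 1$. For $k \leq c_0 \sqrt{n}/\epsilon$ with a sufficiently small constant $c_0 > 0$, this is bounded by a constant strictly less than $(1/50)^2$, so by Pinsker/TV--chi-squared, $d_{\mathrm{TV}}(M_k, \mathcal{U}_n^{\otimes k}) < 1/50$. A Le Cam-style argument then shows that no tester can succeed with probability $\geq 99/100$ on both the null (samples from $\mathcal{U}_n$) and the alternative (samples from a random $p_\sigma$), yielding the claimed $\Omega(\sqrt{n}/\epsilon)$ lower bound.

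The only delicate step is the chi-squared computation and the factorization that exploits both the product structure of $p_\sigma$ and the independence of the $\sigma_v$'s; this is what converts a naive $\Omega(1/\epsilon)$ bound (obtainable from a single shared sign bit) into the claimed $\Omega(\sqrt{n}/\epsilon)$. Everything else is routine: checking that the small-$c$ Taylor expansions of $\tanh$ and $\cosh$ are in the right regime (which holds since $c \to 0$ for $\epsilon$ small or $n$ large, and $k\tanh^2(c) = O(1/\sqrt{n})$ throughout), and verifying the translation from chi-squared to TV to indistinguishability with the prescribed $99/100$ confidence.
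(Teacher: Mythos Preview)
Your proposal is correct and essentially identical to the paper's proof: the same construction (independent $\pm c$ perturbations of the node potentials with $c=\Theta(\sqrt{\ve/n})$), the same reduction to a chi-squared/second-moment bound against $\U^{\otimes k}$, and the same per-coordinate factorization yielding $\tfrac12\bigl((1+\tanh^2 c)^k+(1-\tanh^2 c)^k\bigr)=\tfrac12\bigl((1+\tanh^2 c)^k+\sech^{2k} c\bigr)$ per node. The only cosmetic difference is that the paper computes the second moment $\E_{\bar q}[\bar q/p]$ directly by summing over realizations of a single coordinate, whereas you use the Ingster two-copy expansion $\E_{\sigma,\sigma'}[\cdot]$; both routes arrive at the same expression and the same $\Omega(\sqrt{n}/\ve)$ conclusion.
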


Next, we show that any algorithm which tests uniformity of an Ising model requires $\Omega(n/\ve)$ samples.
In this case, it implies the same lower bounds for independence and identity testing.
\begin{theorem}
\label{thm:linear-lb}
There exists a constant $c > 0$ such that any algorithm, given sample access to an Ising model $p$, which distinguishes between the cases $p = \U$ and $\dskl(p, \U) \geq \ve$ with probability at least $99/100$ requires $k \geq cn/\ve$ samples.
This remains the case even if $p$ is known to have a tree structure and only ferromagnetic edges.
\end{theorem}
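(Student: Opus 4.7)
The plan is to apply Le Cam's two-point method using a random ferromagnetic matching. For $n$ even, let $M$ be a uniformly random perfect matching on $[n]$, and let $p_M$ denote the Ising model with edge parameter $\theta > 0$ on each edge of $M$, no other edges, and no external field. The underlying graph is a disjoint union of edges, hence a forest, and the couplings are ferromagnetic, matching the hypotheses of the theorem. By the forest structural lemma (Lemma~\ref{lem:trees-structural}), $\E_{p_M}[X_u X_v] = \tanh\theta$ on every matched pair, so by~(\ref{eq:dskl}), $\dskl(p_M, \U) = \tfrac{n}{2}\theta \tanh\theta$. Choosing $\theta = \Theta(\sqrt{\ve/n})$ makes this at least $\ve$ for every realization of $M$. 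To complete Le Cam's argument it then suffices to show $\dtv(\U^{\otimes k}, \E_M[p_M^{\otimes k}]) < 1/3$ for every $k \leq c n/\ve$; via $2 \dtv^2 \leq \chi^2$, this reduces to bounding
$$\chi^2\bigl(\E_M[p_M^{\otimes k}], \U^{\otimes k}\bigr) + 1 \;=\; \E_{M, M'}\bigl[R(M, M')^k\bigr], \qquad R(M,M') := \E_{X \sim \U}\!\left[\frac{p_M(X)\, p_{M'}(X)}{\U(X)^2}\right],$$
where $M, M'$ are independent uniform random matchings.

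The next step is a closed-form evaluation of $R(M, M')$. The multigraph $M \cup M'$ decomposes into cycles: each shared edge in $M \cap M'$ forms a doubled edge, which I treat as a ``length-$2$ cycle,'' and every component of $M \setminus M'$ combined with $M' \setminus M$ is an even cycle alternating between $M$- and $M'$-edges. Applying the transfer matrix to the uniform-$\theta$ Ising partition function on a single cycle of length $2\ell$ gives $(2\cosh\theta)^{2\ell} + (2\sinh\theta)^{2\ell}$. After cancelling the factors from $\U$ and multiplying across components, the expression collapses to
$$R(M,M') \;=\; \prod_{j} \bigl(1 + \alpha^{\ell_j}\bigr), \qquad \alpha := \tanh^2\theta = \Theta(\ve/n),$$
where $2\ell_j$ is the length of the $j$-th cycle in $M \cup M'$ (with shared edges contributing $\ell_j = 1$). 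The entire lower bound has thereby been reduced to a purely combinatorial question about the cycle-type of $M \cup M'$.

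The remaining step is the distributional analysis of $\E_{M,M'}\bigl[\prod_j (1+\alpha^{\ell_j})^k\bigr]$. A classical bijection identifies the joint distribution of the cycle counts $(N_\ell)_{\ell \geq 1}$ of $M \cup M'$ (shared edges counted as $\ell=1$) with the cycle-type of a uniformly random permutation on $[n/2]$; in particular, for fixed $\ell$, $N_\ell$ converges to $\mathrm{Poisson}(1/\ell)$, and the $N_\ell$'s are asymptotically independent across $\ell$. Using $1+x \leq e^x$ and the approximate Poisson MGFs,
$$\E\!\left[\prod_j (1+\alpha^{\ell_j})^k\right] \;\leq\; \E\!\left[\exp\!\Bigl(k \sum_\ell \alpha^\ell N_\ell\Bigr)\right] \;\leq\; \exp\!\Bigl(\sum_\ell \tfrac{1}{\ell}\bigl(e^{k\alpha^\ell}-1\bigr)\Bigr) \;=\; \exp\bigl(O(k\alpha)\bigr),$$
whenever $k\alpha = O(1)$, which is exactly the regime $k = O(n/\ve)$. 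Choosing the implicit constant small enough pushes $\chi^2$, and hence $\dtv$, below the required threshold. The main obstacle is making the ``approximately independent Poisson'' behaviour of the $N_\ell$'s rigorous: one must control the dependence they inherit from the constraint $\sum_\ell \ell N_\ell = n/2$, and verify that the higher-$\ell$ contributions (where $\alpha^\ell$ is small but $N_\ell$ could fluctuate upward) do not blow up the MGF. As foreshadowed in Section~\ref{sec:lbtech}, the dominant term is $N_1$ -- the number of edges shared between two uniform random matchings -- and a substantial portion of the work is a combinatorial handle on the joint law of overlap and alternating-cycle structure of a random pair of perfect matchings.
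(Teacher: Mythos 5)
Your construction and the reduction are exactly the paper's: a uniformly random perfect matching with ferromagnetic weight $\theta = \Theta(\sqrt{\ve/n})$ on each edge, Le Cam's method, the second-moment computation $\E_{M,M'}[R(M,M')^k]$, and the closed form $R(M,M') = \prod_j (1+\tanh^{2\ell_j}\theta)$ over the cycles of $M \cup M'$ all check out and coincide with the paper's Equation (\ref{eq:overlap-expectation}). The gap is the final combinatorial estimate, which you flag as ``the main obstacle'' but do not close, and the distributional input you propose to use is not correct as stated. The cycle counts of the union of two uniform perfect matchings are \emph{not} the cycle type of a uniform permutation of $[n/2]$: for instance the number of shared edges has mean $\tfrac{n}{2(n-1)} \to 1/2$, not $1$ (the correct asymptotics are $N_\ell \approx \mathrm{Poisson}(1/(2\ell))$; already for $n=4$ the two distributions disagree). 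More importantly, the inequality $\E[\exp(k\sum_\ell \alpha^\ell N_\ell)] \le \exp\bigl(\sum_\ell \tfrac{1}{\ell}(e^{k\alpha^\ell}-1)\bigr)$ is precisely the step that needs proof: you cannot substitute limiting independent-Poisson moment generating functions into a finite-$n$ expectation of an exponential without a quantitative domination argument, and none is given.

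The fix is much lighter than the full joint law you anticipate needing, and it is what the paper does. For every cycle with $\ell_j \ge 2$ one has $1+\alpha^{\ell_j} \le \exp(\alpha^{\ell_j}) \le \exp(\alpha^2)$, and there are at most $n/4$ such cycles, so their entire contribution to $R(M,M')^k$ is deterministically at most $\exp(\alpha^2 k n/4) = \exp(\Theta(k\ve^2/n))$, which is $O(1)$ for $k = O(n/\ve)$ --- no distributional information about long cycles is required. Only $N_1$, the number of shared edges, needs a probabilistic bound; the tail estimate $\Pr[N_1 \ge z] \le \prod_{i=1}^{z}\tfrac{n-2i+2}{2(n-2i+1)} \le (2/3)^z$ follows by iterating Markov's inequality on the conditional overlap, after which $\E[\exp(k\alpha N_1)]$ is a convergent geometric series precisely when $k\alpha < \log(3/2)$, i.e.\ $k \le c n/\ve$. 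Replacing your Poisson heuristic with these two elementary bounds completes the argument.
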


The lower bounds use Le Cam's two point method which constructs a family of distributions $\mathcal{P}$ such that the distance between any $P \in \mathcal{P}$ and a particular distribution $Q$ is large (at least $\ve$). 
But given a $P \in \mathcal{P}$ chosen uniformly at random, it is hard to distinguish between $P$ and $Q$ with at least 2/3 success probability unless we have sufficiently many samples. 

Our construction for product measures is inspired by Paninski's lower bound for uniformity testing \cite{Paninski08}.
We start with the uniform Ising model and perturb each node positively or negatively by $\sqrt{\ve/n}$, resulting in a model which is $\ve$-far in $\dskl$ from $\U$. 
The proof appears in Section \ref{sec:product-lb}.

Our construction for the linear lower bound builds upon this style of perturbation.
In the previous construction, instead of perturbing the node potentials, we could have left the node marginals to be uniform and perturbed the edges of some fixed, known matching to obtain the same lower bound.
To get a linear lower bound, we instead choose a \emph{random} perfect matching, which turns out to require quadratically more samples to test.
Interestingly, we only need ferromagnetic edges (i.e., positive perturbations), as the randomness in the choice of matching is sufficient to make the problem harder.
Our proof is significantly more complicated for this case, and it uses a careful combinatorial analysis involving graphs which are unions of two perfect matchings. 
The lower bound is described in detail in Section \ref{sec:linear-lb}.

\begin{remark}
\label{rem:lb works for tv}
Similar lower bound constructions to those of Theorems \ref{thm:product-lb} and \ref{thm:linear-lb} also yield $\Omega(\sqrt{n}/\ve^2)$ and $\Omega(n/\ve^2)$ for the corresponding testing problems when $\dskl$ is replaced with $\dtv$.
In our constructions, we describe families of distributions which are $\ve$-far in $\dskl$.
This is done by perturbing certain parameters by a magnitude of $\Theta(\sqrt{\ve/n})$.
We can instead describe families of distributions which are $\ve$-far in $\dtv$ by performing perturbations of $\Theta(\ve/\sqrt{n})$, and the rest of the proofs follow similarly.
\end{remark}

\subsection{Dependences on $h, \beta$}
Finally, we show that dependences on the $h$ and $\b$ parameters are, in general, necessary for independence and identity testing.
%Recall that $h$ and $\b$ are upper bounds on the absolute values of the node and edge parameters, respectively.
%Our constructions are fairly simple, involving just one or two nodes, and the results are stated in Theorem~\ref{thm:bh-lb}.

\begin{theorem}
\label{thm:bh-lb}
There is a linear lower bound on the parameters $h$ and $\b$ for testing problems on Ising models.
More specifically,
\begin{itemize}
\item There exists a constant $c>0$ such that, for all $\ve < 1$ and $\b \geq 0$, any algorithm, given sample access to an Ising model $p$, which distinguishes between the cases $p \in \I$ and $\dskl(p,\I) \geq \ve$ with probability at least $99/100$ requires $k \geq c\b/\ve$ samples.
\item There exists constants $c_1, c_2>0$ such that, for all $\ve < 1$ and $\b \geq c_1 \log (1/\ve)$, any algorithm, given a description of an Ising model $q$ with no external field (i.e., $h = 0$) and has sample access to an Ising model $p$, and which distinguishes between the cases $p = q$ and $\dskl(p,q) \geq \ve$ with probability at least $99/100$ requires $k \geq c_2\b/\ve$ samples.
\item There exists constants $c_1, c_2>0$ such that, for all $\ve < 1$ and $h \geq c_1 \log (1/\ve)$, any algorithm, given a description of an Ising model $q$ with no edge potentials(i.e., $\b = 0$) and has sample access to an Ising model $p$, and which distinguishes between the cases $p = q$ and $\dskl(p,q) \geq \ve$ with probability at least $99/100$ requires $k \geq c_2 h/\ve$ samples.
\end{itemize}
\end{theorem}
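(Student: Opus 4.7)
The plan is to establish each of the three lower bounds via Le Cam's two-point method applied to very simple Ising-model constructions on one or two nodes. For each part I exhibit a ``yes'' distribution and a ``no'' distribution, compute both $\dskl$ and a suitable information-theoretic distance ($\dtv$ or $\chi^{2}$) between them, and then invoke the standard observation that $k$ samples are insufficient to distinguish two distributions whose $k$-fold product measures are within, say, $0.01$ in total variation.

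For the first bullet I would use the two-node Ising model $p_\th$ with a single edge of parameter $\th\le\b$ and no external field, pitted against $q=\U$, which lies in $\I$. Symmetry forces the node marginals of $p_\th$ to vanish, so the nearest product distribution is uniform itself and $(\ref{eq:dskl})$ collapses to $\dskl(p_\th,\I)=\th\tanh\th$. An elementary calculation also gives $\dtv(p_\th,\U)=\tfrac12\tanh\th$. Choosing $\th$ to saturate the constraints $\th\le\b$ and $\th\tanh\th\gtrsim\ve$ and applying Le Cam's bound $\dtv(p^{\otimes k},q^{\otimes k})\le k\,\dtv(p,q)$ yields $k=\Omega(1/\tanh\th)=\Omega(\b/\ve)$ in the regime where $\b$ is the binding constraint on $\th$.

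For the second and third bullets the idea is to perturb a single parameter by a small amount so that the two Ising models are $\ve$-far in $\dskl$ but information-theoretically very close. For bullet two I would take $q$ to be the two-node Ising with edge parameter $\b$ and $p$ the same model with edge parameter $\b-\d$. A direct partition-function computation gives the exact identity
\[
\chi^{2}(p,q)=\frac{\sinh^{2}\d}{\cosh^{2}(\b-\d)},
\]
together with $\dskl(p,q)=\d\bigl(\tanh\b-\tanh(\b-\d)\bigr)$. For large $\b$ both quantities behave like $\Theta(\d^{2}e^{-2\b})$, and the hypothesis $\b\ge c_1\log(1/\ve)$ is exactly what is required in order to set $\d$ so that $\dskl=\Theta(\ve)$ while the perturbed parameter $\b-\d$ remains in $[-\b,\b]$. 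Plugging the resulting $\chi^{2}$ into the chi-squared form of Le Cam, $k\gtrsim 1/\chi^{2}(p,q)$, then delivers the claimed $\Omega(\b/\ve)$ bound. Bullet three is handled identically on a single node: take $q$ to have external field $h$ and $p$ to have field $h-\d$, and reuse the $\sech^{2}$ Taylor expansion with $h$ in place of $\b$.

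The main obstacle I expect is tightening the chi-squared calculation for bullets two and three. Both $\chi^{2}$ and $\dskl$ are quadratic in $\d$ to leading order, and naively substituting $\dskl=\ve$ gives only an $\Omega(1/\ve)$ lower bound rather than the promised $\Omega(\b/\ve)$; teasing out the extra factor of $\b$ requires carefully tracking the $\cosh^{-2}(\b-\d)\sim 4e^{-2(\b-\d)}$ factor and balancing it against the parameter-range constraint $\d\le 2\b$, which is what pins the construction to the regime $\b=\Theta(\log(1/\ve))$ and beyond. A smaller secondary point is the boundary behavior in bullet one when $\b$ is so small that no construction in the allowed class can even reach $\dskl\ge\ve$, in which case the lower bound is vacuously true.
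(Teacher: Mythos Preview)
Your constructions for bullets two and three are exactly the paper's, but the $\chi^{2}$ analysis you outline will not close, and the obstacle you flag is in fact fatal for $\chi^{2}(p,q)$. To get the extra factor of $\b$ you must take $\d$ comparable to $\b$ (this is precisely what forces $\b\gtrsim\log(1/\ve)$), and in that range your own formula gives $\chi^{2}(p,q)=\sinh^{2}\d/\cosh^{2}(\b-\d)\ge\tanh^{2}(\b/2)=\Omega(1)$, which yields only a constant lower bound. The clean route---and what the paper does---is to use total variation directly: one has exactly
\[
\dtv(p,q)=\tfrac12\bigl|\tanh\b-\tanh(\b-\d)\bigr|,\qquad \dskl(p,q)=\d\bigl(\tanh\b-\tanh(\b-\d)\bigr)=2\d\cdot\dtv(p,q),
\]
so once you verify (by continuity, since $\dskl\le\b e^{-\b}\le\ve$ at $\d=\b/2$ and $\dskl=\b\tanh\b\ge\ve$ at $\d=\b$) that some $\d\in[\b/2,\b]$ achieves $\dskl=\ve$, you immediately get $\dtv=\ve/(2\d)=O(\ve/\b)$ and Le Cam gives $k=\Omega(\b/\ve)$. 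Bullet three is identical with $h$ in place of $\b$. (Incidentally, the reversed divergence $\chi^{2}(q,p)=\sinh^{2}\d/\cosh^{2}\b$ \emph{does} stay of order $\ve/\b$ near $\d\approx\b/2$ and would also work, but the $\dtv$ identity is simpler.)

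For bullet one your construction has a structural gap that no sharper estimate can fix. With no external field, the two-node model $p_\th$ obeys $\dskl(p_\th,\I)=\th\tanh\th$ and $\dtv(p_\th,\U)=\tfrac12\tanh\th$, so the Le Cam bound is $\Omega(\th/\ve)$ where $\th$ is determined solely by $\th\tanh\th=\ve$; this gives $\th=\Theta(\sqrt\ve)$ for small $\ve$, independent of $\b$. Pushing $\th$ up to $\b$ instead makes $\dtv=\tfrac12\tanh\b=\Theta(1)$ for large $\b$ and collapses the bound to $\Omega(1)$. The paper's construction is different in a way that matters: it places a common external field $\tau$ on both nodes and pits the product model $(\tau,\tau,0)$ against $(\tau,\tau,\b)$. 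As $\tau\to\infty$ both distributions concentrate on $(+1,+1)$ and the edge-marginal gap tends to $0$, while at $\tau=0$ the gap is $\tanh\b$; by continuity one can tune $\tau$ so that the gap equals $\ve/\b$, giving $\dskl=\ve$ and $\dtv=\Theta(\ve/\b)$. The external field is the free parameter that lets you keep the edge at its full strength $\b$ while driving the statistical distance down to $\ve/\b$, and there is no substitute for it in your field-free construction.
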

The construction and analysis appear in Section \ref{sec:bh-lb}.
This lower bound shows that the dependence on $\b$ parameters by our algorithms cannot be avoided in general, though it may be sidestepped in certain cases.
Notably, we show that testing independence of a forest-structured Ising model under no external field can be done using $\tilde{O}\left(\frac{n}{\ve} \right)$ samples (Theorem \ref{thm:forests-independence}). 

\subsection{Lower Bound Proofs} %from Section~\ref{sec:lb}
\label{sec:lbAppendix}
\subsubsection{Proof of Theorem \ref{thm:product-lb}}
\label{sec:product-lb}
This proof will follow via an application of Le Cam's two-point method.
More specifically, we will consider two classes of distributions $\mathcal{P}$ and $\mathcal{Q}$ such that:
\begin{enumerate}
\item $\mathcal{P}$ consists of a single distribution $p \triangleq \U$;
\item $\mathcal{Q}$ consists of a family of distributions such that for all distributions $q \in \mathcal{Q}$, $\dskl(p,q) \geq \ve$;
\item There exists some constant $c >0$ such that any algorithm which distinguishes $p$ from a uniformly random distribution $q \in \mathcal{Q}$ with probability $\geq 2/3$ requires $\geq c\sqrt{n}/\ve$ samples.
\end{enumerate}
The third point will be proven by showing that, with $k < c\sqrt{n}/\ve$ samples, the following two processes have small total variation distance, and thus no algorithm can distinguish them:
\begin{itemize}
\item The process $p^{\otimes k}$, which draws $k$ samples from $p$;
\item The process $\bar q^{\otimes k}$, which selects $q$ from $\mathcal{Q}$ uniformly at random, and then draws $k$ samples from $q$.
\end{itemize}
We will let $p^{\otimes k}_i$ be the process $p^{\otimes k}$ restricted to the $i$th coordinate of the random vectors sampled, and $\bar q^{\otimes k}_i$ is defined similarly.

We proceed with a description of our construction.
Let $\d = \sqrt{3\ve/2n}$.
As mentioned before, $\mathcal{P}$ consists of the single distribution $p \triangleq \U$, the Ising model on $n$ nodes with $0$ potentials on every node and edge.
Let $\mathcal{M}$ be the set of all $2^n$ vectors in the set $\{\pm \d\}^n$. 
For each $M = (M_1, \dots, M_n) \in \mathcal{M}$, we define a corresponding $q_M \in \mathcal{Q}$ where the node potential $M_i$ is placed on node $i$.

\begin{proposition}
\label{prop:product-far}
For each $q \in \mathcal{Q}$, $\dskl(q, \U) \geq \ve$.
\end{proposition}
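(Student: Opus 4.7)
The plan is to compute $\dskl(q_M, \U)$ in closed form using the convenient expression for symmetric KL divergence between Ising models given in Equation~(\ref{eq:dskl}), and then verify the lower bound via a Taylor expansion of $\tanh$.

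First, I would observe that since $q_M$ has no edge potentials, it is a product distribution whose $i$th marginal is a $Rademacher\bigl((1+\tanh(M_i))/2\bigr)$, so $\mu_i^{q_M} = \tanh(M_i)$ and $\mu_e^{q_M} = \tanh(M_u)\tanh(M_v)$. For $p = \U$ we have $\theta_v^p = 0$, $\theta_e^p = 0$, $\mu_v^p = 0$, and $\mu_e^p = 0$. Plugging into~(\ref{eq:dskl}), only the node terms survive, yielding
\[
\dskl(\U, q_M) \;=\; \sum_{v \in V}(0 - M_v)(0 - \tanh(M_v)) \;=\; \sum_{v\in V} M_v\tanh(M_v) \;=\; n\,\delta\tanh(\delta),
\]
where the final equality uses $M_v \in \{\pm\delta\}$ and the fact that $x\tanh(x)$ is an even function.

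Next, I would lower bound $\delta\tanh(\delta)$ using the elementary inequality $\tanh(x) \ge x - x^3/3$ valid for $x \ge 0$. This gives $\delta\tanh(\delta) \ge \delta^2 - \delta^4/3 = \delta^2(1 - \delta^2/3)$. Substituting $\delta^2 = 3\ve/(2n)$ produces
\[
n\,\delta\tanh(\delta) \;\ge\; \tfrac{3\ve}{2}\left(1 - \tfrac{\ve}{2n}\right).
\]
For the relevant regime (say $\ve \le 1$ and $n \ge 2$, which we can assume without loss of generality since the claimed lower bound is vacuous for tiny $n$), the right-hand side is at least $\ve$, completing the argument.

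I do not expect any serious obstacles here; the only subtle point is that our definition of $\dskl$ is symmetric, so computing $\dskl(\U, q_M)$ suffices (it equals $\dskl(q_M, \U)$), and the Taylor expansion step must be handled carefully to absorb the $\delta^4$ error term into the constant factor $3/2$ coming from the choice of $\delta$. This is why $\delta$ is set to $\sqrt{3\ve/(2n)}$ rather than $\sqrt{\ve/n}$, leaving just enough slack to conclude $\dskl \ge \ve$.
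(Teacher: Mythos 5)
Your proof is correct and follows essentially the same route as the paper: both reduce $\dskl(q,\U)$ to $n\,\delta\tanh(\delta)$ via the closed-form SKL expression and then apply an elementary Taylor-type lower bound on $\tanh(\delta)$ (the paper uses $\tanh(\delta)\geq 2\delta/3$ for $\delta\leq 1$, you use $\tanh(\delta)\geq \delta-\delta^3/3$), with the slack in the choice $\delta=\sqrt{3\ve/2n}$ absorbing the error term in either case. No further comment is needed.
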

\begin{proof}
Recall that $\dskl(q, \U) = \sum_{v \in V} \d \tanh(\d).$
Note that $\tanh(\d) \geq 2\d/3$ for all $\d \leq 1$, which can be shown using a Taylor expansion.
Therefore $\dskl(q, \U) \geq n\cdot \d \cdot 2\d/3 = 2n\d^2/3 = \ve.$
\end{proof}

The goal is to upper bound $\dtv(p^{\otimes k}, \bar q^{\otimes k})$.
Our approach will start with manipulations similar to the following lemma from \cite{AcharyaD15}, which follows by Pinsker's and Jensen's inequalities.
\begin{lemma}
\label{lem:pinsker-jensen}
For any two distributions $p$ and $q$, $2\dtv^2(p,q) \leq \dkl(q, p) \leq \log \E_{q} \left[ \frac{q}{p}\right].$
\end{lemma}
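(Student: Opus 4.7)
The plan is to chain together Pinsker's inequality and Jensen's inequality, exactly as the lemma's name and the surrounding text suggest. Both ingredients are already available: Pinsker's inequality is stated as Lemma~\ref{lem:pinsker}, and Jensen's inequality is a standard fact about concave functions.

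First I would apply Pinsker's inequality in the form $2\dtv^2(p,q) \le \dkl(q \| p)$, where I take the KL divergence with $q$ as the reference measure (note that $\dtv$ is symmetric, so the direction of KL on the right-hand side is a matter of choice). Next, I would rewrite the KL divergence as an expectation under $q$, namely
\[
\dkl(q \| p) \;=\; \E_{q}\!\left[\log \frac{q}{p}\right],
\]
and apply Jensen's inequality using the concavity of $\log$: for any nonnegative random variable $Y$, $\E[\log Y] \le \log \E[Y]$. Setting $Y = q/p$ (sampled under $q$) yields
\[
\E_{q}\!\left[\log \frac{q}{p}\right] \;\le\; \log \E_{q}\!\left[\frac{q}{p}\right].
\]
Combining the two displays gives the claim.

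There is no real obstacle here; the only thing to be a little careful about is the direction of the KL in Pinsker's inequality, which is why I take $\dkl(q \| p)$ rather than $\dkl(p \| q)$ so that the expectation on the right is under $q$ and matches the statement. The quantity $\log \E_{q}[q/p]$ is the Rényi-$2$ divergence of $q$ from $p$, and the lemma is essentially the standard bound $\dtv^2 \le \tfrac{1}{2}\,\dkl \le \tfrac{1}{2}\,D_2$, which in subsequent sections will be used to upper bound $\dtv(p^{\otimes k}, \bar q^{\otimes k})$ via a second-moment computation on $\E_{\bar q}[\bar q/p]$.
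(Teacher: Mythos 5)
Your proof is correct and is exactly the argument the paper intends: the paper simply cites the lemma from \cite{AcharyaD15} with the remark that it ``follows from Pinsker's and Jensen's inequalities,'' and your chaining of $2\dtv^2(p,q)\le \dkl(q\|p)=\E_q[\log(q/p)]\le \log\E_q[q/p]$ (using symmetry of $\dtv$ to pick the convenient direction of KL, then concavity of $\log$) fills in precisely those details.
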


Similarly:
$2\dtv^2(p^{\otimes k},\bar q^{\otimes k}) \leq \dkl(\bar q^{\otimes k},p^{\otimes k}) = n\dkl(\bar q_1^{\otimes k}, p_1^{\otimes k}) \leq  n \log \E_{\bar q^{\otimes k}_1}\left[\frac{\bar q^{\otimes k}_1}{p^{\otimes k}_1}\right].$

We proceed to bound the right-hand side.
To simplify notation, let $p_+ = e^{\d}/(e^{\d} + e^{-\d})$ be the probability that a node with parameter $\d$ takes the value $1$.
%Note that a node with parameter $-\d$ takes the value $1$ with probability $1 - p_+$.
%We will perform a sum over all realizations $k_1$ for the number of times that node $1$ is observed to be $1$.

\begin{eqnarray*}
&&\E_{\bar{q}_1^{\otimes k}}\left[\frac{\bar{q}_{1}^{\otimes k}}{p_1^{\otimes k}}\right] = \sum_{k_1=0}^{k} \frac{(\bar{q}_{1}^{\otimes k}(k_1))^2}{p_{1}^{\otimes k}(k_1)}
= \sum_{k_1 = 0}^{k} \frac{\left(\frac{1}{2}\binom{k}{k_1}(p_+)^{k_1}(1-p_+)^{k-k_1} + \frac{1}{2}\binom{k}{k-k_1}(p_+)^{k - k_1}(1-p_+)^{k_1}\right)^2}{\binom{k}{k_1}(1/2)^k}\\
&=& \frac{2^k}{4}\sum_{k_1 = 0}^{k} \binom{k}{k_1} \left( (p_+)^{2k_1}(1-p_+)^{2(k-k_1)} + (p_+)^{2(k-k_1)}(1-p_+)^{2k_1} + 2(p_+(1-p_+))^k \right)\\
&=& \frac{2^k}{2}(p_+(1-p_+))^k\sum_{k_1=0}^{k} \binom{k}{k_1} + 2\cdot\frac{2^k}{4} \sum_{k_1=0}^{k} \left(\binom{k}{k_1}(p_+^2)^{k_1}((1-p_+)^2)^{k-k_1}\right)
\end{eqnarray*}
%where the second equality uses the fact that $\bar{q}_{1}^{\otimes k}$ chooses the Ising model with parameter on node 1 being $\d$ and $-\d$ each with probability $1/2$.
%Using the identity $\sum_{k_1=0}^{k} \binom{k}{k_1} a^{k_1}b^{k-k_1} = (a+b)^k$ gives that
Using the Binomial theorem, the value for $p_+$, and hyperbolic trigenometric identities:
\begin{align*}
\E_{\bar{q}_{1}^{\otimes k}}\left[\frac{\bar{q}_{1}^{\otimes k}}{p_{1}^{\otimes k}}\right] &= \frac{4^k}{2}(p_+(1-p_+))^k + \frac{2^k}{2}\left( 2p_+^2 + 1-2p_+ \right)^k \\
&= \frac{1}{2}\left(\left(\sech^2(\d)\right)^k + \left(1 + \tanh^2(\d)\right)^k \right) 
\leq 1 + \binom{k}{2}\d^4 
= 1 + \binom{k}{2}\frac{9\ve^2}{4n^2}.
\end{align*}
%Substituting in the value for $p_+$ and applying hyperbolic trigenometric identities:%, the above expression simplifies to
%\[
%\frac{1}{2}\left(\left(\sech^2(\d)\right)^k + \left(1 + \tanh^2(\d)\right)^k \right) 
%\leq 1 + \binom{k}{2}\d^4 
%= 1 + \binom{k}{2}\frac{9\ve^2}{4n^2}.
%\]
%\begin{eqnarray*}
%&&\frac{1}{2}\left(\left(\sech^2(\d)\right)^k + \left(1 + \tanh^2(\d)\right)^k \right) \\
%&\leq& 1 + \binom{k}{2}\d^4 \\
%&=& 1 + \binom{k}{2}\frac{9\ve^2}{4n^2}
%\end{eqnarray*}

This gives us that
$$2\dtv^2(p^{\otimes k},\bar q^{\otimes k})\leq n \log \left( 1 + \binom{k}{2}\frac{9\ve^2}{4n^2}\right) \leq \frac{9k^2\ve^2}{4n}.$$
If $k < 0.9 \cdot \sqrt{n}/\ve$, then $\dtv^2(p^{\otimes k},\bar q^{\otimes k}) < 49/50$,
%and thus no algorithm can distinguish between the two with probability $\geq 99/100$.
completing the proof of Theorem \ref{thm:product-lb}.

\subsubsection{Proof of Theorem \ref{thm:linear-lb}}
\label{sec:linear-lb}
This lower bound similarly applies Le Cam's two-point method, as described in the previous section.
We proceed with a description of our construction.
Assume that $n$ is even.
As before, $\mathcal{P}$ consists of the single distribution $p \triangleq \U$, the Ising model on $n$ nodes with $0$ potentials on every node and edge.
Let $\mathcal{M}$ denote the set of all $(n-1)!!$ perfect matchings on the clique on $n$ nodes.
Each $M \in \mathcal{M}$ defines a corresponding $q_M \in \mathcal{Q}$, where the potential $\d = \sqrt{3\ve/n}$ is placed on each edge present in the graph.

The following proposition follows similarly to Proposition \ref{prop:product-far}.
\begin{proposition}
For each $q \in \mathcal{Q}$, $\dskl(q, \U) \geq \ve$.
\end{proposition}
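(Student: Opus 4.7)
The plan is to mimic the proof of Proposition~\ref{prop:product-far}, using the SKL formula from~\eqref{eq:dskl} together with the structural fact (Lemma~\ref{lem:trees-structural}) that on a forest-structured Ising model with no external field the edge marginals admit the closed form $\m_{uv} = \tanh(\theta_{uv})$.

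First, I would simplify the SKL. Since both $\U$ and $q_M$ have no external field, the node-potential contribution vanishes. Similarly, $\theta^{\U}_e = 0$ and $\m_e^{\U} = 0$ for every pair $(u,v)$, so only the edges of the matching $M$ contribute. Specifically,
\begin{align*}
\dskl(q_M, \U) &= \sum_{e=(u,v)} \bigl(\theta^{q_M}_{uv} - \theta^{\U}_{uv}\bigr)\bigl(\m^{q_M}_{uv} - \m^{\U}_{uv}\bigr)
= \sum_{e \in M} \delta \cdot \m^{q_M}_e.
\end{align*}

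Next, I would evaluate $\m^{q_M}_e$ for $e \in M$. The graph underlying $q_M$ is a perfect matching, which is a forest, so Lemma~\ref{lem:trees-structural} gives $\m^{q_M}_e = \tanh(\delta)$ for every $e \in M$. (Alternatively, since the matching consists of $n/2$ disconnected edges, $q_M$ is literally a product over these $n/2$ independent two-node Ising models, and a direct computation of the edge marginal on a two-node model yields the same thing.) Therefore
\begin{align*}
\dskl(q_M, \U) \;=\; \tfrac{n}{2}\,\delta\,\tanh(\delta).
\end{align*}

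Finally, I would lower-bound $\tanh(\delta)$. Since $\delta = \sqrt{3\ve/n} \le 1$ (for all sufficiently large $n$, or by choosing $\ve$ small as is standard in lower bounds), the elementary inequality $\tanh(x) \geq 2x/3$ for $x \in [0,1]$ (verifiable by a Taylor expansion as in Proposition~\ref{prop:product-far}) gives
\begin{align*}
\dskl(q_M, \U) \;\geq\; \tfrac{n}{2}\,\delta\,\cdot\,\tfrac{2\delta}{3} \;=\; \tfrac{n\delta^2}{3} \;=\; \ve.
\end{align*}
There is no real obstacle here: the one thing to be careful about is invoking the correct structural lemma, since $q_M$ is only a forest (not a tree)—but Lemma~\ref{lem:trees-structural} is stated for forests so this is immediate. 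The proof is essentially a line-by-line analogue of Proposition~\ref{prop:product-far}, with the factor of $n$ replaced by $n/2$ (the number of matching edges) and $\sqrt{\ve/n}$ rescaled accordingly to $\sqrt{3\ve/n}$.
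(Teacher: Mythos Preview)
Your proof is correct and follows exactly the approach the paper intends: the paper simply states that this proposition ``follows similarly to Proposition~\ref{prop:product-far}'' without spelling out the details, and your write-up is precisely that analogue, using the SKL formula~\eqref{eq:dskl}, the forest edge-marginal identity $\m_e = \tanh(\theta_e)$ from Lemma~\ref{lem:trees-structural}, and the same $\tanh(\delta) \geq 2\delta/3$ bound.
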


The goal is to upper bound $\dtv(p^{\otimes k}, \bar q^{\otimes k})$.
We apply Lemma \ref{lem:pinsker-jensen} to $2\dtv^2(p^{\otimes k}, \bar q^{\otimes k})$ and focus on the quantity inside the logarithm.
Let $X^{(i)} \in \{\pm 1\}^n$ represent the realization of the $i$th sample and $X_u \in \{\pm 1\}^k$ represent the realization of the $k$ samples on node $u$.
Let $H(.,.)$ represent the Hamming distance between two vectors, and for sets $S_1$ and $S_2$, let $S = S_1 \uplus S_2$ be the multiset addition operation (i.e., combine all the elements from $S_1$ and $S_2$, keeping duplicates).
Let $M_0$ be the perfect matching with edges $(2i-1,2i)$ for all $i \in [n/2]$.
\[
\E_{\bar q^{\otimes k}} \left[ \frac{\bar q^{\otimes k}}{p^{\otimes k}}\right] = \sum_{X = (X^{(1)}, \dots, X^{(k)})} \frac{(\bar q^{\otimes k}(X))^2}{p^{\otimes k}(X)} 
=2^{nk}  \sum_{X = (X^{(1)}, \dots, X^{(k)})} (\bar q^{\otimes k}(X))^2
\]
We can expand the inner probability as follows.
Given a randomly selected perfect matching, we can break the probability of a realization $X$ into a product over the edges.
By examining the PMF of the Ising model, if the two endpoints of a given edge agree, the probability is multiplied by a factor of $\left(\frac{e^{\d}}{2(e^{\d}+e^{-\d})}\right)$, and if they disagree, a factor of $\left(\frac{e^{-\d}}{2(e^{\d}+e^{-\d})}\right)$.
Since (given a matching) the samples are independent, we take the product of this over all $k$ samples.
We average this quantity using a uniformly random choice of perfect matching.
Mathematically:
\begin{align}
&2^{nk}  \sum_{X = (X^{(1)}, \dots, X^{(k)})} \left(\frac{1}{(n-1)!!} \sum_{M \in \mathcal{M}} \prod_{(u,v) \in M} \prod_{i=1}^k \left(\frac{e^{\d}}{2(e^{\d}+e^{-\d})}\right)^{\mathbbm{1}(X^{(i)}_u = X^{(i)}_v)}\left(\frac{e^{-\d}}{2(e^{\d}+e^{-\d})}\right)^{\mathbbm{1}(X^{(i)}_u \neq X^{(i)}_v)}\right)^2\nonumber\\
%&=2^{nk}  \sum_{X = (X^{(1)}, \dots, X^{(k)})} \left(\frac{1}{(n-1)!!} \sum_{M \in \mathcal{M}} \prod_{(u,v) \in M} \left(\frac{1}{2(e^\d + e^{-\d})}\right)^k e^{\d(k - H(X_u,X_v))} e^{-\d H(X_u, X_v)}\right)^2\nonumber\\
%j&= \left(\frac{e^\d}{e^\d + e^{-\d}}\right)^{nk}\sum_{X = (X^{(1)}, \dots, X^{(k)})} \left(\frac{1}{(n-1)!!} \sum_{M \in \mathcal{M}} \prod_{(u,v) \in M}  \exp(-2\d H(X_u,X_v))\right)^2\nonumber\\
%&= \left(\frac{e^\d}{e^\d + e^{-\d}}\right)^{nk} \frac{1}{(n-1)!!^2}\sum_{X = (X^{(1)}, \dots, X^{(k)})} \left(\sum_{M \in \mathcal{M}} \prod_{(u,v) \in M}  \exp(-2\d H(X_u,X_v))\right)^2\nonumber\\
&= \left(\frac{e^\d}{e^\d + e^{-\d}}\right)^{nk} \frac{1}{(n-1)!!^2}\sum_{X = (X^{(1)}, \dots, X^{(k)})} \sum_{M_1, M_2 \in \mathcal{M}} \prod_{(u,v) \in M_1 \uplus M_2}  \exp(-2\d H(X_u,X_v))\nonumber
\end{align}

By symmetry, we can fix $M_1$ to be $M_0$ and multiply by a factor of $(n-1)!!$:

%At this point, we note that if we fix the matching $M_1$, summing over all perfect matchings $M_2$ gives the same value irrespective of the value of $M_1$.
%Therefore, we multiply by a factor of $(n-1)!!$ and fix the choice of $M_1$ to be $M_0$.

\begin{align}
&\left(\frac{e^\d}{e^\d + e^{-\d}}\right)^{nk} \frac{1}{(n-1)!!}\sum_{M \in \mathcal{M}} \sum_{X = (X^{(1)}, \dots, X^{(k)})} \prod_{(u,v) \in M_0 \uplus M}  \exp(-2\d H(X_u,X_v)) \nonumber \\
&= \left(\frac{e^\d}{e^\d + e^{-\d}}\right)^{nk} \frac{1}{(n-1)!!}\sum_{M \in \mathcal{M}} \left(\sum_{X^{(1)}} \prod_{(u,v) \in M_0 \uplus M}  \exp\left(-2\d H\left(X^{(1)}_u,X^{(1)}_v\right)\right)\right)^k\nonumber
\end{align}

We observe that multiset union of two perfect matchings will form a collection of even length cycles (if they contain the same edge, this forms a 2-cycle), and this can be rewritten as follows.

\begin{align}
&\left(\frac{e^\d}{e^\d + e^{-\d}}\right)^{nk}\frac{1}{(n-1)!!}\sum_{M \in \mathcal{M}} \left(\sum_{X^{(1)}} \prod_{\substack{\text{cycles} C \\  \in M_0 \uplus M}} \prod_{(u,v) \in C}  \exp\left(-2\d H\left(X^{(1)}_u,X^{(1)}_v\right)\right)\right)^k \nonumber \\
&= \left(\frac{e^\d}{e^\d + e^{-\d}}\right)^{nk} \frac{1}{(n-1)!!}\sum_{M \in \mathcal{M}} \left(\prod_{\substack{\text{cycles } C \\  \in M_0 \uplus M}} \sum_{X^{(1)}_C} \prod_{(u,v) \in C}  \exp\left(-2\d H\left(X^{(1)}_u,X^{(1)}_v\right)\right)\right)^k \label{eq:prev-counting-arg}
\end{align}

We now simplify this using a counting argument over the possible realizations of $X^{(1)}$ when restricted to edges in cycle $C$.
Start by noting that
$$\sum_{X^{(1)}_C} \prod_{(u,v) \in C} (e^{2\d})^{ -H\left(X^{(1)}_u,X^{(1)}_v\right)} = 2 \sum_{i=0}^{n/2} \binom{|C|}{2i}(e^{2\d})^{-2i}.$$
%$$\sum_{X^{(1)}_C} \prod_{(u,v) \in C} (e^{2\d})^{ -H\left(X^{(1)}_u,X^{(1)}_v\right)} = 2 \sum_{i=0}^{n/2} \left(\binom{|C|-1}{2i-1} + \binom{|C|-1}{2i}\right)(e^{2\d})^{-2i}.$$
This follows by counting the number of possible ways to achieve a particular Hamming distance over the cycle, observing that only even values are allowed by a parity argument.
%The $|C|-1$ (rather than $|C|$) and the grouping of consecutive binomial coefficients arises as we lose one ``degree of freedom'' due to examining a cycle, which fixes the Hamming distance to be even.
%Now, we apply Pascal's rule and can see 
%$$2 \sum_{i=0}^{n/2} \left(\binom{|C|-1}{2i-1} + \binom{|C|-1}{2i}\right)(e^{2\d})^{-2i} = 2 \sum_{i=0}^{n/2} \binom{|C|}{2i} (e^{2\d})^{-2i}.$$
This is twice the sum over the even terms in the binomial expansion of $(1 + e^{-2\d})^{|C|}$.
The odd terms may be eliminated by adding $(1 - e^{-2\d})^{|C|}$, and thus (\ref{eq:prev-counting-arg}) is equal to the following:

\begin{align}
&\left(\frac{e^\d}{e^\d + e^{-\d}}\right)^{nk} \frac{1}{(n-1)!!}\sum_{M \in \mathcal{M}} \left(\prod_{\substack{\text{cycles } C \\  \in M_0 \uplus M}} (1 + e^{-2\d})^{|C|} + (1 - e^{-2\d})^{|C|}\right)^k\nonumber \\
&= \left(\frac{e^\d}{e^\d + e^{-\d}}\right)^{nk} \frac{1}{(n-1)!!}\sum_{M \in \mathcal{M}} \left(\prod_{\substack{\text{cycles } C \\  \in M_0 \uplus M}} \left(\frac{e^{\d}+e^{-\d}}{e^{\d}}\right)^{|C|} \left(1 + \left(\frac{e^{\d} - e^{-\d}}{e^{\d} + e^{-\d}}\right)^{|C|}\right)\right)^k \nonumber \\
&= \E\left[\left(\prod_{\substack{\text{cycles } C \\  \in M_0 \uplus M}} \left(1 + \tanh^{|C|}(\d)\right)\right)^k\right] \leq
\E\left[\left(\prod_{\substack{\text{cycles } C \\  \in M_0 \uplus M}} \exp\left(\d^{|C|}\right)\right)^k\right]. 
 \label{eq:overlap-expectation}
\end{align}
where the expectation is from choosing a uniformly random perfect matching $M \in \mathcal{M}$.
At this point, it remains only to bound Equation (\ref{eq:overlap-expectation}).
For our purposes, it turns out that the $2$-cycles will be the dominating factor, and we use the following crude upper bound.
Let $\z$ be a random variable representing the number of $2$-cycles in $M_0 \uplus M$, i.e., the number of edges shared by both perfect matchings.
$$
\E\left[\left(\prod_{\substack{\text{cycles } C \\  \in M_0 \uplus M}} \exp\left(\d^{|C|}\right)\right)^k\right]
= \E\left[\left(\prod_{\substack{\text{cycles } C \\  \in M_0 \uplus M \\ |C| \geq 4}} \exp\left(\d^{|C|}\right)\right)^k\exp\left(\d^2\z k\right)\right  ]
\leq \exp\left(\d^4 \cdot n/4 \cdot k \right) \E\left[\exp\left(\d^2 \z k\right)\right], $$
where in the last inequality, we used the facts that $\d^{|C|}$ is maximized for $|C| \geq 4$ when $|C| = 4$, and that there are at most $n/4$ cycles of length at least $4$.

We examine the distribution of $\z$.
Note that $\E[\z] = \frac{n}{2} \cdot \frac{1}{n-1} = \frac{n}{2(n-1)}.$
More generally, for any integer $z \leq n/2$,
$\E[\z - (z-1)| \z \geq z-1] = \frac{n-2z + 2}{2} \cdot \frac{1}{n-2z+1} = \frac{n-2z+2}{2(n-2z+1)}.$
By Markov's inequality, $\Pr[\z \geq z | \z \geq z - 1 ] = \Pr[\z - (z-1) \geq 1 | \z \geq z - 1] \leq \frac{n-2z + 2}{2(n-2z + 1)}.$
Therefore,$\Pr[\z \geq z] = \prod_{i=1}^{z} \Pr[\z \geq i| \z \geq i - 1] \leq \prod_{i=1}^z \frac{n-2i + 2}{2(n-2i+1)}.$
In particular, note that for all $z < n/2$, 
$\Pr[\z \geq z] \leq (2/3)^z.$

We return to considering the expectation above:
\begin{align*}
\E\left[\exp\left(\d^2 \z k\right)\right] 
&= \sum_{z = 0}^{n/2} \Pr[\z = z] \exp\left(\d^2 z k\right) 
\leq \sum_{z = 0}^{n/2} \Pr[\z \geq z] \exp\left(\d^2 z k \right) 
\leq \frac{3}{2} \sum_{z = 0}^{n/2} (2/3)^z \exp\left(\d^2 z k \right) \\
&= \frac{3}{2} \sum_{z = 0}^{n/2} \exp\left((\d^2  k - \log (3/2))z\right) 
\leq \frac{3}{2} \cdot \frac{1}{1 - \exp\left(\d^2 k - \log(3/2)\right)},
\end{align*}
where the last inequality requires that $\exp\left(\d^2 k - \log(3/2)\right) < 1$.
This is true as long as $k < \log(3/2)/\d^2 = \frac{\log(3/2)}{3} \cdot \frac{n}{\ve}$.

Combining Lemma \ref{lem:pinsker-jensen} with the above derivation, we have that
\begin{align*}
&2\dtv^2(p^{\otimes k}, \bar q^{\otimes k}) 
\leq \log \left(\exp(\d^4 nk/4) \cdot \frac{3}{2(1 - \exp\left(\d^2 k - \log (3/2)\right))}\right) \\
&= \d^4 nk/4 + \log \left(\frac{3}{2(1 - \exp\left(\d^2 k - \log (3/2)\right))}\right) 
= \frac{9\ve^2}{4n}k + \log \left(\frac{3}{2(1 - \exp\left(3k\ve/n - \log (3/2)\right))}\right).
\end{align*}
If $k < \frac{1}{25}\cdot \frac{n}{\ve}$, then $\dtv(p^{\otimes k}, \bar q^{\otimes k}) < 49/50$,
%and thus no algorithm can distinguish between the two cases with probability $\geq 99/100$.
completing the proof of Theorem \ref{thm:linear-lb}.

\subsubsection{Proof of Theorem \ref{thm:bh-lb}}
\label{sec:bh-lb}

\paragraph{Independence Testing Dependence on $\beta$:}

Consider the following two models, which share some parameter $\tau > 0$:
\begin{enumerate}
\item An Ising model $p$ on two nodes $u$ and $v$, where $\theta_u^p = \theta_v^p = \tau$ and $\theta_{uv} = 0$.
\item An Ising model $q$ on two nodes $u$ and $v$, where $\theta_u^q = \theta_v^q = \tau$ and $\theta_{uv} = \b$.
\end{enumerate}
We note that $\E[X_u^pX_v^p] = \frac{\exp{(2\t + \b)} + \exp{(-2\t + \b)} - \exp(-\b)}{\exp{(2\t + \b)} + \exp{(-2\t + \b)} + \exp(-\b)}$ and $\E[X_u^qX_v^q] = \tanh^2(\t)$.
By (\ref{eq:dskl}), these two models have $\dskl(p,q) = \b\left(\E[X_u^pX_v^p] - \E[X_u^qX_v^q]\right)$.
For any for any fixed $\beta$ sufficiently large and $\ve > 0$ sufficiently small, $\tau$ can be chosen to make $\E[X_u^pX_v^p] - \E[X_u^qX_v^q] = \frac{\ve}{\b}$.
This is because at $\tau = 0$, this is equal to $\tanh(\b)$ and for $\tau \rightarrow \infty$, this approaches $0$, so by continuity, there must be a $\tau$ which causes the expression to equal this value.
Therefore, the SKL distance between these two models is $\ve$.
On the other hand, it is not hard to see that $\dtv(p,q) = \Theta\left(\E[X_u^pX_v^p] - \E[X_u^qX_v^q]\right) = \Theta(\ve/\b)$, and therefore, to distinguish these models, we require $\Omega(\b/\ve)$ samples.

\paragraph{Identity Testing Dependence on $\beta, h$:}

Consider the following two models, which share some parameter $\tau >0$:
\begin{enumerate}
\item An Ising model $p$ on two nodes $u$ and $v$, where $\theta_{uv}^p = \beta$.
\item An Ising model $q$ on two nodes $u$ and $v$, where $\theta_{uv}^p = \beta - \tau$.
\end{enumerate}
We note that $\E[X_u^pX_v^p] = \tanh(\beta) $ and $\E[X_u^qX_v^q] = \tanh(\beta - \tau)$.
By (\ref{eq:dskl}), these two models have $\dskl(p,q) = \t\left(\E[X_u^pX_v^p] - \E[X_u^qX_v^q]\right)$.
Observe that at $\t = \b$, $\dskl(p,q) = \b \tanh(\b)$, and at $\t = \b/2$, $\dskl(p, q) = \frac{\b}{2} (\tanh(\b) - \tanh(\b/2)) = \frac{\b}{2}(\tanh(\b/2)\sech(\b)) \leq  \b \exp(-\b) \leq \ve$, where the last inequality is based on our condition that $\b$ is sufficiently large.
By continuity, there exists some $\t \in [\b/2, \b]$ such that $\dskl(p,q) = \ve$. 
On the other hand, it is not hard to see that $\dtv(p,q) = \Theta\left(\E[X_u^pX_v^p] - \E[X_u^qX_v^q]\right) = \Theta(\ve/\b)$, and therefore, to distinguish these models, we require $\Omega(\b/\ve)$ samples.

The lower bound construction and analysis for the $h$ lower bound follow almost identically, with the model $q$ consisting of a single node with parameter $h$.

\section*{Acknowledgements}
The authors would like to thank Yuval Peres for directing them towards the reference~\cite{LevinPW09} and the tools used to prove the variance bounds in this paper, and Pedro Felzenszwalb for discussing applications of Ising models in computer vision.
They would also like to thank the anonymous reviewers, who provided numerous suggestions on how to improve the presentation of the paper, and suggested a simplification for ferromagnetic models based on Griffiths' inequality.
\bibliographystyle{alpha}
\bibliography{biblio}
\appendix
\section{Weakly Learning Rademacher Random Variables}
\label{sec:weak-learning-bern}
In this section, we examine the concept of ``weakly learning'' Rademacher random variables.
This problem we study is classical, but our regime of study and goals are slightly different.
Suppose we have $k$ samples from a random variable, promised to either be $Rademacher(1/2 + \l)$ or $Rademacher(1/2 - \l)$, for some $0 < \l \leq 1/2$.
How many samples do we need to tell which case we are in?
If we wish to be correct with probability (say) $\geq 2/3$, it is folklore that $k = \Theta(1/\l^2)$ samples are both necessary and sufficient.
In our weak learning setting, we focus on the regime where we are sample limited (say, when $\l$ is very small), and we are unable to gain a constant benefit over randomly guessing.
More precisely, we have a budget of $k$ samples from some $Rademacher(p)$ random variable, and we want to guess whether $p > 1/2$ or $p < 1/2$.
The ``margin'' $\l = |p - 1/2|$ may not be precisely known, but we still wish to obtain the maximum possible advantage over randomly guessing, which gives us probability of success equal to $1/2$. 
We show that with any $k \leq 1/4\l^2$ samples, we can obtain success probability $1/2 + \Omega(\l \sqrt{k})$.
This smoothly interpolates within the ``low sample'' regime, up to the point where $k = \Theta(1/\l^2)$ and folklore results also guarantee a constant probability of success.
We note that in this low sample regime, standard concentration bounds like Chebyshev and Chernoff give trivial guarantees, and our techniques require a more careful examination of the Binomial PMF.

We go on to examine the same problem under alternate centerings -- where we are trying to determine whether $p > \m$ or $p < \m$, generalizing the previous case where $\m = 1/2$. 
We provide a simple ``recentering'' based reduction to the previous case, showing that the same upper bound holds for all values of $\m$.
We note that our reduction holds even when the centering $\m$ is not explicitly known, and we only have limited sample access to $Rademacher(\m)$.

We start by proving the following lemma, where we wish to determine the direction of bias with respect to a zero-mean Rademacher random variable.
\begin{lemma}
\label{lem:weakLearningBernoulli}
Let $X_1, \dots, X_k$ be i.i.d.\ random variables, distributed as $Rademacher(p)$ for any $p \in [0,1]$.
There exists an algorithm which takes $X_1, \dots, X_k$ as input and outputs a value $b \in \{\pm 1\}$, with the following guarantees:
there exists constants $c_1, c_2 > 0$ such that for any $p \neq \frac12$,
$$\Pr\left(b = \sign\left(\l\right)\right) \geq 
\begin{cases}
\frac12 + c_1 |\l| \sqrt{k} &\text{if } k \leq \frac{1}{4\l^2}\\
\frac12 + c_2 &\text{otherwise,}
\end{cases}
$$
where $\l = p - \frac12$.
If $p = \frac12$, then $b \sim Rademacher\left(\frac12\right)$.
\end{lemma}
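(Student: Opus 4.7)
}
The candidate algorithm is simple majority vote: output $b=\sign\!\bigl(\sum_i X_i\bigr)$, breaking ties uniformly at random. When $p=1/2$, the distribution of $\sum_i X_i$ is symmetric around $0$, so $b\sim\mathrm{Rademacher}(1/2)$ as required. For $p\ne 1/2$ assume WLOG $\lambda := p-1/2 > 0$, let $T := (k+\sum_i X_i)/2 \sim \mathrm{Binomial}(k,p)$, and define
\[
g(p) \;:=\; \Pr(b=+1) \;=\; \Pr(T>k/2) + \tfrac12\Pr(T=k/2).
\]
By symmetry $g(1/2)=1/2$, so the task reduces to a lower bound on $g(1/2+\lambda)-1/2$.

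The high-sample regime $k > 1/(4\lambda^2)$, i.e.\ $\lambda\sqrt k > 1/2$, is handled by a direct Hoeffding bound: $\Pr(\sum_i X_i \le 0) \le \exp(-k\lambda^2/2) \le e^{-1/8}$, so $g(p) \ge 1-e^{-1/8} \ge 1/2 + c_2$ for an absolute constant $c_2>0$.

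The low-sample regime $\lambda\sqrt k \le 1/2$ is the substantive case. The plan is to establish the uniform derivative lower bound $g'(p) \ge c\sqrt k$ on $p\in[1/2,\,1/2+1/(2\sqrt k)]$ and then integrate. I will compute $g'(p)$ via the classical telescoping identity
\[
\tfrac{d}{dp}\Pr(T\ge m)\;=\;k\binom{k-1}{m-1}p^{m-1}(1-p)^{k-m},
\]
which (handling the $\tfrac12\Pr(T=k/2)$ tie term separately when $k$ is even) reduces $g'(p)$ to one binomial term evaluated at the median, plus a small correction that vanishes at $p=1/2$. Stirling's approximation gives $\binom{k-1}{\lfloor(k-1)/2\rfloor}/2^{k-1}=\Theta(1/\sqrt k)$, hence $g'(1/2)\ge c_0\sqrt k$ for some absolute $c_0>0$. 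For $p=1/2+\delta$ with $|\delta|\le 1/(2\sqrt k)$ the multiplicative change in the factor $(1/2+\delta)^{m-1}(1/2-\delta)^{k-m}/(1/2)^{k-1}$ is $(1\pm O(1/\sqrt k))(1-4\delta^2)^{\Theta(k)}$, which stays bounded below by an absolute constant since $4\delta^2\le 1/k$; an analogous estimate controls the tie correction. Integrating the resulting bound $g'(p)\ge c\sqrt k$ from $1/2$ to $1/2+\lambda$ yields $g(1/2+\lambda)-1/2\ge c\lambda\sqrt k$, which is the conclusion with $c_1=c$.

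The main obstacle is precisely the low-sample regime. Standard black-box tools are insufficient here: Chebyshev gives nothing, and Berry--Esseen's error is $O(1/\sqrt k)$, which is the same order as the signal $\lambda\sqrt k$ throughout this range and therefore cannot separate the two hypotheses. The argument above circumvents this by working directly with the Binomial PMF, exploiting the telescoping derivative identity and the local-limit-theorem fact that the central binomial coefficient is of order $2^k/\sqrt k$ and varies by only a constant factor when $p$ moves by $O(1/\sqrt k)$. All remaining estimates (Hoeffding in the easy regime, Stirling in the hard regime, and a single definite integral) are routine, so the bulk of the proof is this uniform lower bound on $g'$.
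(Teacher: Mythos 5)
Your proposal is correct and follows essentially the same route as the paper: both reduce the low-sample regime to a lower bound on an integral of the central binomial PMF over $p\in[1/2,1/2+\lambda]$ (the paper phrases this as $\dtv$ between two binomials and cites an integral representation from \cite{AdellJ06}, while you derive the equivalent telescoping derivative identity directly), and both conclude via Stirling together with the observation that the central term changes only by a constant factor when $p$ moves by $O(1/\sqrt{k})$. The only cosmetic differences are that you handle the high-sample regime by Hoeffding and the tie-correction term explicitly, where the paper cites folklore and restricts to $S\neq 0$.
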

\begin{proof}
The algorithm is as follows: let $S = \sum_{i=1}^k X_i$.
If $S \neq 0$, then output $b = \sign(S)$, otherwise output $b \sim Rademacher\left(\frac12\right)$.

The $p = 1/2$ case is trivial, as the sum $S$ is symmetric about $0$.
We consider the case where $\l > 0$ (the negative case follows by symmetry) and when $k$ is even (odd $k$ can be handled similarly).
As the case where $k >\frac{1}{4\l^2}$ follows by a Chernoff bound, we focus on the former case, where $\l \leq \frac{1}{2\sqrt{k}}$.
By rescaling and shifting the variables, this is equivalent to lower bounding $\Pr\left(Binomial\left(k,\frac12 + \l\right) \geq \frac{k}{2}\right)$.
By a symmetry argument, this is equal to 
$$\frac12 + \dtv\left(Binomial\left(k, \frac12 - \l\right), Binomial\left(k, \frac12 + \l\right)\right).$$ 
It remains to show this total variation distance is $\Omega(\l\sqrt{k})$.
\begin{align}
&&\dtv\left(Binomial\left(k, \frac12 - \l\right), Binomial\left(k, \frac12 + \l\right)\right) \nonumber \\
&\geq&\dtv\left(Binomial\left(k, \frac12 \right), Binomial\left(k, \frac12 + \l\right)\right) \nonumber \\
&\geq&k \min_{\ell \in \{\lceil k/2 \rceil, \dots, \lceil k/2 + k\l\rceil\}}\int_{1/2}^{1/2 + \l}  \Pr\left(Binomial\left(k - 1, u\right) = l - 1\right) du \label{eq:wklrn1} \\
&\geq&\l k \cdot \Pr\left(Binomial\left(k - 1, 1/2 + \l\right) = k/2\right) \nonumber \\
&=&\l k \cdot \binom{k-1}{k/2} \left(\frac{1}{2} + \l\right)^{k/2}\left(\frac12 - \l\right)^{k/2 -1} \nonumber \\
&\geq& \Omega(\l k) \cdot \sqrt{\frac{1}{2k}} \left(1 + \frac{1}{\sqrt{k}}\right)^{k/2}\left(1 - \frac{1}{\sqrt{k}}\right)^{k/2} \label{eq:wklrn2}\\
&=& \Omega(\l \sqrt{k}) \cdot \left(1 - \frac{1}{k}\right)^{k/2} \nonumber \\
&\geq& \Omega(\l \sqrt{k}) \cdot \exp\left(-1/2\right)\left(1 - \frac1{k}\right)^{1/2} \label{eq:wklrn3}\\
&=& \Omega(\l \sqrt{k}), \nonumber
\end{align}
as desired.

(\ref{eq:wklrn1}) applies Proposition 2.3 of \cite{AdellJ06}.
(\ref{eq:wklrn2}) is by an application of Stirling's approximation and since $\l \leq \frac1{2\sqrt{k}}$.
(\ref{eq:wklrn3}) is by the inequality $\left(1 - \frac{c}{k}\right)^k \geq \left(1 - \frac{c}{k}\right)^c \exp(-c)$.
\iffalse
\begin{align}
\Pr\left(Binomial\left(k,\frac12 + \l\right) \geq \frac{k}{2}\right) - \frac12&\geq k\l \cdot \Pr\left(Binomial\left(k,\frac12 + \l\right) = \frac{k}{2}\right) \label{eq:wklrn1} \\
&\geq \l k \cdot \Pr\left(Binomial\left(k,\frac12 + \frac{1}{2\sqrt{k}}\right) = \frac{k}{2}\right) \label{eq:wklrn2} \\
&= \l k \binom{k}{k/2} \left(\frac12 + \frac{1}{2\sqrt{k}}\right)^{k/2}\left(\frac12 - \frac{1}{2\sqrt{k}}\right)^{k/2} \label{eq:wklrn3} \\
&\geq \l k \binom{k}{k/2} \left(\frac12\right)^k\left(1 + \frac{1}{\sqrt{k}}\right)^{k/2}\left(1 - \frac{1}{\sqrt{k}}\right)^{k/2} \label{eq:wklrn4} \\
&\geq \l k \sqrt{\frac{1}{2k}} \left(1 + \frac{1}{\sqrt{k}}\right)^{k/2}\left(1 - \frac{1}{\sqrt{k}}\right)^{k/2} \label{eq:wklrn5}\\
&= \l k \sqrt{\frac{1}{2k}} \left(1 - \frac{1}{k}\right)^{k/2} \label{eq:wklrn6} \\
&\geq \l k \sqrt{\frac{1}{2k}} \exp\left(-1/2\right)\left(1 - \frac1{k}\right)^{1/2} \label{eq:wklrn7}\\
&= \Omega(\l \sqrt{k}) \label{eq:wklrn8}
\end{align}

(\ref{eq:wklrn1}) is due to the PDF of the binomial being monotone increasing from $\frac{k}{2}$ to $\frac{k}{2} + k\l$.
(\ref{eq:wklrn2}) is since $\l \leq \frac{1}{2\sqrt{k}}$.
(\ref{eq:wklrn5}) is by an application of Stirling's approximation.
(\ref{eq:wklrn7}) is by the inequality $\left(1 - \frac{c}{k}\right)^k \geq \left(1 - \frac{c}{k}\right)^c \exp(-c)$.
(\ref{eq:wklrn8}) holds for any $k > 1$.
\fi
\end{proof}

We now develop a corollary allowing us to instead consider comparisons with respect to different centerings.
\begin{corollary}
\label{cor:weakLearningBernoulliGeneral}
Let $X_1, \dots, X_k$ be i.i.d.\ random variables, distributed as $Rademacher(p)$ for any $p \in [0,1]$.
There exists an algorithm which takes $X_1, \dots, X_k$ and $q \in [0,1]$ as input and outputs a value $b \in \{\pm 1\}$, with the following guarantees:
there exists constants $c_1, c_2 > 0$ such that for any $p \neq q$,
$$\Pr\left(b = \sign\left(\l\right)\right) \geq 
\begin{cases}
\frac12 + c_1 |\l| \sqrt{k} &\text{if } k \leq \frac{1}{4\l^2}\\
\frac12 + c_2 &\text{otherwise,}
\end{cases}
$$
where $\l = \frac{p-q}{2}$.
If $p = q$, then $b \sim Rademacher\left(\frac12\right)$.

This algorithm works even if only given $k$ i.i.d.\ samples $Y_1, \dots, Y_k \sim Rademacher(q)$, rather than the value of $q$.
\end{corollary}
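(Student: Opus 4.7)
\begin{proofsketch}
The plan is to reduce this corollary directly to Lemma~\ref{lem:weakLearningBernoulli} by constructing, from the input samples, a single stream of i.i.d.\ $Rademacher(1/2 + \l)$ random variables, and then running the algorithm of Lemma~\ref{lem:weakLearningBernoulli} on them. The key observation is that a $Rademacher(p)$ and an independent $Rademacher(q)$ can be combined pairwise into a $Rademacher(1/2+(p-q)/2)$ via a simple ``disagreement'' construction.

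First, I would handle the case where we are given the value $q$ explicitly: the algorithm itself draws $Y_1,\dots,Y_k \sim Rademacher(q)$ internally. In the case where only samples $Y_1,\dots,Y_k \sim Rademacher(q)$ are available, we use those directly. In either case we have i.i.d.\ pairs $(X_i,Y_i)$ with $X_i \sim Rademacher(p)$ independent of $Y_i \sim Rademacher(q)$. Define
\[
Z_i \;=\; \begin{cases} X_i, & \text{if } X_i \neq Y_i, \\ W_i, & \text{if } X_i = Y_i, \end{cases}
\]
where $W_1,\dots,W_k$ are independent $Rademacher(1/2)$ coins that the algorithm generates internally. A short computation (conditioning on whether $X_i = Y_i$) gives
\[
\Pr[Z_i = 1] \;=\; p(1-q) + \tfrac12\bigl(pq + (1-p)(1-q)\bigr) \;=\; \tfrac{1}{2} + \tfrac{p-q}{2} \;=\; \tfrac{1}{2} + \l,
\]
so the $Z_i$ are i.i.d.\ $Rademacher(1/2 + \l)$, with the sign of $\l$ matching $\sign(p-q)$.

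Now I would apply Lemma~\ref{lem:weakLearningBernoulli} to $Z_1,\dots,Z_k$, which outputs a bit $b$ equal to $\sign(\l)$ with probability at least $1/2 + c_1|\l|\sqrt{k}$ in the regime $k \leq 1/(4\l^2)$, and at least $1/2 + c_2$ otherwise. The bound and the degenerate $p = q$ behavior (in which case $Z_i \sim Rademacher(1/2)$ and the algorithm outputs an unbiased bit) then transfer immediately. There is essentially no obstacle here beyond the probability computation above; the only mild subtlety is verifying that independence of the $(X_i,Y_i)$ pairs is preserved when $q$ is unknown and we consume external $Y_i$'s, which is immediate since the $X_i$'s and $Y_i$'s are drawn from independent sources.
\end{proofsketch}
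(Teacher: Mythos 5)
Your proposal is correct and matches the paper's proof essentially verbatim: both construct $Z_i \sim Rademacher(\tfrac12 + \tfrac{p-q}{2})$ by outputting $X_i$ (equivalently $\tfrac12(X_i - Y_i)$) on disagreement and a fair coin on agreement, where the $Y_i$ are either supplied as samples or generated internally from the known $q$, and then invoke Lemma~\ref{lem:weakLearningBernoulli}. The probability computation checks out, so nothing further is needed.
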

\begin{proof}
Let $X \sim Rademacher(p)$ and $Y \sim Rademacher(q)$.
Consider the random variable $Z$ defined as follows.
First, sample $X$ and $Y$.
If $X \neq Y$, output $\frac{1}{2}\left(X - Y\right)$.
Otherwise, output a random variable sampled as $Rademacher\left(\frac12\right)$.
One can see that $Z \sim Rademacher\left(\frac12 + \frac{p-q}{2}\right)$.

Our algorithm can generate $k$ i.i.d.\ samples $Z_i \sim Rademacher\left(\frac12 + \frac{p-q}{2}\right)$ in this method using $X_i$'s and $Y_i$'s, where $Y_i$'s are either provided as input to the algorithm or generated according to $Rademacher(q)$.
At this point, we provide the $Z_i$'s as input to the algorithm of Lemma~\ref{lem:weakLearningBernoulli}.
By examining the guarantees of Lemma~\ref{lem:weakLearningBernoulli}, this implies the desired result.
\end{proof}

\section{Structural Result for Forest Ising Models}
\label{sec:treemarginalAppendix}

\begin{prevproof}{Lemma}{lem:trees-structural}
Consider any edge $e=(u,v) \in E$. Consider the tree $(T,E_T)$ which contains $e$. Let $n_T$ be the number of nodes in the tree. We partition the vertex set $T$ into $U$ and $V$ as follows. Remove edge $e$ from the graph and let $U$ denote all the vertices which lie in the connected component of node $u$ except $u$ itself. Similarly, let $V$ denote all the vertices which lie in the connected component of node $v$ except node $v$ itself. Hence, $T = U \cup V \cup \{u\}\cup \{v\}$. Let $X_U$ be the vector random variable which denotes the assignment of values in $\{ \pm 1\}^{|U|}$ to the nodes in $U$. $X_V$ is defined similarly. We will also denote a specific value assignment to a set of nodes $S$ by $x_S$ and $-x_S$ denotes the assignment which corresponds to multiplying each coordinate of $x_S$ by $-1$. Now we state the following claim which follows from the tree structure of the Ising model.
\begin{claim}
$\Pr\left[ X_U=x_U,X_u = 1, X_v = 1, X_V=x_V \right] = \exp(2\th_{uv})\Pr\left[ X_U=x_U,X_u = 1, X_v = -1,X_V=-x_V \right]$.
\end{claim}
In particular the above claim implies the following corollary which is obtained by marginalization of the probability to nodes $u$ and $v$.
\begin{corollary}
\label{cor:trees}
If $X$ is an Ising model on a forest graph $G=(V,E)$ with no external field, then for any edge $e=(u,v) \in E$, $\Pr\left[X_u = 1, X_v = 1 \right] = \exp(2\th_{uv})\Pr\left[ X_u = 1, X_v = -1 \right]$.
\end{corollary}

Now,
\begin{align}
& \E\left[X_uX_v \right] = \Pr\left[X_uX_v = 1\right] - \Pr\left[X_uX_v = -1 \right]\\
& ~~= 2Pr\left[X_u=1,X_v = 1\right] - 2\Pr\left[X_u=1, X_v = -1 \right] \label{eq:sym} \\
& ~~= \frac{2Pr\left[X_u=1,X_v = 1\right] - 2\Pr\left[X_u=1, X_v = -1 \right]}{2Pr\left[X_u=1,X_v = 1\right] + 2\Pr\left[X_u=1, X_v = -1 \right]} \label{eq:divbyone}\\
& ~~= \frac{Pr\left[X_u=1,X_v = 1\right] - \Pr\left[X_u=1, X_v = -1 \right]}{Pr\left[X_u=1,X_v = 1\right] + \Pr\left[X_u=1, X_v = -1 \right]} \\
& ~~= \left(\frac{\exp(2\th_{uv})-1}{\exp(2\th_{uv})+1}\right)\frac{\Pr\left[X_u=1, X_v = -1 \right]}{\Pr\left[X_u=1, X_v = -1 \right]} \label{eq:cor}\\
& ~~= \tanh(\th_{uv})
\end{align}
where \eqref{eq:sym} follows because $\Pr\left[X_u=1,X_v=1 \right] = \Pr\left[X_u=-1,X_v=-1 \right]$ and $\Pr\left[X_u=-1,X_v=1 \right] = \Pr\left[X_u=1,X_v=-1 \right]$ by symmetry. Line~\eqref{eq:divbyone} divides the expression by the total probability which is $1$ and \eqref{eq:cor} follows from Corollary~\ref{cor:trees}.

%Interesting property for trees from Bresler and Karzand paper on structure learning of trees: On trees with no external field: E[X_uX_v] = \prod_{e \in path from u to v} \mu_e
\end{prevproof}

\end{document}